\definecolor{purple}{rgb}{.5,0,1}
\definecolor{orange}{rgb}{1,.5,0}
\definecolor{pink}{rgb}{1,0,.5}
\numberwithin{equation}{section}
\newtheorem{theorem}{Theorem}[section]
\newtheorem{lemma}[theorem]{Lemma}
\newtheorem{corollary}[theorem]{Corollary}
\newtheorem{definition}[theorem]{Definition}
\newtheorem{remark}[theorem]{Remark}
\newtheorem*{theorem*}{Theorem}
\newtheorem*{lemma*}{Lemma}
\newtheorem*{remark*}{Remark}
\DeclareMathOperator{\supp}{supp}
\DeclareMathOperator{\tr}{tr}
\DeclareMathOperator{\Ran}{Ran}
\DeclareMathOperator{\Rank}{Rank}
\DeclareMathOperator{\dist}{dist}
\DeclareMathOperator{\diam}{diam}
\newcommand\R{\mathbb R}
\newcommand\N{\mathbb N}
\newcommand\C{\mathbb C}
\newcommand\Z{\mathbb Z}
\newcommand\K{\mathcal{K}}
\newcommand\cW{\mathcal{W}}
\newcommand\cF{\mathcal{F}}
\newcommand\e{\mathrm{e}}
\renewcommand\P{\mathbb P}
\newcommand\E{\mathbb E}
\newcommand\cE{\mathcal{E}}
\newcommand\cN{\mathcal{N}}
\newcommand\cG{\mathcal{G}}
\newcommand\cJ{\mathcal{J}}
\newcommand\cX{\mathcal{X}}
\newcommand\cH{\mathcal{H}}
\newcommand\cA{\mathcal{A}}
\newcommand\vphi{\varphi}
\newcommand{\pr}{\prime}
\newcommand\what{\widehat}
\newcommand\wtilde{\widetilde}
\newcommand\emp{\emptyset}
\newcommand\beq{\begin{equation}}
\newcommand\eeq{\end{equation}}
\newcommand\be{\begin{equation}\begin{aligned}}
\newcommand\ee{\end{aligned}\end{equation}}
\newcommand{\abs}[1]{\left\lvert #1 \right\rvert}
\newcommand{\norm}[1]{\left\lVert #1 \right\rVert}
\newcommand{\set}[1]{\left\{ #1 \right\}}
\newcommand{\pa}[1]{\left( #1 \right)}
\newcommand{\fl}[1]{\left\lfloor #1 \right\rfloor}
\newcommand{\cl}[1]{\lceil #1 \rceil}
\renewcommand{\fl}[1]{\left\lfloor #1\right \rfloor}
\newcommand\La{\Lambda}
\newcommand{\eq}[1]{\eqref{#1}}
\newcommand{\up}[1]{^{\left(#1\right)}}
\newcommand{\qtx}[1]{\quad\text{#1}\quad}
\newcommand{\mqtx}[1]{\; \ \text{#1}\; \  }
\newcommand{\sqtx}[1]{\;\text{#1}\;}
\newcommand{\bD}{\boldsymbol{\Delta}}
\newcommand{\tfd}{\pa{1- \tfrac{1}{\Delta}}}
\newcommand{\fd}{1- \frac{1}{\Delta}}
\newcommand{\sfd}{1- \tfrac{1}{\Delta}}
\newcommand{\nfd}{\pa{1- \frac{1}{\Delta}}}
\newcommand{\prr}{{\pr\pr}}
\begin{document}

\title[Localization in the random XXZ  spin chain]{Localization in the random XXZ  quantum spin chain}

\author{Alexander Elgart}
\address[A. Elgart]{Department of Mathematics; Virginia Tech; Blacksburg, VA, 24061-1026, USA}
 \email{aelgart@vt.edu}

\author{Abel Klein}
\address[A. Klein]{University of California, Irvine;
Department of Mathematics;
Irvine, CA 92697-3875,  USA}
 \email{aklein@uci.edu}

\thanks{A.E. was  supported in part by the NSF under grant DMS-1907435 and the Simons Fellowship in Mathematics Grant 522404.}

%\date{Version of \today}

\begin{abstract} 
We study the many-body localization (MBL) properties of the Heisenberg XXZ spin-$\frac12$ chain in a random magnetic field. We prove that the system  exhibits  localization in any  given energy interval at the bottom of the spectrum in a nontrivial  region of the parameter space.
This region, which  includes weak interaction and strong disorder regimes,  is independent of the size of the  system and depends  only on the energy  interval. Our approach is based on the reformulation of the localization problem as an expression of  quasi-locality for functions of the random many-body XXZ Hamiltonian. This allows us to extend the fractional moment method for proving localization, previously derived in a single-particle localization context, to the many-body setting.
\end{abstract}
\keywords{Many-body localization, MBL,  random XXZ spin chain, quasi-locality, fractional moment method}

\subjclass[2000]{82B44, 82C44, 81Q10, 47B80, 60H25}

\setcounter{tocdepth}{1}
\maketitle
\tableofcontents	

\section{Introduction}

The last two decades have seen an explosion of physics research on the behavior of isolated quantum systems in which both disorder and interactions are present. The appearance of these two features has been linked to the existence of materials  that fail to thermalize and consequently cannot be described using equilibrium statistical mechanics. 
These materials are presumed to remain  insulators at non-zero temperature, a phenomenon called  many-body localization   (MBL). We refer the reader to the  physics reviews \cite{NandHuse,AL,abanin2019} for the general description of this phenomenon.  MBL-type behavior
  has been  observed  in cold atoms experiments \cite{Schreiber,Lukin}. 
 The stability of the MBL phase for infinite systems and all times remains a topic of intense debate \cite{sierant20,Kiefer,Sels,morningstar,sierant22}.

In this paper we    consider  the  random  spin-$\tfrac12$ Heisenberg XXZ chain in the Ising phase, a one-dimensional random quantum spin system.  This is the most studied model in the context of MBL both in the  physics and mathematics literature (going back to \cite{PPZ,HP}). It can be  mapped by the Jordan-Wigner transformation into an interacting spinless fermionic model closely related to the disordered Fermi-Hubbard Hamiltonian, a paradigmatic model in condensed matter physics that provides crucial insights into the electronic and magnetic properties of materials.  One  interesting feature of the  random  {\it one-dimensional} XXZ quantum spin system  is the emergence of  a many-body localization-delocalization transition. (In contrast,  prototypical non-interacting one-dimensional random Schr\"odinger operators do not exhibit a phase transition and are completely localized.) Numerical evidence for this transition in the disordered XXZ model has been provided in a number of simulations (e.g., \cite{Agarwal,Luitz,HP,bardarson,baygan}), but remains contested on theoretical grounds (e.g., \cite{deRoeck}).

 Until quite recently, mathematical results related to the proposed MBL characteristics, including zero-velocity Lieb-Robinson bounds, exponential clustering, quasi-locality, slow spreading of information,  and area laws, have been confined  to quasi-free systems. 
  The latter are models whose study can effectively be reduced to one of a (disordered) one-particle Hamiltonian. Examples of such systems include the XY spin chain in a random transversal field (going  back to \cite{KP}; see \cite{ARNSS} for a review on this topic), the disordered Tonks-Girardeau gas \cite{SeiringerWarzel}, and systems of quantum harmonic oscillators \cite{NSS}.  Another direction of research considers the effect of many-body interaction on a single-particle localization (rather than MBL) within the framework of the effective field theories.  This allows to consider a realistic Hilbert space for a single particle, such as $\ell^2(\Z^d)$, rather than finite dimensional ones that are typically used in the MBL context. In particular, the persistence of the dynamical localization in the Hartree-Fock approximation  for the disordered Hubbard model has been established in \cite{Ducatez,MaSch}. 

In the last few years, there has been some (modest) progress in understanding genuine many-body systems, all of which is concerned with the XXZ model, either in the quasi-periodic setting (where the exponential clustering property for the ground state of the Andr\'e-Aubry model has been established \cite{Mas1,Mas2}), or  in the droplet spectrum regime in the random case  \cite{BeW,EKS1}. In the latter case several  MBL manifestations have been established, including some that have never been previously  discussed in the physics literature  \cite{EKS3}.

While not exactly  solvable,  the XXZ spin chain  does have a symmetry, namely it preserves the particle number.  This enables a reduction to an infinite system of discrete $N$-body Schr\"odinger operators on the fermionic subspaces of $\Z^N$ \cite{NSt,FiSt}.   For the  XXZ spin chain in the Ising phase,
in the absence of a magnetic field the low energy eigenstates above the ground state are characterized by a {\it droplet regime}. In this regime spins form a droplet, i.e., a single cluster of down spins (particles) in a sea of up spins.
This reduction has been  effectively exploited   inside the droplet spectrum (the interval $ I_1$ in \eq{Ikle} below) using methods that resemble  the fractional moment method  for random  Schr\"odinger operators, yielding the small number of rigorous results   \cite{BeW,EKS1}.  However, these methods seem to be inadequate above this energy interval (i.e, inside the multi-cluster spectrum), and a new set of ideas that do not rely on a reduction to Schr\"odinger operators are required to tackle this case.  

In this paper we extend the energy interval for which MBL holds  well beyond the droplet spectrum, deep inside the multi-cluster spectrum. We develop a suitable method,   formulated and proved in terms of spin systems concepts.  In particular, our  method does  not rely on the  
reduction of the XXZ Hamiltonian to a direct sum of Schrodinger operators (and the subsequent analysis that uses single-particle tools).

Localization phenomenon in  condensed matter physics is usually associated with non-spreading of wave packets in a disordered medium. Experimentally, it is observed in semiconductors whose properties are predominantly caused by crystal defects or impurities, as well as in the variety of other systems. This phenomenon is by now well understood for quantum single  particle  models.  A prototypical system studied in this context is the Anderson Hamiltonian $H_A$, which is a self-adjoint operator acting on the Hilbert space $\mathcal H=\ell^2(\Z^d)$  of the form $H_A=-\Delta+\lambda V_\omega$. Here $\Delta$ is the (discrete) Laplacian describing the kinetic hopping, $V_\omega$ is a randomly generated multiplication operator ($\omega$ is the random parameter) describing the electric potential, and $\lambda$ is a parameter measuring the strength of the disorder.

Let us denote by $\delta_x\in\mathcal H$ the indicator of $x\in\Z^d$, and fix the random parameter $\omega$.  An  important feature of  $H_A$ as a map on $\ell^2(\Z^d)$  is its locality, meaning  $\langle \delta_x,H_A\delta_y\rangle=0$  if $\abs{x-y}>1$.
As a consequence, the resolvent  $(H_A-z)^{-1} $ retain a measure of locality, which we will call quasi-locality,  given by  the Combes-Thomas estimate
\be
\abs{\langle \delta_x,(H_A-z)^{-1} \delta_y\rangle}\le C_z \e^{-m_z\abs{x-y}},
\ee
where $C_z$ and $m_z$ are constants independent of $\omega$  such that  $C_z <\infty$ and $m_z >0$ if $z\in \C$ is outside the spectrum of $H_A$.  Maps given by smooth  functions  of $H_A$  also express a measure of quasi-locality, namely 
\be\label{eq:locman1}
\abs{\langle \delta_x,f(H_A)\delta_y\rangle} \le C_{f,n} \pa{1 + \abs{x-y}}^{-n},
\ee
where  $C_{f,n} <\infty$ for all $n\in\N$ and infinitely differentiable functions $f$.   Moreover, these quasi-locality estimate hold with the same constants for the restriction $H_A^\La$ of $H_A$ to a finite volume $\La \subset \Z^d$.  (See, e.g.,  \cite{Kirsch,Remling,GKdecay}.)

The two mainstream approaches for proving  localization in the single particle setting, namely the multi-scale analysis  (MSA) and the fractional moment method (FMM), going back to \cite{FS,FMSS,DK} and \cite{AM,A}, respectively, establish localization  for the (random) Anderson model $H_A$  by proving    quasi-locality estimates for  the finite volume resolvent  inside the spectrum of $H_A$.  In particular,  the fractional moment method shows that, fixing $s\in (0,1)$, for 
 large disorder $\lambda$ we have 
  \be
\E\set{  \abs{\left\langle {\delta_x,(H^\La_A-E)^{-1} \delta_y } \right\rangle}^s}\le C\e^{-m \abs{x-y}}, 
\ee 
for all  finite $\La \subset \Z^d$, $ x,y\in \La$, and energies $E\in \R$, where the constants $C<\infty$ and $m>0$ are in dependent of $\La$.   Moreover, one also gets a  quasi-locality   estimate for Borel functions of $H^A$  (dynamical localization),
\be
\E\set{ \sup_{f} \abs{\left\langle {\delta_x,f(H^\La_A)\delta_y } \right\rangle}}\le C\e^{-m \abs{x-y}}, 
\ee 
where the supremum is taken over all Borel functions on $\R$ bounded by one. 
Various manifestations of one-particle localization, such as non-spreading of wave packets; vanishing of conductivity in response to electric field;  statistics of the spacing between nearby energy levels, can be derived from these quasi-locality estimates.  (See, e.g.,  \cite{AW}.)  On the mathematical level, the quasi-locality   estimates provides an effective description of  single particle localization.

 The  MSA and the FMM prove localization for random Schr\"odinger operators, both in the  discrete and continuum settings. We refer the reader to the lecture notes  \cite{Kirsch,Kparis} and the monograph \cite{AW} for an introduction to the multi-scale analysis and the fractional moment method, respectively.

Both  methods  have been extended to  quantum system consisting of an arbitrary, but fixed,  number of interacting  particles, showing that 
many characteristics  of single-particle localization  
remain valid in this case  (e.g., \cite{CS,AW2,KlN1}).	
But  truly many-body systems (where the number of particles is proportional to the system's size) present new challenges. 
A major difficulty lies in the fact that the concepts of MBL proposed in the physics literature are not easily tractable on the mathematical level and it is not clear what could be chosen as the fundamental description of the theory from which other properties can be derived, as in a single particle case. For example, the available concept of quasi-locality in the many-body systems  looks very different from the one for single particle quantum systems.

To introduce a simple many-body system Hamiltonian, we consider a finite graph ${\bf \Gamma}=\pa{\mathcal V,\mathcal E}$ (where $\mathcal V$ is  the set of vertices  and $\mathcal E$ is the set of edges) and a family $\set{\mathcal H_i}_{i\in\mathcal V}$  of  Hilbert spaces. The Hilbert space of the subsystem associated with a  set $X\subset \mathcal V$ is given by $\mathcal H_X=\bigotimes_{i\in X}\mathcal H_i$, and the full Hilbert space
(we ignore particles' statistics)  is $\mathcal H_{\mathcal V}$. For each $X\subset \mathcal V$ one introduces the {\it algebra of observables} $\mathcal A_X$ measurable in this subsystem, which is the collection $\mathcal B(\mathcal H_X)$ of bounded linear operators on the Hilbert space $\cH_X$.  An observable $\mathcal O\in \mathcal A_{\mathcal V}$ is said to be supported by $X\subset\mathcal V$ if $\mathcal O=\mathcal O_{X}
\otimes \mathds{1}_{\mathcal H_{\mathcal V\setminus X}}$, where $\mathcal O_{X}\in\mathcal A_{X}$, i.e., if $\mathcal O$ acts trivially on $\mathcal H_{\mathcal V\setminus X}$.  Slightly abusing the  notation, we will usually identify $\mathcal O$ with $\mathcal O_{X}$, and call $X$ a support for $\mathcal O$.  
Since we are primarily interested here in understanding the way particles interact, the structure of a single particle Hilbert space $\mathcal H_i$ will be only of marginal importance for us. So we will be considering the simplest possible realization of such system, where each $\mathcal H_i$ is the two dimensional vector space $\C^2$ describing a  spin-$\frac12$ particle. 

We next describe the interactions between our spins. We again are going to consider the simplest possible arrangement, where only nearest neighboring spins are allowed to interact. Explicitly, for each pair of vertices $(i,j)\in\mathcal V$ that share an edge (i.e., $\set{i,j} \in \cE$), we pick an observable (called an {\it interaction}) $h_{i,j}\in \mathcal A_{\set{i,j}}$ such that  $h_{i,j}=h^*_{i,j}$,  an observable (called a {\it local transverse field}) $v_i=v_i^*\in \cA_{\set{i}}$,  and  associate a Hamiltonian $H^{\mathcal V}=\sum_{\set{i,j} \in\mathcal E}h_{i,j}+\sum_{i\in {\mathcal V}} v_i$ with our spin system. In particular, $H^{\mathcal V}$ is the sum of local observables and is consequently referred to as a {\it local Hamiltonian}.  Locality is manifested  by  $[[H^{\mathcal V},\mathcal O],\mathcal O^\prime]=0$ for any pair of observables  $\mathcal O\in \mathcal A_X$,  $\mathcal O^\prime\in \mathcal A_Y$,  with  $\dist(X,Y)>1$.  (To compare it with the concept of (single particle) locality for the map $H_A$, we need to define a local observable for the space $\ell^2(\Z^d)$. We will say that an observable $\mathcal O\in \mathcal L(\ell^2(\Z^d))$ has  support $X\subset\Z^d$ if $\mathcal O=\mathcal O_{X}
\oplus 0_{\ell^2(\Z^d\setminus X)}$ with $\mathcal O_{X}\in\mathcal L(\ell^2(X))$.
With this definition  locality of the map $H_A$, i.e., the property $\langle \delta_x,H_A\delta_y\rangle=0$  whenever $\abs{x-y}>1$ is equivalent to the statement that $[[H_{A},\mathcal O],\mathcal O']=0$ for any pair of observables  $\mathcal O,\mathcal O'$ with $\dist(\supp(\mathcal O),\supp(\mathcal O'))>1$.)

 The  XXZ spin chain is defined as above  on finite subgraphs $\Lambda$ of the graph $\Z$ (see Section \ref{secmodel}).  
Consider  $\La\subset \Z$ connected, and let $\abs{\La}$ be its cardinality.   We say we have a particle at  the site   $i\in \La$ if we have spin down in the copy $\cH_i$ of $\C^2$.   
 Let  $\cN_i$ be the orthogonal projection onto configurations with a particle at the site $i$, and set 
$[i]^\La_p= \set{j\in \La, \abs{j-i} \le p}$ for $p=0,1,\ldots$.   Given $B\subset \La$, let $P_+^B$ be the orthogonal projection onto configurations with no particles in $B$. 
% As we shall see,   in the Ising phase, with our definition of the finite subgraph chain (i.e., our choice of boundary conditions), $P_+^B$ describes the ground state of $H^\La$ restricted to the set $B$.
 In the Ising phase $H^\Lambda$ is  a $2$-local, gapped, frustration-free system,  and $P_+^\La$ describes the projection onto the ground state of $H^\La$ (see Remark~\ref{remIsing}).

We can now informally state our main results.  
We first prove that the resolvent $R^\La_z=(H^\La-z)^{-1}$ 
exhibits quasi-locality in the form (see Lemma~\ref{lem:locdet'} and Remark~\ref{reamrkHquasi})
\be\label{eq:ql}
\norm{\cN_i R_z^ \La P_+^{[i]^\La_p}}\le C_z \e^{-m_z p},
\ee
where $C_z$ and $m_z$ are constants,  independent of $\La$ and of the  transverse field,  such that  $C_z <\infty$ and $m_z >0$ if $z\in \C$ is outside the spectrum of $H^\La$.
We also   establish  the many-body analogue of \eqref{eq:locman1}: 
\be\label{eq:locmany1}
\norm{\cN_i f(H^ \La) P_+^{[i]^\La_p}}\le C_{f,n} \pa{1 +p}^{-n},
\ee
where  $C_{f,n} <\infty$ for all $n\in\N$ and infinitely differentiable functions $f$ on $\R$ with compact support.
(See  Appendix \ref{app:quasil}.)

 We next consider the random  XXZ spin chain (see Definition~\ref{defXXZ}). The relations \eqref{eq:ql}-\eqref{eq:locmany1} suggest, by  analogy with random Schr\"odinger operators,  that localization should be manifested as quasi-locality inside the spectrum of  $H^\La$. This is indeed what we prove in  Theorem~\ref{thm:maintech}.  We  introduce  increasing energy intervals  $I_{\le k} $, $k=0,1,2,\ldots$, in \eq{Ikle}   and  prove that   quasi-locality of the form given in \eq{eq:ql} holds for the resolvent for energies in  $I_{\le k} $  for any
 fixed $ k$. In particular, given $s\in(0,\frac 1 3)$ we prove, in the appropriate ($k$ dependent)  parameter region, that
\be\label{eq:quasran2} 
\E\set{\norm{\cN_i R_E^ \La P_+^{[i]^\La_p}}^s}\le C_k\abs{\Lambda}^{\xi_k} \e^{-m_k p} \qtx{for all}  E\in I_{\le k} ,
\ee
where the constants $C_k<\infty$, $ \xi_k >0$, $m_k>0$ do not depend on $\La$. As a consequence 
we  derive a quasi-locality estimate for Borel functions of $H^\La$ (Corollary \ref{thm:eigencor}):
 \be\label{eq:quasran} 
\E\pa{\sup_f {\norm{\cN_i f(H^\La)P_+^{[i]^\La_p}}}}\le C_k\abs{\Lambda}^{\xi_k} \e^{-m_k p},
\ee  
where the supremum is taken over all Borel functions Borel functions on $\R$ that are equal to  zero outside the interval $I_{\le k}$ and bounded by one.

While the  estimates \eqref{eq:quasran2} and \eqref{eq:quasran}  are very natural from the mathematical perspective, it is far from obvious  whether  they  yield  any of the  MBL-type features proposed by physicists.   
Nevertheless,   in a sequel to this paper \cite{EK23}, we derive slow propagation of information, a putative MBL manifestation,   from  Theorem~\ref{thm:maintech}  and  Corollary \ref{thm:eigencor}, for any $k\in \N$.

In the  droplet spectrum,   \cite[Theorem 2.1]{EKS1} imply Corollary \ref{thm:eigencor} (with $k=1$), and a converse can established using \cite[Remark~3.3]{EK23}.  While \cite{EKS1} and the follow-up paper \cite{EKS3} contain several  MBL-type properties such as the (dynamical) exponential clustering property,   (properly defined) zero-velocity Lieb-Robinson bounds, and  slow propagation (non-spreading) of information, they are all derived using \cite[Theorem 2.1]{EKS1}  as the starting point. We stress that  \cite[Theorem 2.1]{EKS1}, by its very nature, can only hold in the droplet regime, so, while it provides us with very strong consequences    in the $k=1$ case, we do not  expect  the methods of \cite{EKS1,EKS3}  to be of any use in the multi-cluster case, that is, for   $k\ge 2$.

  Although the methods derived in this work are not universal (which is typical for many-body results), they are sufficiently powerful for investigation of MBL phenomena in this context,  as shown in \cite{EK23}. We have to admit however that in the physics literature MBL is usually associated with energies that are not fixed (as we assumed in this work) but are comparable with the system size $\abs{\Lambda}$. We do not expect that our techniques will be sufficient to probe such energies. To be able to do so would require non-perturbative techniques similar to the ones use in the  investigations of one dimensional random Schr\"odinger operators.

The model description and main results (Theorem \ref{thm:maintech} and Corollary \ref{thm:eigencor})
 are presented  in Section \ref{secmodel}. { In Section \ref{sec:ir'} we outline the main ideas used in the proof of Theorem \ref{thm:maintech}, which is completed in Section \ref{sec:prmain}.  Corollary \ref{thm:eigencor} is proven in Section \ref{sec:cor}. 
  Appendix~\ref{appA} contains some useful identities.   Appendix \ref{app:quasil} contains the proof of of the many-body quasi-locality estimate \eq{eq:locmany1}.

 Throughout the paper, we will use generic constants $C, c,m$, etc., whose values will be allowed to change from line to line, even  in a displayed equation. These constants will not depend on subsets of $\Z$, but    they will,  in general depend on the parameters of the model   introduced in Section \ref{secmodel}  (such as $\mu$, $k$, $\Delta_0$, $ \lambda_0$, and  $s$). When necessary, we will indicate the dependence of a constant on  $k$ explicitly by writing it as $C_k, m_k$, etc. These constants can always be estimated from the arguments, but we will not track the changes to avoid complicating the arguments.

\section{Model description and main results}\label{secmodel}

\subsection{Model description}\label{secmodel}

The random XXZ quantum spin-$\frac 12$  chain  on an finite subset $\Lambda$ of $\Z$ is given by a self-adjoint  Hamiltonian $H^\Lambda$ acting on the finite dimensional  Hilbert space  $\mathcal H_\Lambda=\otimes_{i\in \Lambda} \cH_i$, where  $\cH_i=\C^2$ for each $i\in\Lambda$. For a vector $\phi \in \C^2$ we let   $\phi_i$ denote the vector as an element of $\cH_i$;  for an operator  ($2\times 2$ matrix) $A$ on $\C^2$ we let $A_i$ denote the operator acting on $\cH_i$.

We consider only finite subsets of $\Z$, so by a subset of $\Z$ we will always mean a finite subset.   If  $S\subset T\subset \Z$, and $A_S$ is an operator on  $\cH_S$, we consider $A_S$ as operator on $\cH_T$ by identifying it with  $A_S \otimes \mathds{1}_{T\setminus S}$, where $ \mathds{1}_R$ denotes the identity operator on $\cH_R$.  We thus identify   $\cA_S$ with a subset of $\cA_T$,  where $\cA_R$ denotes the algebra of bounded operators on $\cH_R$.

We now fix $\La \subset \Z$,  and  consider $\La$ as a subgraph of $\Z$. We 
 denote by $\dist_\La$ the  graph distance in $\La$, which  can be infinite if $\La$ is not a connected subset of $\Z$. We write $K^c=\La\setminus K$ for $K\subset \La$.
To define $H^\Lambda$ we  introduce some notation and definitons.
\begin{enumerate}
\item By $\sigma^{x,y,z}$ and $\sigma^\pm=\frac 12( \sigma^x \pm i \sigma^y)$ we will denote the standard Pauli matrices and  ladder operators, respectively.

\item  By $\uparrow\rangle: = \begin{pmatrix} 1  \\ 0  \end{pmatrix} $ and  $\downarrow\rangle = \begin{pmatrix} 0  \\ 1 \end{pmatrix} $ we will denote  the elements of the canonical basis of $\C^2$, called   spin-up  and spin-down, respectively. Letting 
$\mathcal{N} = \tfrac{1}{2} (\mathds{1}-\sigma^z)$, we note that   $\mathcal{N} \uparrow\rangle=0$ and   $\mathcal{N} \downarrow\rangle=\downarrow\rangle$, and interpret $\downarrow\rangle$ as a particle.

\item $\mathcal{N}_i$, the matrix $ \cN$ acting on $\cH_i$,  is   the projection onto the spin-down state (also called the local number operator) at site $i$. Given $S\subset \Lambda$, 
 $\cN_{S} = \sum_{i\in S} \mathcal{N}_i$ is   the total (spin-down) number  operator in $S$.

\item   The total number operator $\cN_\La$  has eigenvalues $0,1,2,\ldots, \abs{\La}$.  ($\abs{S}$  denotes the cardinality of $S\subset \Z$.) We set  $\cH_\Lambda\up{N}=\Ran\pa{\chi_N(\mathcal N_\Lambda)}$,  obtaining   the Hilbert space decomposition 
$ \cH_\La= \bigoplus_{N=0}^{\abs{\La}} \cH_\La\up{N}$.  We will use the notation $\chi^\La_{N}=\chi_{\set{N}}(\mathcal N_\Lambda)$. 

 \item The canonical (orthonormal) basis $\Phi_\La $  for $\cH_\La$ is constructed as follows:  Let $\Omega_\Lambda=\phi_\emptyset=\otimes_{i \in \Lambda}\uparrow\rangle_i$ be the vacuum state. Then
 \be \label{eq:standbasis}
& \Phi_\La 
=\set{\phi_A=\pa{\prod_{i\in A}\sigma_i^-}\Omega_\Lambda:\ A\subset \Lambda} =\bigcup_{N=0}^{\abs{\La}}  \Phi_\La\up{N}, 
 \ee
 where $\Phi_\La\up{N}= \set{\phi_A:  A\subset \Lambda, \ \abs{A}=N}$.  Note that $ \Phi_\Lambda\up{0}=\set{\Omega_\Lambda}$.  
\end{enumerate}

We now define the free XXZ quantum spin-$\frac 12$  Hamiltonian  on   $\Lambda\subset \Z$ by
\beq\label{eq:modH}
H_0^\Lambda=H_0^\Lambda(\Delta)=-\tfrac{1}{2\Delta} \bD^\Lambda + \cW^\Lambda \qtx{on} \cH_\La,
\eeq
where 
\begin{align}\label{bD}
 \bD^\Lambda &= \sum_{\set{i,i+1}\subset \Lambda} \pa{\sigma_i^+\sigma_{i+1}^-+\sigma_i^-\sigma_{i+1}^+},\\
 \cW^\Lambda & = \cN_\Lambda -\sum_{\set{i,i+1}\subset \Lambda} \cN_i\cN_{i+1} ,
 \end{align} 
and  $\Delta>1$ is the anisotropy   parameter, specifying the Ising phase ($\Delta=1$ selects the Heisenberg chain and  $\Delta=\infty$ corresponds to the  the Ising chain).

 We will consider the XXZ model in the presence of a transversal field $\lambda V^\Lambda_\omega$, given by $V^\Lambda_\omega=\sum_{i\in \Lambda} \omega_i \mathcal{N}_i$, where $\omega_i\ge 0$, and the parameter $\lambda>0$ is used to modulate the strength of the field. The full  Hamiltonian is then 
\beq\label{eq:H}
H^\Lambda=H^\Lambda_\omega =H^\La_\omega(\Delta,\lambda)= H_0^\Lambda(\Delta)+\lambda V_\omega^\Lambda= -\tfrac{1}{2\Delta} \bD^\Lambda + \cW^\Lambda+\lambda V_\omega^\Lambda.
\eeq

 \begin{remark}\label{remark1}
 \begin{enumerate}

\item The operator  $\bD^\La$ can be viewed  as the analog of the Laplacian operator on $\cH_\La$. 

\item  $\cN_i$  is diagonalized by the canonical basis for all $i\in \La$:   $\cN_i \phi_A= \phi_A$ if $i\in A$ and $0$ otherwise. It follows that
 the total number operator $\cN_\La$ is also  diagonalized by the canonical basis: $\cN_\La \phi_A= \abs{A} \phi_A$. 
\item $\cW^\La$,  the number of clusters operator, is diagonalized by the canonical basis: $\cW^\La \phi_A= W^\La_A \phi_A$,
where  $W^\La_A\in [0,\abs{A}]\cap\Z$ is the number of clusters of $A$ in $\La$, i.e., the number of connected components of  $A$ in $\Lambda$ (considered as a subgraph of $\Z$). 

\item $V^\Lambda_\omega$ is   diagonalized by the canonical basis: $V^\Lambda_\omega  \phi_A= \omega\up{A} \phi_A$,  where $\omega\up{A}=\sum_{i\in A} \omega_i$.

\item  The operators $\cN_\La$, $\cW^\La$, and $V^\Lambda_\omega$ commute.

\item The XXZ Hamiltonian  $H^\Lambda$  preserves the total particle number,  
\beq
[H^\Lambda,\cN_\Lambda]= -\tfrac{1}{2\Delta}[ \bD^\Lambda,\cN_\Lambda]  =0. 
\eeq

\end{enumerate}
\end{remark}

We will consider the XXZ model  in the presence of a random transversal field, that is, $\omega = \set{\omega_i}_{i\in\Z}$  is a family of  random variables. More precisely, we make the following definition.

\begin{definition}\label{defXXZ} The random XXZ  spin Hamiltonian on $\La \subset \Z$ is the operator
$H^\La=H^\La_\omega(\Delta,\lambda)$  given in \eq{eq:H}, where
$\Delta >1$, $\lambda >0$, and
$\omega = \set{\omega_i}_{i\in\Z}$  is a family of independent identically distributed random variables, whose  common probability  distribution $\mu$ satisfies
\beq\label{mu}
  \set{0,1}\subset \supp \mu\subset[0,1] ,
  \eeq 
  and  is assumed to be  absolutely continuous with a bounded density.
\end{definition}

From now on $H^\La$ always denotes the  random XXZ  spin Hamiltonian on $\La$. 
The corresponding resolvent is given by $R_E^\Lambda=\pa{H^\Lambda-E}^{-1}$, which is  well defined for  almost every energy $E\in\R$.    We set $\omega_S = \set{\omega_i}_{i\in S}$ for $S\subset \Z$, and denote the corresponding expectation and probability by $\E_S$ and $\P_S$. 

It is convenient to introduce the local interaction terms
\beq\label{tildeh}
{h}_{i,i+1}= - \mathcal{N}_{i}\mathcal{N}_{i+1} -\tfrac{1}{2\Delta}\pa{\sigma_i^+\sigma_{i+1}^-+\sigma_i^-\sigma_{i+1}^+},
\eeq
which allows us to rewrite 
\beq\label{eq:modH1}
H_0^\Lambda=\sum_{\set{i,i+1}\subset  \Lambda} {h}_{i,i+1} + \cN_\Lambda.
\eeq

It can be verified that on $\cH_{\set{i,i+1}}=\cH_i^2\otimes  \cH_{i+1}^2$ we have 
\begin{align}\label{Nsigma}
{\tfrac 12 \pa{\cN_i+\cN_{i+1}}-\cN_i\cN_{i+1}}
 \mp \tfrac 12 \pa{\sigma_i^+\sigma_{i+1}^-+\sigma_i^-\sigma_{i+1}^+} \ge 0,
 \end{align}
which implies that   $\cW_\La \pm \tfrac 1 2\bD_\La  \ge 0$, that is,
\beq\label{cWbD}
- 2\cW_\La  \le - \bD_\La \le 2\cW_\La  .
\eeq
It follows that 
\beq\label{H0W}
\tfd \cW^\Lambda\le H_{0}^\Lambda \le\pa{1+\tfrac 1 \Delta} \cW^\Lambda, \qtx{so} \tfd \cW^\Lambda\le H^\Lambda.
\eeq
We conclude that the spectrum of $H^\Lambda$ is  of the form  
\beq\label{eq:spH}
\sigma(H^\Lambda)=\set{0} \cup \pa{\left[1 -\tfrac 1 \Delta, \infty \right ) \cap  \sigma(H^\Lambda) }.
\eeq

The lower bound in \eq{H0W}  suggests the introduction of the energy thresholds $k\tfd$, $ k=0, 1,2\ldots$.  We define the energy intervals
 \be\label{Ikle}
\what I_{\le k}&=\left(-\infty, (k+1)\tfd\right), \quad 
\what I_{ k}= \left[\sfd, (k+1)\tfd\right),\\
I_{\le k}&= \left(-\infty, (k+\tfrac 3 4)\tfd\right) ,\quad 
I_k= \left[\sfd, (k+\tfrac 3 4)\tfd\right).
\ee 
We call  $\what I_{ k}$  the $k$-cluster spectrum.

Given $\emptyset \ne S\subset \Lambda\ $, we  define the orthogonal projections  $P_\pm^{S}$ on $\cH_\Lambda$ by 
 \begin{align}\label{P+-S}
P_+^{S}=\bigotimes_{i\in S}\pa{\mathds{1}_{\mathcal H_{i}}-\cN_i}= \chi_{\set{0}}\pa{\cN_{S}}  \qtx{and} 
P_-^{S}={   \mathds{1}_{\cH_\La} }-P_+^{S}=  {  \chi_{\N}}\pa{\cN_{S}}.
\end{align}
$P_+^{S}$  is the orthogonal projection onto states with no particles in the set $S$;  $P_-^{S}$  is the orthogonal projection onto states with at least one particle in $S$.
We also set 
 \be \label{P+-empty}
P_+^{\emptyset } ={   \mathds{1}_{\cH_\La} }\qtx{and } P_-^{\emptyset }=0.
 \ee
   \begin{remark}\label{remIsing} In the Ising phase, i.e., $\Delta>1$,  we have \eq{H0W} and  \eqref{eq:spH} for all $\La \subset \Z$.  It follows that  the XXZ chain Hamiltonian $H^\La$  has ground state $\Omega_\La$ and the ground state energy is $0$ ($H_\La  \Omega_\La$=0), and, moreover, the ground state energy  is gapped.  This makes  $H^\Lambda$  a $2$-local, gapped, frustration-free system.  These features, plus the preservation of the total particle number,  make the XXZ model especially amenable to analysis. In particular, the number of eigenstates of $H^\Lambda$ in  the intervals $I_{\le k}$    grows only polynomially in the volume of $\La$ (not exponentially as the dimension of $\cH_\La$)  as shown in  Lemma \ref{lem:bndHSQ'} below.
\end{remark}

}

\subsection{Main results}

Our main result establishes  quasi-locality for the resolvent  of the random XXZ chain  inside the spectrum of  $H^\La$.

\begin{theorem}[Quasi-locality for resolvents]\label{thm:maintech} 
 Fix  $\Delta_0>1$, $ \lambda_0 >0$, and let $s\in(0,\frac 1 3)$.  Then for all   $k\in\N^0$  there exist  constants $D_k,F_k,\xi_k, \theta_k>0$ (depending on $k$, $\Delta_0$, $ \lambda_0$ and  $s$) such that,
 for all $\Delta \ge \Delta_0$ and $\lambda \ge \lambda_0$ with  $\lambda \Delta^2\ge D_k$,  $\Lambda\subset  \Z$ finite, and  energy $E\in I_{\le k}$, we have
\be\label{eq:mainbnd}
\E\set{\norm{P_-^{A}R_E^\Lambda P_+^{B}}^s}\le F_k\abs{\Lambda}^{\xi_k} \e^{-\theta_k\dist_\La(A,B^c)} ,
\ee 
for   $A\subset B\subset \La$ with    $A$ connected in $\La$.
\end{theorem}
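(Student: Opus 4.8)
The plan is to establish \eqref{eq:mainbnd} by induction on the cluster index $k$, combining three ingredients: the deterministic Combes--Thomas-type quasi-locality of Lemma~\ref{lem:locdet'}, valid off $\sigma(H^\La)$; an a priori fractional-moment bound supplied by the disorder; and a Feshbach/Schur reduction that trades the contribution of configurations with more than $k$ clusters for a bounded operator. One works throughout with operator norms, exploiting that $P_\pm^{S}$, $\cW^\La$ and $V^\La_\omega$ are all diagonal in the canonical basis of $\cH_\La$ and hence mutually commute. For the base case $k=0$, \eqref{eq:spH} shows that the only spectrum of $H^\La$ in $I_{\le 0}$ is the simple eigenvalue $0$, with eigenvector the vacuum $\Omega_\La$; writing $P_0$ for the orthogonal projection onto $\Omega_\La$ and noting $P_-^A\Omega_\La=0$ (since $A$ is connected and nonempty), one has $P_-^A P_0=0$, whence $P_-^A R_E^\La P_+^B=P_-^A\,(\mathds 1-P_0)(H^\La-E)^{-1}(\mathds 1-P_0)\,P_+^B$. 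By \eqref{eq:spH} the spectrum of $H^\La$ on $\Ran(\mathds 1-P_0)$ lies in $[1-\tfrac1\Delta,\infty)$, so for $E\in I_{\le 0}$ this energy is at distance $\ge\tfrac14(1-\tfrac1\Delta)$ from it, and a Combes--Thomas estimate for the reduced resolvent, as in Lemma~\ref{lem:locdet'}, gives \eqref{eq:mainbnd} deterministically with $\La$-independent constants (the hypothesis $\lambda\Delta^2\ge D_0$ serving at most to normalize them).

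For the inductive step, assume \eqref{eq:mainbnd} for $k-1$, fix $E\in I_{\le k}$, and set $\ell=\dist_\La(A,B^c)$. Let $\Pi_{\le k}=\chi_{\{0,1,\dots,k\}}(\cW^\La)$ and $\Pi_{>k}=\mathds 1-\Pi_{\le k}$. By \eqref{H0W}, for $\psi\in\Ran\Pi_{>k}$ one has $\langle\psi,(H^\La-E)\psi\rangle\ge\bigl((k+1)-(k+\tfrac34)\bigr)(1-\tfrac1\Delta)\,\norm{\psi}^2$, so $\Pi_{>k}(H^\La-E)\Pi_{>k}$ is boundedly invertible on $\Ran\Pi_{>k}$; moreover the only term of $H^\La$ not commuting with $\cW^\La$ is $-\tfrac1{2\Delta}\bD^\La$, so the coupling between the two sectors carries the prefactor $\tfrac1{2\Delta}$. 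A Schur-complement identity relative to $\cH_\La=\Ran\Pi_{\le k}\oplus\Ran\Pi_{>k}$ then expresses every block of $R_E^\La$ through $(H_{\mathrm{eff}}-E)^{-1}$, where $H_{\mathrm{eff}}=\Pi_{\le k}H^\La\Pi_{\le k}+O(\Delta^{-2})$ lives on the $\le k$-cluster sector, plus corrections already decaying in $\ell$ by Lemma~\ref{lem:locdet'}. Since $P_-^A$ and $P_+^B$ commute with $\Pi_{\le k}$, matters reduce to bounding $\E\set{\norm{P_-^A\,\Pi_{\le k}(H_{\mathrm{eff}}-E)^{-1}\Pi_{\le k}\,P_+^B}^s}$ on a sector whose dimension is polynomial in $\abs{\La}$ by arguments as in Lemma~\ref{lem:bndHSQ'}.

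On this sector one runs the fractional-moment scheme step by step along a path of length $\ell$ from $A$ to $B^c$, iterating the resolvent identity so as to peel off, one bond at a time, the nearest-neighbour interactions $h_{i,i+1}$ crossing successive bonds of the path and decoupling $H_{\mathrm{eff}}$ there. Two mechanisms combine. First, any configuration contributing to $P_-^A(\cdots)P_+^B$ must carry a particle from $B^c$ all the way into $A$, over distance $\ell$; since each hop comes from the off-diagonal part $-\tfrac1{2\Delta}(\sigma_i^+\sigma_{i+1}^-+\sigma_i^-\sigma_{i+1}^+)$ of some $h_{i,i+1}$, the iterated expansion carries hopping suppression of order $\Delta^{-\ell}$. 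Second, after conditioning on the fields in the region decoupled at a given step, the resulting sub-resolvent is controlled either by the a priori fractional-moment bound --- obtained by spectral averaging over the freshly conditioned variables $\omega_i$, using that the $\cN_i$ are projections and $\mu$ has bounded density, which contributes a factor $\le C\lambda^{-s}$ --- or, once the expansion has lowered the number of clusters relevant to a sub-block below $k$, by the induction hypothesis applied on that block. Multiplying the per-step gains produces a bound $\abs{\La}^{\xi_k}\,q^{\ell}$, with $q$ a combination of the hopping factor $\Delta^{-1}$ and the decoupling factor $\lambda^{-s}$ (together with bounded combinatorial contributions) that the condition $\lambda\Delta^2\ge D_k$ forces below $1$, while the prefactor $\abs{\La}^{\xi_k}$ accounts for the polynomially many $\le k$-cluster configurations entering the expansion; then $q^{\ell}\le e^{-\theta_k\ell}$ gives \eqref{eq:mainbnd} and closes the induction.

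The crux is the single-step estimate that the iteration chains together. In one-particle FMM, decoupling one site is a rank-one perturbation controlled by a scalar Wegner bound and an elementary geometric resolvent identity; here what plays the role of geometry is the combinatorics of multi-cluster configurations, the decoupling perturbations are not usefully of finite rank, and one must simultaneously keep the $\Delta^{-1}$ hopping gains coherent through the Schur complement, ensure the sub-resolvents produced genuinely fall into strictly lower energy windows so the induction on $k$ actually closes, and prevent the configurations bookkept at successive steps from compounding into an exponential rather than a polynomial factor in $\abs{\La}$. Isolating one clean one-step inequality robust enough to be iterated $\asymp\ell$ times with all three features intact is the heart of the argument.
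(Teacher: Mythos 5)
Your base case is sound and essentially matches the paper's (projecting out the vacuum is the same device as the paper's shift $\what H_0^\La=H^\La+(1-\tfrac1\Delta)Q_0^\La$, and $\cX=\mathds 1-P_0$ is admissible in Lemma~\ref{lem:locdet'} since $P_0=P_+^\La$ is diagonal in the canonical basis), modulo the point that positivity of the decay rate requires $\Delta$ large (the paper handles small $\Delta_0$ via Remark~\ref{remarkDelta5}). The inductive step, however, has a genuine gap, and you in effect concede it in your last paragraph: the ``one clean one-step inequality robust enough to be iterated $\asymp\ell$ times'' is precisely what is missing, and the paper's proof shows that no such iteration is performed. Iterating a fractional-moment decoupling along a path of length $\ell$ cannot work as you describe: each application of the a priori bound (spectral averaging dressed with $Q^\La_{\le k}$ to make the Hilbert--Schmidt norms finite, Lemmas~\ref{lem:apri} and \ref{lem:bndHSQ'}) costs a factor polynomial in $\abs{\La}$, so $\ell$ iterations compound into $\abs{\La}^{c\ell}$ rather than $\abs{\La}^{\xi_k}$, destroying the claimed exponential decay. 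You name this danger but do not defuse it.

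The paper's actual mechanism is different in two essential respects that your proposal does not supply. First, the exponential decay is not obtained by multiplying $\ell$ per-step gains; it comes from a \emph{single} geometric decoupling at a location chosen according to the cluster configuration $M$ (the classification into $\cG_1,\cG_2,\cG_3$ of Section~\ref{clusterclass}, together with the large-deviation Lemma~\ref{lem:LD} for configurations with many particles), after which the quantity $f^\La(k,E,r)$ satisfies a self-consistent inequality \eqref{eq:fineq}; the condition $\lambda\Delta^2\ge D_k$ enters only to close the subharmonicity bootstrap, not as a per-bond contraction factor. Second, the step you flag as needing to ``ensure the sub-resolvents genuinely fall into strictly lower energy windows'' is resolved by a concrete identity you do not identify: the tensor decomposition \eqref{eq:enered} of $R_E^{K,K^c}$ over $\sigma(H^{K^c})$, combined with the observation that a configuration in $\Ran Q^\La_{\le k}$ meeting both $K$ and $K^c$ forces $\nu\ge 1-\tfrac1\Delta$, hence $E-\nu\in I_{\le k-1}$ and the induction hypothesis applies on the block $K$. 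Without these two ingredients (and the Cauchy--Schwarz/H\"older splitting that is the reason for $s<\tfrac13$), the inductive step as proposed does not go through.
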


The  theorem is proven in Section \ref{sec:prmain}.

\begin{remark} If $A$ is not connected in $\La$, the theorem still holds with \eq{eq:mainbnd} replaced by
\be\label{eq:mainbndnot}
\E\set{\norm{P_-^{A}R_E^\Lambda P_+^{B}}^s}\le F_k  {\Upsilon}^\La_A \abs{\Lambda}^{\xi_k} \e^{-\theta_k\dist_\La(A,B^c)}  ,
\ee 
where ${\Upsilon}^\La_A$ denotes the number of connected components of $A$ in $\La$. This follows from  \eq{eq:mainbnd}  and 
\beq
P_-^{A}=\sum_{j=1}^{{\Upsilon}^\La_A} P_+^{\bigcup_{i=i}^{j-1} A_i}P_-^{A_j},
\eeq
where
  $A_j$, $j=1,2,\ldots, {\Upsilon}^\La_A$,  are  the  connected components of $A$ in $\La$ .

\end{remark}

As a consequence of Theorem \ref{thm:maintech}  we prove the following quasi-locality estimate for Borel functions of $H^\La$. By $B(I_{\le k})$ we denote the collection of Borel functions on $\R$ that are equal to  zero outside the interval $I_{\le k}$.

\begin{corollary}[Quasi-locality for Borel functions]\label{thm:eigencor} Assume the hypotheses and conclusions of
 Theorem \ref{thm:maintech},  Then for all   $k\in\N^0$  there exist  constants $\wtilde F_k,\wtilde \xi_k, \wtilde \theta_k>0$ (depending on $k$, $\Delta_0$, $ \lambda_0$ and  $s$) such that,
 for all $\Delta \ge \Delta_0$ and $\lambda \ge \lambda_0$ with  $\lambda \Delta^2\ge D_k$, {  and $\Lambda\subset  \Z$ finite,}  we have
 \be\label{eq:eigencor} 
\E_{\Lambda}\pa{\sup_{\substack{f\in B(I_{\le k}):\\\|f\|_\infty\le1}}\norm{P_-^{A}f(H^\Lambda) P_+^{B}}}\le \wtilde F_k\abs{\Lambda}^{\wtilde \xi_k}  \e^{-\wtilde \theta_k\dist_\La(A,B^c)} ,
\ee  
for all $A\subset B\subset \La$, $A$ connected in $\La$.
\end{corollary}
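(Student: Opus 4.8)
The plan is to derive Corollary~\ref{thm:eigencor} from Theorem~\ref{thm:maintech} via a standard resolvent-to-spectral-projection argument, adapted to the many-body geometry. First I would observe that for a Borel function $f$ supported in $I_{\le k}$ with $\|f\|_\infty\le 1$, one can represent $f(H^\La)$ as a contour integral or, more robustly for arbitrary Borel functions, write $f(H^\La)=f(H^\La)\chi_{I_{\le k}}(H^\La)$ and note that $\norm{P_-^A f(H^\La) P_+^B}\le \norm{P_-^A \chi_{I_{\le k}}(H^\La) P_+^B}$ is \emph{not} quite true since $f$ need not be positive; instead the clean route is: $\sup_{f}\norm{P_-^A f(H^\La)P_+^B}$ is comparable (up to a factor bounded in terms of the number of eigenvalues of $H^\La$ in $I_{\le k}$) to $\sum_{E\in\sigma(H^\La)\cap I_{\le k}}\norm{P_-^A \Pi_E P_+^B}$, where $\Pi_E$ is the spectral projection onto the eigenvalue $E$. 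By Lemma~\ref{lem:bndHSQ'} (invoked in Remark~\ref{remIsing}), the number of such eigenvalues is polynomially bounded in $\abs\La$, so this factor only costs another power of $\abs\La$, which can be absorbed into $\abs\La^{\wtilde\xi_k}$.

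The core estimate is therefore a bound on $\E\set{\norm{P_-^A \Pi_E P_+^B}}$ for a single eigenvalue. Here I would use the Combes--Thomas/resolvent trick: the rank-one-in-energy projection can be recovered from the resolvent by a contour integral around $E$, but since eigenvalues are random it is cleaner to use the identity $\Pi_E = -\frac{1}{2\pi i}\oint_{|z-E|=\eta} R_z^\La\,dz$ only on the event that no other eigenvalue lies within $\eta$ of $E$, or — better — to avoid eigenvalue spacing issues entirely by integrating the resolvent against a smooth bump and using the a priori polynomial bound on the density of states in $I_{\le k}$. The technically smoothest implementation: write, for $f\in B(I_{\le k})$, $\norm{P_-^A f(H^\La)P_+^B}\le \int_{I_{\le k}} \norm{P_-^A \,\mathrm{Im}\,R_{E'+i0}^\La\, P_+^B}\,\frac{dE'}{\pi}$ via the spectral theorem (using $\|f\|_\infty\le1$ and that $P_-^A, P_+^B$ are fixed projections), then bound $\norm{P_-^A\,\mathrm{Im}\,R_{E'+i\eta}^\La\,P_+^B}$ using Theorem~\ref{thm:maintech}. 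The subtlety is that Theorem~\ref{thm:maintech} is stated for real $E$, but inspection of its proof (or a routine perturbation in $\eta$, since $\eta$ only helps invertibility) gives the same bound uniformly for $E'+i\eta$ with $E'\in I_{\le k}$ and $\eta>0$ small; alternatively one uses that $\E\norm{P_-^A R_{E'+i\eta}^\La P_+^B}^s$ is controlled by the real-energy bound plus a limiting argument as $\eta\downarrow 0$.

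A cleaner and more self-contained alternative, which I would actually prefer to write up, proceeds through the eigenfunction expansion directly: diagonalize $H^\La=\sum_j E_j \Pi_j$, so $f(H^\La)=\sum_j f(E_j)\Pi_j$ and, since the $\Pi_j$ with $E_j\in I_{\le k}$ are finitely many orthogonal projections and $|f(E_j)|\le 1$,
\be
\sup_{\substack{f\in B(I_{\le k})\\\|f\|_\infty\le 1}}\norm{P_-^A f(H^\La)P_+^B}\le \sum_{j:\,E_j\in I_{\le k}}\norm{P_-^A\Pi_j P_+^B}.
\ee
Then for each such $j$ pick a rational $\delta>0$ and estimate $\norm{P_-^A\Pi_j P_+^B}$ by integrating $\norm{P_-^A R_{E+i\delta}^\La P_+^B}$ over a suitable interval, using Jensen's inequality with exponent $s$ to pass the $s$-th power bound of Theorem~\ref{thm:maintech} through the expectation, and finally optimize $\delta$ and sum over $j$ using the polynomial eigenvalue count. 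The main obstacle I anticipate is the transfer of the real-energy fractional-moment bound \eqref{eq:mainbnd} to complex energies (or to spectral projections) while keeping the $\abs\La$-dependence polynomial: one must ensure that the resolvent singularities at the (random) eigenvalues do not produce an uncontrolled contribution. This is handled by the standard observation that $\E\set{\norm{P_-^A R_{E+i\delta}^\La P_+^B}^s}$ is monotone-type controlled uniformly in $\delta$ (since adding $i\delta$ only regularizes), combined with Fatou/dominated convergence as $\delta\downarrow 0$; the absolutely continuous disorder distribution with bounded density guarantees the a priori bounds needed to make the limiting and averaging manipulations rigorous. The final constants $\wtilde F_k,\wtilde\xi_k,\wtilde\theta_k$ differ from $F_k,\xi_k,\theta_k$ only by the polynomial eigenvalue-count factor and the $1/s$ loss from Jensen, both harmless.
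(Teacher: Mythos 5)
There is a genuine gap, and it sits exactly at the heart of the matter: nothing in your proposal actually converts the \emph{fractional} moment bound \eqref{eq:mainbnd}, which controls $\E\set{\norm{P_-^A R_E^\La P_+^B}^s}$ only for $s<\tfrac13$, into the \emph{first}-moment bound \eqref{eq:eigencor}. Your first reduction, $\sup_f\norm{P_-^Af(H^\La)P_+^B}\le\sum_{E\in\sigma_{I_k}(H^\La)}\norm{P_-^AP_{\set E}P_+^B}$ together with the polynomial eigenvalue count of Lemma \ref{lem:bndHSQ'}, is correct and is indeed the paper's starting point. But every mechanism you then offer for bounding $\E\norm{P_-^A\Pi_jP_+^B}$ fails. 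The contour-integral / $\mathrm{Im}\,R_{E+i0}$ route requires the first moment of the resolvent norm integrated over energy, which is not controlled by Theorem \ref{thm:maintech} and is generically divergent: near an eigenvalue $\norm{R_{E'}^\La}\sim|E'-E_j|^{-1}$ is not locally integrable, and $\E\norm{R_E^\La}=\infty$ for $E$ in the spectrum; only fractional moments are finite. Jensen's inequality goes the wrong way for this purpose ($\E(X^s)\le(\E X)^s$ for $s<1$, so smallness of the $s$-moment does not give smallness of the first moment), so there is no ``$1/s$ loss from Jensen''; and $(\int g)^s\le\int g^s$ is false for integrals, so you also cannot push the power $s$ inside a contour integral. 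The remark that adding $i\delta$ ``only regularizes'' controls the $s$-moment uniformly in $\delta$, but that was never the issue — the issue is the exponent, not the imaginary part.

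The paper bridges this gap by an entirely different, MSA-flavored argument (Section \ref{sec:cor}). It first converts \eqref{eq:mainbnd} into a probabilistic statement (Lemma \ref{lemMSAprob}): by estimating the Lebesgue measure of the random set of energies at which the decoupled resolvents are large (a Chebyshev argument applied to $\int_{I_k}(F_E^S)^s\,dE$), and combining this with a Wegner-type two-spectra estimate to exclude simultaneous resonances of $H^{K',K^c}$ and $H^{K^c}$, it shows that off an event of probability $\le C\abs{\La}^{\xi_k'}\e^{-\theta_k r/9}$, for \emph{every} $E\in I_k$ at least one of the two sub-Hamiltonians is $(m,E,r)$-regular. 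On that good event the eigenprojection bound is deterministic, via the geometric resolvent identity, the identity $P_{\set E}=k\tfd\what R^\La_{k,E}Q^\La_{\le k}P_{\set E}$, the large-deviation input for $8kN_E\ge p$, and the cluster decomposition of Lemma \ref{lem:partM}. None of this machinery is replaceable by the spectral-averaging shortcuts you describe (which in the single-particle FMM rely on rank-one perturbation formulas not available here), so the proposal as written does not yield the corollary.
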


The proof of the Corollary  is given in  Section \ref{sec:cor}.

\section{Key ingredients for the proofs}\label{sec:ir'}

In this section we collect a number of definitions, statements and   lemmas that will facilitate the proof of Theorem \ref{thm:maintech}.  

$\La$ will always denote a finite subset of $\Z$ and $A\subset \La$ will always denote a nonempty subset  connected  in $\La$. ($B\subset \La$, $S\subset \La$, etc.,  may not be connected in $\La$.)

\subsection{Some definitions}

   \begin{itemize}

  \item  Given  $ M\subset \Lambda$  and $q\in\Z$, we define  enlarged (for $q\ge 0$) and  trimmed  (for $q<0$)  set $[M]^\La_q $ by
   \be
{[M]^\La_q } &:=\begin{cases}\set{x\in\La: \dist_\La\pa{x,M}\le q} &\sqtx{if}  q\in \N^0=\set{0} \cup \N  \\ \set{x\in\La:\dist_\La\pa{x,  M^c}\ge 1-q}= M\setminus [M^c]^\La_{-q}&\sqtx{if}  q\in-\N\\
\set{x\in \La:  \dist_\La\pa{x,M}<\infty}= \bigcup_{p\in \N^0} [M]^\La_p & \sqtx{if}q=\infty
\end{cases}.
\ee
Note that $ [M]^\La_{-\abs{M}}=\emptyset$.  Moreover,  $[M]^\La_\infty= [M]^\La_{\abs{\La}-1} $ is the connected component of $\La$ containing $M$,  and   we have 
 \be\label{HM0}
[H^\La, P_\pm^{[M]^\La_\infty}]=0.
\ee

We   define $\partial_{ex}^\La M$  (the external boundary of $M$ in $\La$) ,   $\partial_{in}^\La M$ (the inner boundary of $M$ in $\La$), and    $ \partial^\La M$ (the boundary of $M $ in $\La$), by 

\be
\partial_{ex}^\La M&:=\set{x\in\Lambda:\ \dist_\La\pa{x,M}=1}= [M]^\La_1\setminus M,\\
 \partial_{in}^\La M&:=\set{x\in\Lambda:\ \dist_\La \pa{x, M^c}=1}=M\setminus [M]^\La_{-1},\\
 \partial^\La M &:=  \partial_{in}^\La M \cup  \partial_{ex}^\La M.
\ee

 It follows that
\be
]M[^\La_q:= [M]^\La_{q+1}\setminus  [M]^\La_{q}&=\begin{cases}  {\partial^\La_{ex}{[M]^\La_{q}}},  & q\in \N^0\\
{\partial^\La_{in }{[M]^\La_{q+1}}} & q\in-\N
 \end{cases},
\ee 
and we have
\be\label{MpM1p}
]M[^\La_p= ] M^c[^\La_{-p-1} \qtx{for} p\in\Z.
\ee

If $M=\set{j}$ we write ${[j]^\La_q }={[\set{j}]^\La_q }$.

\item   Given $A\subset B \subset \La$,  we let $\rho^\La (A,B)$  be the largest $q\in \N^0 \cup \set{\infty}$ such that $ [A]^\La_q \subset B$, that is,
 \be\label{defrhoLa}
 \rho^\La (A,B)= \sup \set{q\in \N^0:  [A]^\La_q \subset B} =\dist_\La (A, B^c)-1.
 \ee
 It will be more convenient to use $ \rho^\La (A,B)$  instead of $\dist_\La (A, B^c)$ in the proofs.

Note that 
\be\label{rhoinfty}
\rho^\La (A,B)=\infty \iff \dist_\La (A, B^c)=\infty \iff  [A]^\La_\infty \subset B.
\ee

\item  It follows from \eq{HM0}  and \eq{rhoinfty} that
\be\label{PAPB=0}
P_-^A R_{E}^{\Lambda} P_+^{B}=0 \qtx{if} A\subset B\subset \La \qtx{and}\rho^\La(A,B)=\infty,
\ee
so it suffices to prove Theorem \ref{thm:maintech} for $\rho^\La(A,B)<\infty$.  Moreover,
since $A \subset B$ we have $[A]^\La_{\rho^\La(A,B)}\subset B$, and hence
\be
\norm{P_-^{A}R_E^\Lambda P_+^{B}}\le \norm{P_-^{A}R_E^\Lambda P_+^{[A]^\La_{\rho^\La(A,B)}}},
\ee
so without loss of generality it suffices to prove  \eq{eq:mainbnd} for  $B=[A]^\La_{\rho}$ with   $\rho \in \N^0$.

\item  Given  $K\subset \La$, we    consider the operator  $H^K= H^K\otimes \mathds{1}_{\cH_{K^c}}$ acting on $\cH_\La$. We also
 consider the operators on $\cH_\La$ given by
\be\label{eq:Gamma}
H^{K,K^c}= H^{K}+H^{K^c}, \quad  R^{K,K^c}_E=\pa{H^{K,K^c}-E}^{-1}, \quad \Gamma^K=H^\La-H^{K,K^c}.
 \ee

\end{itemize}

\subsection{Quasi-locality for resolvents}
 
  The following lemma and remark yields (deterministic) quasi-locality for the resolvent  of the   XXZ chain  outside the spectrum of  $H^\La$.

\begin{lemma} \label{lem:locdet'} 
Let ${\Theta}\subset {\Lambda}$, and consider the Hilbert space $\cH_{\Lambda}$.   Let  the operator $T\in\mathcal A_{\Lambda}$  be  of the form 
\be\label{eq:T}T=T^{\Theta}+T^{{ \Theta^c}};\qtx{where}T^{\Theta}\in\mathcal A_{\Theta}\qtx{and} T^{{ \Theta^c}}\in\mathcal A_{{ \Theta^c}},
\ee
 and let $\cX\in  \cA_{\Lambda}$ be  a projection such  that  $[ \cX,T]=0$ and $[\cX, P_\pm^K]=0 $ for all $K\subset {\Theta}$.

Suppose
\begin{enumerate}

\item  For all   $K \subset {\Theta} $   we have   $[P_-^{K},T]P_+^{[K]_1^{\Theta}}=0$.

 \item For all     $K \subset {\Theta} $, with $K$ connected in $\Theta$,   we have  $\norm{[P_-^{K},T]}\le  \gamma  $. 
 
 \item  $T_\cX$, the restriction of the operator $T$ to $\Ran \cX$, is invertible with
 $ \norm{T_\cX^{-1}}_{\Ran \cX} \le \eta^{-1}$,  where $\eta>0$.

 \end{enumerate}

Then for all $A\subset B\subset {\Theta}$, with  $A$  connected in ${\Theta}$, 
we have
 \be \label{eq:localitybndet}
\norm{P_-^{A}\,T_\cX^{-1}\,P_+^{B}}_{\Ran \cX}\le \eta^{-1}  \e^{-m \rho^{\Theta} (A,B)}   , \qtx{with} 
m=\ln\pa{{\gamma^{-1}}\eta }.
\ee 
\end{lemma}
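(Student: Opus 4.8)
\textbf{Proof plan for Lemma~\ref{lem:locdet'}.}
The plan is to prove an iterative (telescoping) bound on the quantity $\norm{P_-^{A}\,T_\cX^{-1}\,P_+^{B}}_{\Ran \cX}$ in terms of the "radius" $\rho = \rho^{\Theta}(A,B)$, peeling off one layer of $B$ at a time. The key is a resolvent-type identity obtained by inserting $P_+^{B} + P_-^{B} = \mathds 1$ and using the structure hypotheses. Write $A' = [A]^\Theta_1$, the one-step enlargement of $A$ inside $\Theta$; since $A$ is connected in $\Theta$, so is $A'$. If $\rho \ge 1$ then $A' \subset B$, so in particular $P_+^B = P_+^B P_+^{A'}$ (fewer constraints), and also $A \subset A' \subset B$. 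The first step is to commute $T_\cX^{-1}$ past $P_-^A$: from $[P_-^A, T] P_+^{[A]_1^\Theta} = 0$ (hypothesis (i)) one gets $P_-^A T = P_-^A T P_-^{A'} + T P_+^{A'}$-type rearrangements; more precisely, multiplying the operator identity $P_-^A T - T P_-^A = [P_-^A,T]$ on the right by $P_+^{A'}$ kills the first term up to the commutator living on $\Ran P_-^{A'}$. Combined with $[\cX,T]=0$ and $[\cX,P_\pm^K]=0$ for $K\subset\Theta$ (so that all these projections descend to $\Ran\cX$ and commute with $T_\cX$ there), this yields, on $\Ran\cX$,
\be\label{eq:telescope-plan}
P_-^{A}\,T_\cX^{-1}\,P_+^{B} \;=\; T_\cX^{-1}\,[P_-^{A},T]\, T_\cX^{-1}\,P_+^{B}\;=\; T_\cX^{-1}\,[P_-^{A},T]\,P_-^{A'}\, T_\cX^{-1}\,P_+^{B},
\ee
where in the last equality I use hypothesis (i) again, now in the form $[P_-^A,T] = [P_-^A,T]P_-^{A'}$ (equivalently $[P_-^A,T]P_+^{A'}=0$), which holds because $[P_-^A,T]P_+^{[A]_1^\Theta}=0$ and $A'=[A]_1^\Theta$.

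The second step is to estimate and iterate. Taking norms in \eqref{eq:telescope-plan} and using $\norm{T_\cX^{-1}}_{\Ran\cX}\le\eta^{-1}$ together with hypothesis (ii), $\norm{[P_-^A,T]}\le\gamma$ (valid since $A$ is connected in $\Theta$), gives
\be\label{eq:onestep-plan}
\norm{P_-^{A}\,T_\cX^{-1}\,P_+^{B}}_{\Ran\cX}\;\le\;\gamma\,\eta^{-1}\,\norm{P_-^{A'}\,T_\cX^{-1}\,P_+^{B}}_{\Ran\cX}.
\ee
Now $A' = [A]^\Theta_1$ is connected in $\Theta$ and $A'\subset B$ with $\rho^\Theta(A',B)=\rho-1$ (enlarging $A$ by one step uses up one unit of the available radius), so the hypotheses of the lemma are again satisfied with $A$ replaced by $A'$. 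Iterating \eqref{eq:onestep-plan} exactly $\rho=\rho^\Theta(A,B)$ times produces a factor $(\gamma\eta^{-1})^{\rho}$ and reduces $A$ to $[A]^\Theta_\rho$, at which point $[A]^\Theta_\rho\subset B$ still; one final application of $\norm{T_\cX^{-1}}\le\eta^{-1}$ and $\norm{P_-^{\,\cdot\,}}\le 1$, $\norm{P_+^B}\le 1$ bounds the remaining term by $\eta^{-1}$. This gives
\be\label{eq:final-plan}
\norm{P_-^{A}\,T_\cX^{-1}\,P_+^{B}}_{\Ran\cX}\;\le\;\eta^{-1}\,(\gamma\eta^{-1})^{\rho^\Theta(A,B)}\;=\;\eta^{-1}\,\e^{-m\,\rho^\Theta(A,B)},\qquad m=\ln(\gamma^{-1}\eta),
\ee
which is \eqref{eq:localitybndet}. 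The case $\rho^\Theta(A,B)=\infty$ is handled separately: then $[A]^\Theta_\infty\subset B$, $P_+^B = P_+^B P_+^{[A]_\infty^\Theta}$, and since $H^\Theta$ (hence $T^\Theta$, hence $T$, modulo the $T^{\Theta^c}$ piece which acts trivially on $\Theta$) commutes with $P_\pm^{[A]^\Theta_\infty}$ by \eqref{HM0}, one gets $P_-^A T_\cX^{-1} P_+^B = 0$, consistent with the right-hand side being $0$.

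The step I expect to require the most care is the algebraic manipulation in \eqref{eq:telescope-plan}: one must justify, cleanly and on $\Ran\cX$, both (a) the commutation identity $P_-^A T_\cX^{-1} = T_\cX^{-1} [P_-^A,T] T_\cX^{-1} + T_\cX^{-1} P_-^A$ — this is just $T_\cX^{-1}(T P_-^A - [P_-^A,T]\,\text{sign}) T_\cX^{-1}$ bookkeeping but needs $[\cX,P_-^A]=0$ so that $P_-^A$ preserves $\Ran\cX$ — and, more delicately, (b) that after the first commutator extraction the leftover $T_\cX^{-1} P_-^A$ term does \emph{not} contribute, because $P_-^A P_+^{B} \ne 0$ in general; the resolution is that one should instead iterate on $P_-^{A'} T_\cX^{-1} P_+^B$ directly, i.e. the recursion variable is the \emph{inner} set, and the identity to set up is $P_-^{A}\,T_\cX^{-1}\,P_+^B = T_\cX^{-1}[P_-^A,T]\,T_\cX^{-1}\,P_+^B$ obtained from $[T_\cX^{-1},P_-^A] = -T_\cX^{-1}[T,P_-^A]T_\cX^{-1} = T_\cX^{-1}[P_-^A,T]T_\cX^{-1}$ together with $P_-^A T_\cX^{-1} P_+^B - T_\cX^{-1} P_+^B P_-^A\!\restriction$, where the term $T_\cX^{-1} P_-^A P_+^B$ must be shown to vanish or be absorbed — here one uses that $A\subset B$ forces, via the explicit form \eqref{P+-S} of the projections, $P_-^A P_+^B = 0$ (a particle in $A\subset B$ is incompatible with no particles in $B$). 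That last observation, $P_-^A P_+^B = 0$ whenever $A\subset B$, is the linchpin that makes the "extra" term disappear and the telescoping close.
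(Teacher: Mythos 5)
Your proposal is correct and follows essentially the same route as the paper: the identity $P_-^{A}T^{-1}P_+^{B}=T^{-1}[T,P_-^{A}]P_-^{[A]_1^{\Theta}}T^{-1}P_+^{B}$ (resting on $P_-^{A}P_+^{B}=0$ for $A\subset B$ and on hypothesis (i) to insert $P_-^{[A]_1^\Theta}$), iterated $\rho^\Theta(A,B)$ times with hypotheses (ii)--(iii) supplying the factor $(\gamma\eta^{-1})$ per step. The only discrepancies are cosmetic: a sign slip in \eqref{eq:telescope-plan} (the correct identity carries $[T,P_-^A]$, not $[P_-^A,T]$, though this is irrelevant for norms), and your handling of general $\cX$ by restriction to $\Ran\cX$ versus the paper's substitution $\wtilde T=T\cX+\eta(1-\cX)$ — both are valid.
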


\begin{proof}

 We consider first the case $\cX=\mathds{1}_{\cH_\La}$.
Let $A\subset B\subset {\Theta}$, with  $A$  connected in ${\Theta}$. 
 Let  $1\le t \le \rho^{\Theta} (A,B)$, so    $[A]_{t}^\Theta \subset B$. We have
\be 
P_-^{  A}\,T^{-1}\,P_+^{ B}=T^{-1}[T,P_-^{  A}]T^{-1}\,P_+^{B}={T^{-1}[T,P_-^{ A}]} P_-^{ [A]^\Theta_1}T^{-1}\,P_+^{ B}, 
\ee
 using condition (i) of  the Lemma. Proceeding recursively, we get
\be
P_-^{ A}\,T^{-1}\,P_+^{B}= \pa{\prod_{p=0}^{t-1}{T^{-1}[T,P_-^{ [A]_p^\Theta}]}} P_-^{ [A]^\Theta_{t}}T^{-1}\,P_+^{B}.
\ee
 Since  $A$ is  connected in $\Theta$,   $[ A]_{r}^{\Theta}$, $r=1,2,\ldots,t$, are also connected in $\Theta$. Using assumptions (ii) and (iii), we get 
\be \label{PATt}
\norm{P_-^{A}\,T^{-1}\,P_+^{ B}}\le   \pa{ \gamma    \eta^{-1}}^t \eta^{-1}.
\ee
Since \eq{PATt} holds for all $1\le t \le \rho^{\Theta} (A,B)$, we get
{   \be \label{eq:localitybndetI}
\norm{P_-^{A}\,T^{-1}\,P_+^{B}} \le \eta^{-1}  \e^{-m \rho^{\Theta} (A,B)}   , \qtx{with} 
m=\ln\pa{{\gamma^{-1}}\eta }.
\ee }

If condition (iii) holds with a projection  $\cX\in  \cA_{\Lambda}$  such  that  $[ \cX,T]=0$ and $[\cX, P_\pm^K]=0 $ for all $K\subset {\Theta}$, 
 then
$\wtilde T= T\cX + \eta (1-\cX)$ satisfies conditions (i), (ii),  and condition (iii)  with  $\cX=\mathds{1}_{\cH_\La}$, and  the estimate  \eq{eq:localitybndetI} for $\wtilde T$ implies  \eq{eq:localitybndet}. 
\end{proof}

\begin{remark} \label{reamrkHquasi}  Lemma \ref{lem:locdet'} yields quasi-locality for the resolvent of the operator $H^\La$.
The operator  $H^\La-z$ satisfies the hypotheses of Lemma~\ref{lem:locdet'}   for $z\notin \sigma(H^\La)$, with 
$\Theta=\La$,  $\gamma = \frac 1 {\Delta}$  (use   \eq{PGamma1}),  $\cX=\mathds{1}_{\cH_\La}$,  and  $\eta= \dist (z,\sigma(H^\La))$.  It follows that,  with $R_z^\La= (H^\La -z)^{-1}$, for all $A\subset B\subset \La$,
 we have
 \be \label{eq:localitybndetfree}
\norm{P_-^{A}R_z^ \La P_+^{B}}\le \pa{\dist (z,\sigma(H^\La)) }^{-1}  \e^{-m \rho^{\Theta} (A,B)}   , \sqtx{with} 
m=\ln\pa{\Delta \dist (z,\sigma(H^\La))}.
\ee 
\end{remark}

From now on we fix  $\Delta_0>5$, $ \lambda_0 >0$, and assume $\Delta \ge \Delta_0$ and $\lambda \ge \lambda_0$. The constants will depend on $\Delta_0$ and $\lambda_0$.

  Given  $m \in \N^0$, we set $Q_m^\Lambda=\chi_{\set{m}}\pa{\cW^\Lambda}$, the orthogonal projection onto  configurations  with exactly $m$ clusters, and let  $Q_B^\Lambda=\chi_{B}\pa{\cW^\Lambda}=\sum_{m\in B} Q_m^\Lambda$  for $ B\subset \N^0$. Note that  $Q_0^\Lambda=  P_+^\Lambda$ and  $Q_\N^\Lambda= \chi_{\N}(\cN^\Lambda)$.
 For $k\in \N$, we set 
\be\label{QkhatQ}
Q_{\le k}^\Lambda   =Q_{\set{1,2,\ldots,k}}^\Lambda =\sum_{ m=1}^k Q_m^\Lambda \qtx{and} 
\what Q_{\le k}^\Lambda   =Q_{\le k}^\Lambda + \tfrac {k+1} k Q_0^\Lambda.
\ee
We also set  
\be \label{eq:compH'}
\what  H_0^{ \Lambda}&=H^{\Lambda}+\tfd Q_0^\Lambda,\\
\what  H_k^{ \Lambda}&=H^{\Lambda}+{k}\tfd \what  Q_{\le k}^{\Lambda} \qtx{for}  k\in \N.
\ee 
We use the notation
\be
\what  R^{\Lambda}_{k,E}&= \pa{\what  H_k^\La  -E}^{-1}  \mqtx{for} E \notin \sigma(\what  H_k^\La), \ k\in \N^0.
\ee

 It follows from \eqref{H0W} and \eq{Ikle} that   for $k\in \N^0$ we have
 \be \label{eq:hatH1'}
\what  H_k^{ \Lambda}\ge  \pa{k+1} \tfd    \qtx{and}  \pa{\what  H_k^{ \Lambda}-E}  \ge \tfrac 1{4}\tfd  \mqtx{for} E \in I_{\le k}.
\ee

For   $k\in \N^0$  and $E\in I_{\le k}$,
 the operator $T= \what  H_k^{ \Lambda}-E$   satisfies the assumptions of Lemma \ref{lem:locdet'} with $\Theta=\La$, 
$\gamma = \frac 1 {\Delta}$,    $\cX=\mathds{1}_{\cH_\La}$, and 
$\eta= \frac 1{4}\nfd $ (see  \eq{eq:hatH1'}). In this case
 $  m =\ln   \frac  {\Delta-1}{4}$, and  hence for  $A\subset B\subset {\La}$, 
  \eq{eq:localitybndet} yields 
\be  \label{decest}
\norm{P_-^{A}\what  R^{\Lambda}_{k,E}P_+^{B}}\le \tfrac  4  {\fd}\e^{-\pa{\ln  \frac {\Delta-1} {4}}\rho^{\La} (A,B)}.
\ee

To have decay in \eq{decest}, we need $ \frac {\Delta-1} {4}>1$, that is, $\Delta >5 $. In the proof of Theorem~\ref{thm:maintech}, we will fix $\Delta_0 >  5 $ and $\lambda_0>0$, and require $\Delta \ge \Delta_0$ and $\lambda \ge \lambda_0$. In this case, we have $\tfrac 4  {\fd} \le  \tfrac 4  {1-\frac 1 {\Delta_0}} $ and $ \ln  \tfrac {\Delta-1} 4\ge  \ln  \tfrac {\Delta_0-1} 4$, so we have
\be  \label{CT}
\norm{P_-^{A}\what  R^{\Lambda}_{k,E}P_+^{B}}\le C_0  \e^{-m_0 \rho^\La (A,B)},
 \sqtx{with} C_0=  \tfrac 4  {1-\frac 1 {\Delta_0}}, \; m_0= \ln  \tfrac {\Delta_0-1} 4>0.
\ee

  It follows from \eq{HM0}, which also holds for the operator $\what  H_k^{ \Lambda}$,  that  
\be\label{PMPtildeM}
P_-^M R_{E}^{\Lambda} P_+^{[M]^\La_\infty}=0 \qtx{and}P_-^M\what  R_{{k,E}}^{\Lambda} P_+^{[M]^\La_\infty}=0 \qtx{for} M\subset \La .
\ee

\begin{remark} \label{remarkDelta5} We will prove Theorem~\ref{thm:maintech} with $\Delta_0> 5$ to simplify our analysis. The proof  can be extended to  arbitrary $\Delta_0> 1$ with  minor modifications.  Specifically,  for $1<\Delta_0 \le  5$ we need to improve the decay rate in \eq{decest}, which is derived from the lower bound in \eq{eq:hatH1'}. To do so, we would  replace $\what H_k^{ \Lambda}$ in the proof by  $\what H_{k+r}^{ \Lambda}$, where $r\in \N$, so  \eq{eq:hatH1'} yields  $\what H_{k+r}^{ \Lambda}-E \ge (r+ \frac 1 4)\tfd$ for $E\in  I_{\le k}$, leading to $m_0= \ln \pa{(r + \frac 1 4) \pa{\Delta_0-1}}>0 $ for an appropriate choice of $r$.
\end{remark}

\subsection{An a-priori estimate}
The first step toward the proof of Theorem \ref{thm:maintech} is to understand why the expression on the left hand side of \eqref{eq:mainbnd} is actually finite. A useful technical device  for this purpose is the following bound, where
$\norm{T}_{HS}$ denotes the Hilbert-Schmidt norm of the operator $T$.

\begin{lemma}[A-priori estimate]\label{lem:apri}     Let $i,j\in\Lambda$  ($i=j$ is allowed) and let $T_{1},T_2$ be a pair of  Hilbert-Schmidt operators on $\cH_\La$ that are $\omega_{\set{i,j}}$-independent. Then we have 
\beq\label{eq:weak1-1} 
\E_{\set{i,j}}\pa{\norm{T_1\cN_iR_E^\Lambda\cN_j T_2}_{HS}^s}  \le
C\lambda^{-s}\norm{T_1}^s_{HS}\norm{T_2}^s_{HS} \sqtx{for all} E\in \R \sqtx{and} s\in (0,1) .
\eeq
\end{lemma}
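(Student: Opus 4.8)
The plan is to reduce the fractional-moment bound to a rank-one type computation by exploiting the rank-one dependence of $H^\Lambda_\omega$ on each $\omega_i$. The key observation is that $H^\Lambda_\omega$ depends on $\omega_i$ only through the term $\lambda\omega_i\cN_i$, and $\cN_i$ is a rank-one operator on $\C^2$, hence of low rank (equal to $\dim\cH_\Lambda\up{\cdot}$, but more usefully: $\cN_i$ has the form $|\phi\rangle\langle\phi|$ in each fiber). So I would first treat the diagonal case $i=j$ and then the off-diagonal case $i\ne j$.

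\emph{Step 1: the case $i=j$.} Fix the random variables $\omega_{\set{j}^c}$ and write $H^\Lambda = H^\Lambda_0 + \lambda\omega_j\cN_j$ where $H^\Lambda_0$ collects all terms not involving $\omega_j$. Since $\norm{T_1\cN_j R_E^\Lambda \cN_j T_2}_{HS}\le \norm{T_1}_{HS}\norm{\cN_j R_E^\Lambda \cN_j}\,\norm{T_2}_{HS}$, it suffices to bound $\E_{\set{j}}\set{\norm{\cN_j R_E^\Lambda\cN_j}^s}$. Now $\cN_j R_E^\Lambda\cN_j$ is, restricted to $\Ran\cN_j$, the Schur complement / resolvent of a rank-one perturbation: by the Krein formula $\cN_j R_E^\Lambda\cN_j = \cN_j(H_0^\Lambda+\lambda\omega_j\cN_j - E)^{-1}\cN_j$, and on $\Ran\cN_j$ this equals $(\lambda\omega_j + M_j(E))^{-1}$ where $M_j(E)=\bigl(\cN_j(H_0^\Lambda-E)^{-1}\cN_j\big|_{\Ran\cN_j}\bigr)^{-1}$ is $\omega_j$-independent (a "Weyl function"). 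Thus $\norm{\cN_j R_E^\Lambda\cN_j}\le |\lambda\omega_j + m(E)|^{-1}$ for the relevant scalar/eigenvalue $m(E)$ of $M_j(E)$ (self-adjointness makes $M_j(E)$ real symmetric, so diagonalize). Then for $s<1$,
\[
\E_{\set{j}}\set{\abs{\lambda\omega_j+m}^{-s}}=\lambda^{-s}\int \abs{\omega_j+m/\lambda}^{-s}\rho(\omega_j)\,d\omega_j\le \lambda^{-s}\,\norm{\rho}_\infty\int_{-1}^{1}\abs{t}^{-s}\,dt = C\lambda^{-s},
\]
using that $\mu$ has a bounded density $\rho$ and $s<1$, uniformly in $m\in\R$ and in $E$. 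This gives the claim with $\norm{T_1}_{HS}\norm{T_2}_{HS}$ in place of $\norm{T_1}^s_{HS}\norm{T_2}^s_{HS}$ — but since $\norm{T_i}_{HS}$ enters to the first power and we may absorb the power $s$ by the elementary inequality $\norm{T_1\cN_j R\cN_j T_2}_{HS}^s\le \norm{T_1}_{HS}^s\norm{\cN_j R\cN_j}^s\norm{T_2}_{HS}^s$ before taking expectations, the stated form follows directly.

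\emph{Step 2: the case $i\ne j$.} Here I condition on $\omega_{\set{i,j}^c}$ and integrate in $\omega_i,\omega_j$ separately. Writing $G = T_1\cN_i R_E^\Lambda\cN_j T_2$, I first integrate in $\omega_i$: by a decoupling/rank-one argument analogous to Step 1 applied to the site $i$ (the resolvent identity $R^\Lambda = R^{\Lambda,(i)} - \lambda\omega_i R^\Lambda\cN_i R^{\Lambda,(i)}$, or directly Krein's formula), $\cN_i R_E^\Lambda = (\lambda\omega_i + M_i)^{-1}\cN_i R^{\Lambda,(i)}_E$ in an appropriate sense on $\Ran\cN_i$, with $M_i$ $\omega_i$-independent; crucially everything to the right of $(\lambda\omega_i+M_i)^{-1}$ is $\omega_i$-independent. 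Taking $s$-th moments and using $\E_{\set i}\set{\abs{\lambda\omega_i+M_i}^{-s}}\le C\lambda^{-s}$ (spectral calculus for the matrix $M_i$, then the bounded-density estimate as above) leaves a factor that still contains $\cN_j R^{\Lambda,(i)}_E\cN_j$, to which we apply Step 1's argument in the variable $\omega_j$. Carrying the Hilbert–Schmidt norms of $T_1,T_2$ along as operator-norm-times-HS-norm bounds, one collects the two factors $C\lambda^{-s}$ and $\lambda^{-s}$... but we only want one power of $\lambda^{-s}$. The fix: only one of the two integrations needs to produce the singular $\lambda^{-s}$ gain — the other resolvent factor $\cN_j R^{\Lambda,(i)}_E\cN_j$ (or $\cN_i$ factor) can simply be bounded, after the first integration, using that after fixing the sign of the singularity we may instead bound one of the two resolvent sandwiches trivially by its Hilbert–Schmidt norm and use the a-priori finiteness; concretely, bound $\norm{\cN_i R^\Lambda\cN_j}_{HS}^s\le \norm{\cN_i R^\Lambda\cN_i}^{s/2}\norm{\cN_i R^\Lambda\cN_j}^{?}$...

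\emph{Main obstacle.} The genuinely delicate point is getting exactly one power of $\lambda^{-s}$ (not two) in the $i\ne j$ case while keeping both $T_1,T_2$ to the first power in Hilbert–Schmidt norm. The clean way is: do the $\omega_i$-integration to extract $\lambda^{-s}$, and for the leftover $\cN_j$-sandwiched resolvent use the deterministic bound $\norm{\cN_j R^{\Lambda,(i)}_E}\le \norm{(\lambda\omega_j+M_j)^{-1}}\cdot\norm{\cdots}$ is unavailable (no deterministic resolvent bound inside the spectrum!), so instead one keeps it as $\norm{T_1}_{HS}\norm{T_2}_{HS}$ times $\|\cN_i(H^{\Lambda,(i)}-E)^{-1}\|$-type quantities and controls the remaining resolvent by Hilbert–Schmidt estimates combined with $\|\cN_j R\cN_j\|\le C/\operatorname{dist}(E,\sigma)$ only where available — but the cleanest route, which I expect the authors take, is to symmetrize: write $\norm{G}_{HS}^s = \norm{G}_{HS}^{s}$ and use $\norm{T_1\cN_i R\cN_j T_2}_{HS}\le \norm{T_1\cN_i R\cN_i}_{HS}^{1/2}\norm{\cN_j R\cN_j T_2}_{HS}^{1/2}\cdot(\text{something})$ — or more simply use $|\langle\psi,R\phi\rangle|^s \le |\langle\psi,R\psi\rangle|^{s/2}|\langle\phi,R\phi\rangle|^{s/2}$-style Cauchy–Schwarz for resolvents (valid for self-adjoint $R$ away from the real axis, then limit), reducing the off-diagonal case to two diagonal ones, each contributing $\lambda^{-s/2}$, multiplying to $\lambda^{-s}$. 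This Cauchy–Schwarz-for-resolvents step, together with justifying the a.e.-in-$E$ limit onto the real axis, is where the care is needed; the rest is the standard bounded-density fractional-moment estimate $\int|t|^{-s}\,dt<\infty$ for $s<1$.
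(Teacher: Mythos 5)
Your proposal diverges from the paper's proof, which is a one\--liner: the authors simply invoke \cite[Proposition 3.2]{AENSS} (a two\--parameter spectral\--averaging / weak\--$L^1$ bound for $\norm{T_1U_1(H-z)^{-1}U_2T_2}_{HS}$ with Hilbert--Schmidt ``test operators'' on both sides) together with the layer\--cake formula $\E(X^s)=\int_0^\infty\P(X>t^{1/s})\,dt$. More importantly, your route contains a genuine gap in Step~1. The Schur\--complement identity $\cN_jR_E^\La\cN_j|_{\Ran\cN_j}=(\lambda\omega_j+K(E))^{-1}$ with $K(E)$ self\--adjoint and $\omega_j$\--independent is correct, but the reduction $\norm{T_1\cN_jR\cN_jT_2}_{HS}\le\norm{T_1}_{HS}\norm{\cN_jR\cN_j}\,\norm{T_2}_{HS}$ throws away exactly the structure that makes the constant $\La$\--independent. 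Indeed $\norm{\cN_jR_E\cN_j}=\dist\pa{-\lambda\omega_j,\sigma(K(E))}^{-1}$, and the eigenvalue $m(E)$ ``relevant'' for this norm depends on $\omega_j$ (it is the eigenvalue of $K(E)$ nearest to $-\lambda\omega_j$); your estimate $\E\abs{\lambda\omega_j+m}^{-s}\le C\lambda^{-s}$ is valid only for \emph{fixed} $m$. Making it rigorous requires a union bound over all $2^{\abs{\La}-1}$ eigenvalues of $K(E)$, producing a constant exponential in $\abs{\La}$ (and one can arrange $\sigma(K(E))$ so that no dimension\--free bound on $\E\norm{(\lambda\omega_j+K)^{-1}}^s$ holds). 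Since the lemma is used in \eqref{eq:apr}--\eqref{eq:aprAB} with polynomial\--in\--$\abs{\La}$ bookkeeping, a dimension\--dependent constant is fatal. The whole point of sandwiching by Hilbert--Schmidt operators $T_1,T_2$ is that the resulting weak\--$L^1$ bound $\P\set{\norm{T_1\cN_iR\cN_jT_2}_{HS}>t}\le C(\lambda t)^{-1}\norm{T_1}_{HS}\norm{T_2}_{HS}$ \emph{is} dimension\--free; this rests on Naboko's weak\--type bound for the boundary values of Hilbert--Schmidt\--valued Herglotz functions, which is the nontrivial input your argument would have to reproduce and does not.

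Step~2 is not yet a proof: you correctly identify the main obstacle (extracting a single factor $\lambda^{-s}$ while keeping both $T_1,T_2$ in Hilbert--Schmidt norm) but then sketch three mutually inconsistent fixes without completing any of them. The Cauchy--Schwarz\--for\--resolvents idea you end on is indeed close in spirit to how the two\--parameter bound of \cite{AENSS} is proved (two successive one\--parameter averagings, each applied to a Herglotz function taking values in the Hilbert--Schmidt class, each contributing a half\--power), but it must be carried out at the level of the HS\--sandwiched object, not after reducing to operator norms. If you want a self\--contained proof rather than a citation, the correct path is to prove the weak\--$L^1$ statement for $\norm{T_1\cN_iR_z^\La\cN_jT_2}_{HS}$ directly and then apply layer\--cake, as the paper does.
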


The lemma  follows from \cite[Proposition 3.2]{AENSS}, used with $U_1=\cN_j$, $U_2=\cN_k$ there,   and the layer-cake representation for a non-negative random variable $X_\omega$:
$\E(X_\omega^s)=\int_0^\infty \P(X_\omega>t^{1/s})\, dt$ for $s\in (0,1)$. 

The  Hilbert-Schmidt operators for Lemma~\ref{lem:apri} are provided by the following result.

 \begin{lemma}\label{lem:bndHSQ'}
 Let $k\in \N$. Then
\begin{align}\label{trXk}
\norm{Q_{\le k}^{\Lambda}}_{HS}&\le \sqrt{k} \abs{\Lambda}^{k},\\
\label{trkH}
\tr \chi_{\what I_{\le k}}(H^\La)&\le  k\abs{\Lambda}^{2k}+1.
\end{align}
\end{lemma}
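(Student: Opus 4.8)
The plan is to prove the two bounds essentially by counting. For \eqref{trXk}, recall that $Q_{\le k}^\Lambda$ is the orthogonal projection onto $\bigoplus_{m=1}^k \Ran Q_m^\Lambda$, so $\norm{Q_{\le k}^\Lambda}_{HS}^2 = \tr Q_{\le k}^\Lambda = \sum_{m=1}^k \dim \Ran Q_m^\Lambda$ equals the number of subsets $A\subset\Lambda$ having between $1$ and $k$ clusters (connected components) in $\Lambda$. Since each such $A$ is determined by choosing its clusters, and a cluster in a subset of $\Z$ is an interval contained in $\Lambda$ (hence determined by its two endpoints in $\Lambda$, with repetition allowed for singletons), a configuration with exactly $m$ clusters is specified by at most $\abs{\Lambda}^{2m}$ choices; a crude bound gives $\dim\Ran Q_m^\Lambda \le \abs{\Lambda}^{2m}$. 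Actually a cleaner route: a configuration with $m$ clusters in a connected $\Lambda$ is determined by its $2m$ ordered endpoints, but even the trivial estimate "pick $m$ of the $\le\abs{\Lambda}$ maximal intervals" can be made to work — I would simply bound $\dim \Ran Q_m^\Lambda\le \binom{\abs{\Lambda}}{m}\cdot(\text{something})$, or, most economically, note $\dim\Ran Q_m^\Lambda\le \abs{\Lambda}^{2m}\le \abs{\Lambda}^{2k}$ and hence $\tr Q_{\le k}^\Lambda\le k\abs{\Lambda}^{2k}$, giving $\norm{Q_{\le k}^\Lambda}_{HS}\le\sqrt{k}\,\abs{\Lambda}^k$. (If $\Lambda$ is disconnected the same counting applies componentwise and only improves.) The precise exponent to chase is $2m$ vs $m$; I would check which combinatorial estimate lands exactly on $\sqrt k\,\abs\Lambda^k$ and use that one.

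For \eqref{trkH}, the point is that the spectral subspace $\Ran\chi_{\what I_{\le k}}(H^\Lambda)$ is contained in $\Ran Q_{\le k}^\Lambda \oplus \C\Omega_\Lambda$. Indeed, from \eqref{H0W} we have $H^\Lambda \ge (1-\tfrac1\Delta)\cW^\Lambda$, so on the sector $Q_m^\Lambda$ with $m\ge k+1$ one has $H^\Lambda \ge (k+1)(1-\tfrac1\Delta)$, which is the right endpoint of $\what I_{\le k}=\bigl(-\infty,(k+1)(1-\tfrac1\Delta)\bigr)$; hence $\chi_{\what I_{\le k}}(H^\Lambda)$ kills all sectors $Q_m^\Lambda$ with $m\ge k+1$. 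Wait — I must be careful about whether the inequality is strict, but since $\what I_{\le k}$ is a half-open interval open at the right endpoint, $H^\Lambda\ge (k+1)(1-\tfrac1\Delta)$ on those sectors indeed places their spectrum outside $\what I_{\le k}$. The sector $Q_0^\Lambda = P_+^\Lambda$ is one-dimensional (spanned by $\Omega_\Lambda$, eigenvalue $0$, which lies in $\what I_{\le k}$). Therefore
\[
\tr\chi_{\what I_{\le k}}(H^\Lambda)\le \tr Q_{\le k}^\Lambda + \tr Q_0^\Lambda \le k\abs{\Lambda}^{2k}+1,
\]
using the bound just established for $\tr Q_{\le k}^\Lambda$.

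The only real obstacle is getting the combinatorial exponent in \eqref{trXk} exactly right (the statement claims $\sqrt k\,\abs\Lambda^k$, i.e. $\tr Q_{\le k}^\Lambda\le k\abs\Lambda^{2k}$, so the individual sector bound needed is $\dim\Ran Q_m^\Lambda\le\abs\Lambda^{2m}$, or better $\le\abs\Lambda^{2k}$ summed over $m\le k$); this is elementary since a subset of $\Lambda$ with $m$ clusters is encoded by $m$ disjoint subintervals of $\Lambda$, and there are at most $\abs\Lambda^{2m}$ ways to choose the left and right endpoints of $m$ labelled intervals. Everything else is a one-line spectral-projection argument once \eqref{H0W} is invoked. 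I would present \eqref{trXk} first, then deduce \eqref{trkH} as above.
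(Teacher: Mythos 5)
Your proof of \eqref{trXk} is fine: bounding $\dim\Ran Q_m^\Lambda\le\abs{\Lambda}^{2m}$ by encoding each of the $m$ clusters by its two endpoints, summing over $m\le k$ to get $\tr Q_{\le k}^\Lambda\le k\abs{\Lambda}^{2k}$, and taking a square root is correct and lands exactly on $\sqrt{k}\,\abs{\Lambda}^{k}$. (The paper counts slightly differently --- $\tr\chi_N^\La Q_m^\La\le\abs{\La}^m N^{m-1}$, then sums over $N$ --- but both routes give the same bound.)

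Your argument for \eqref{trkH}, however, contains a genuine error. You claim that $\Ran\chi_{\what I_{\le k}}(H^\Lambda)\subset\Ran Q_{\le k}^\Lambda\oplus\C\Omega_\Lambda$, i.e.\ that $\chi_{\what I_{\le k}}(H^\Lambda)$ ``kills'' the sectors $\Ran Q_m^\Lambda$ with $m\ge k+1$. This is false in general: $H^\Lambda$ commutes with $\cN_\Lambda$ but \emph{not} with $\cW^\Lambda$ (the hopping term $\bD^\Lambda$ changes the number of clusters), so the cluster sectors are not invariant under $H^\Lambda$ and a low-energy eigenvector will generically have nonzero components in $\Ran Q_m^\Lambda$ for $m\ge k+1$. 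What the operator inequality $H^\Lambda\ge\tfd\cW^\Lambda$ actually gives you is the quadratic-form bound $\scal{\psi,H^\Lambda\psi}\ge(k+1)\tfd\norm{\psi}^2$ for $\psi\in\Ran Q_{\ge k+1}^\Lambda$; from this the correct conclusion, via the min--max principle, is that the \emph{number} of eigenvalues of $H^\Lambda$ below $(k+1)\tfd$ is at most the codimension of $\Ran Q_{\ge k+1}^\Lambda$, namely $\tr Q_{\le k}^\Lambda+1$ --- not that the corresponding spectral subspace is orthogonal to $\Ran Q_{\ge k+1}^\Lambda$. With that repair your bound $\tr\chi_{\what I_{\le k}}(H^\Lambda)\le\tr Q_{\le k}^\Lambda+1\le k\abs{\Lambda}^{2k}+1$ is correct. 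The paper reaches the same conclusion by a different (equivalent in spirit) device: it adds the finite-rank perturbation $k\tfd\what Q_{\le k}^\Lambda$ to obtain $\what H_k^\Lambda\ge(k+1)\tfd$, so $\tr\chi_{\what I_{\le k}}(\what H_k^\Lambda)=0$, and then uses that the spectral shift is bounded by the rank of the perturbation. Either fix works; the range-containment claim as you wrote it does not.
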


\begin{proof}

For $m\ge 1$ and  $N \ge 1$ we have the rough estimate
\be 
\tr \chi^\La_N Q_{m}^{\Lambda}\le \abs{\Lambda}^m N^{m-1}.
\ee
Thus
\begin{align} 
\tr \chi^\La_N Q_{\le k}^{\Lambda}\le \sum_{m=1}^k \abs{\Lambda}^m N^{m-1}= \tfrac 1 N \tfrac {(\abs{\Lambda} N)^{k+1}-(\abs{\Lambda} N)}{(\abs{\Lambda} N)-1}\le  k \abs{\Lambda}^k N^{k-1}.
\end{align}
It follows that
\begin{align} 
\tr Q_{\le k}^{\Lambda}\le k \abs{\Lambda}^k \sum_{N=1}^{\abs{\Lambda}}    N^{k-1}\le k\abs{\Lambda}^{2k}. 
\end{align}

To prove \eq{trkH}, 
let $\what{H}^\La_k$ be as in \eq{eq:compH'},  and note  that \eq{eq:hatH1'}  implies 
$\tr \chi_{\what I_{\le k}}(\what{H}^\La_k)=0$.   Since the spectral shift is bounded by the rank of the perturbation,     it follows from \eq{eq:compH'}  that
\beq
\tr \chi_{\what I_{\le k}}(H^\La)\le  \tr \chi_{\what I_{\le k}}(\what{H}^\La_k)+\Rank \pa{ k \tfd \what Q_{\le k}^{\Lambda}}= \tr  \what Q_{\le k}^{\Lambda}= \tr  Q_{\le k}^{\Lambda}+1.
\eeq  
\end{proof}

Lemmas~\ref{lem:apri} and  \ref{lem:bndHSQ'} yield the a priori estimate
  \be\label{eq:apr}
\E_{\set{i,j}}{\norm{Q^\Lambda_{\le k}\cN_i  R^{\Lambda}_{E}\cN_j Q^\Lambda_{\le k}}_{HS}^s}\le C\lambda^{-s}k ^s\abs{\Lambda}^{2sk}  \sqtx{for all} i,j\in \La \sqtx{and} s\in (0,1) .
\ee 
More generally, we have
  \be\label{eq:aprAB}
\E_{\set{A\cup B}}{\norm{Q^\Lambda_{\le k}P_-^A R^{\Lambda}_{E}P_-^B Q^\Lambda_{\le k}}_{HS}^s}\le C\lambda^{-s}k ^s\abs{\Lambda}^{2sk}\abs{A}\abs{B}\qtx{for} \emptyset \ne A,B\subset\La.
\ee

Those  a priori estimates are only useful if we can "dress" the resolvent with factors of $Q^\Lambda_{\le k}$ on both sides. To be able to do so, we will decorate $R^{\Lambda}_{E}$ with  resolvents  of positive operators that satisfy the quasi-locality property.

\subsection{Dressing  resolvents with Hilbert-Schmidt operators} 
For $k=1,2,\ldots$, and $E\in I_{\le k}$,   we  use the resolvent identity 
 \be\label{eq:resmodl}
R_{E}^{\Lambda}=\what  R^{\Lambda}_{k,E}+ k\tfd R^{\Lambda}_{E} \what Q^\Lambda_{\le k}\what  R^{\Lambda}_{k,E}=\what  R^{\Lambda}_{k,E}+ k\tfd\what  R^{\Lambda}_{k,E}\what Q^\Lambda_{\le k}  R^{\Lambda}_{E}.
\ee
Using it twice we get 
 \be\label{eq:resmod'1}
R_{E}^{\Lambda}=\what  R^{\Lambda}_{k,E}+ k\tfd\what  R^{\Lambda}_{k,E} \what Q^\Lambda_{\le k} \what  R^{\Lambda}_{k,E}+k^2\tfd^2\what  R^{\Lambda}_{k,E}\what Q^\Lambda_{\le k}   R^{\Lambda}_{E} \what Q^\Lambda_{\le k}\what  R^{\Lambda}_{k,E}.
\ee

 We use the notation $(p)_+=\max\pa{p,0}$ for $p\in \R$.

\begin{lemma} \label{lem:inred}
 
 Let $\cX$ denote a spectral projection of $\cN_\La$ (say,  $\cX=\mathds{1}_{\mathcal H_{\La}}$ or $\cX= \chi_N^\La$).  Let $A\subset B\subset \La$, 
  and  $1\le t=\rho^\La\pa{A, B}<\infty$.   Let  $E\in I_{\le k}$ and let $m_0$ be as in \eq{CT}.
  
  \begin{enumerate}
\item We have the following estimate on operator norms:
 \be\label{Fpq}
&{\norm{\cX P_-^AR_E^\Lambda P_+^B}}\le  C_k\Big(\abs{\Lambda}\e^{- m_0 t}
 +    \sum_{p=-\abs{A}}^{\abs{\Lambda}} \sum_{q=-\abs{A}}^{\abs{\Lambda}}\e^{-m_0(p)_+}\e^{-m_0\pa{t-q-1}_+}\norm{\cX F^\La_{p,q}(E,A)}\Big),\\
&\text{where}\quad  F^\La_{p,q}(E,A)=Q^\Lambda_{\le k}P_+^{[A]^\La_{p}}P_-^{]A[^\La_{p}}  R^{\Lambda}_{E} P_+^{[A]^\La_q}P_-^{]A[^\La_{q}}Q^\Lambda_{\le k} \qtx{for} p,q\in \Z.
\ee 

\item We have the following estimates on Hilbert-Schmidt norms: 
\be\label{eq:inredaHS}
\norm{\cX P_-^AR_E^\Lambda P_+^B Q^\Lambda_{\le k}}_{HS}   \le 
C_k\Big(\abs{\Lambda}^{k}\e^{-m_0 t}+  \sum_{q=-\abs{A}}^{\abs{\Lambda}}  \e^{-m_0\pa{q}_+} {\norm{\cX  Q^\Lambda_{\le k}P_-^{]A[_{q}}R^{\Lambda}_{E}P_+^BQ^\Lambda_{\le k}}}_{HS}\Big).
\ee
Moreover, for $s\in (0,1)$ we have 
\be\label{eq:inredaHSE}
\E\pa{\norm{\cX P_-^AR_E^\Lambda P_+^B Q^\Lambda_{\le k}}_{HS} ^s} \le  C_{k,s} \abs{\Lambda}^{2sk+3}.
\ee
\end{enumerate}
\end{lemma}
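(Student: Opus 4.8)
The plan is to prove the two Hilbert--Schmidt estimates in part (ii) by exploiting the resolvent expansion \eq{eq:resmod'1} together with the deterministic Combes--Thomas bound \eq{CT} for $\what R^\Lambda_{k,E}$ and the a priori estimate \eq{eq:aprAB}. First I would establish \eq{eq:inredaHS}. Insert the identity \eq{eq:resmodl} (used once, in the form $R_E^\Lambda = \what R^\Lambda_{k,E} + k\sfd \what R^\Lambda_{k,E}\what Q^\Lambda_{\le k}R^\Lambda_E$) into $\cX P_-^A R_E^\Lambda P_+^B Q^\Lambda_{\le k}$. The first term $\cX P_-^A \what R^\Lambda_{k,E} P_+^B Q^\Lambda_{\le k}$ is controlled by \eq{CT}: its operator norm is $\le C_0 \e^{-m_0 t}$ (using $\rho^\Lambda(A,B)=t$), and multiplying by $\norm{Q^\Lambda_{\le k}}_{HS}\le \sqrt k\abs{\Lambda}^k$ from \eq{trXk} gives the $\abs{\Lambda}^k\e^{-m_0 t}$ term. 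For the second term, write $P_+^B$ on the far right unchanged and telescope the middle factor: since $\what R^\Lambda_{k,E}$ satisfies quasi-locality, decompose $\mathds 1 = \sum_{q} P_+^{[A]^\Lambda_q}P_-^{]A[^\Lambda_q} + P_+^{[A]^\Lambda_\infty}$ (the standard layer decomposition around $A$), and use \eq{PMPtildeM} to kill the $P_+^{[A]^\Lambda_\infty}$ piece against $\what R^\Lambda_{k,E}$ and also to restrict $q$ to the range $-\abs A \le q \le \abs\Lambda$. The factor $\cX P_-^A \what R^\Lambda_{k,E} P_+^{[A]^\Lambda_q}P_-^{]A[^\Lambda_q}$ then has norm $\le C_0\e^{-m_0(q)_+}$ by \eq{CT}, and what remains is $k\sfd P_-^{]A[^\Lambda_q}\what Q^\Lambda_{\le k} R^\Lambda_E P_+^B Q^\Lambda_{\le k}$; absorbing $\what Q^\Lambda_{\le k}$ into a bounded multiple of $Q^\Lambda_{\le k}$ (using the definition \eq{QkhatQ}, $\what Q^\Lambda_{\le k} = Q^\Lambda_{\le k} + \tfrac{k+1}{k}Q^\Lambda_0$, and $Q^\Lambda_0 = P_+^\Lambda$ which annihilates $P_-^{]A[^\Lambda_q}$ when $]A[^\Lambda_q\neq\emptyset$, or contributes a harmless term), we land on the Hilbert--Schmidt norm $\norm{\cX Q^\Lambda_{\le k}P_-^{]A[_q}R^\Lambda_E P_+^B Q^\Lambda_{\le k}}_{HS}$ appearing on the right of \eq{eq:inredaHS}, with the geometric weight $\e^{-m_0(q)_+}$ and a constant $C_k$ that swallows $k\sfd$, $C_0$, and the $\tfrac{k+1}{k}$.

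Next I would deduce \eq{eq:inredaHSE} from \eq{eq:inredaHS}. Taking $s$-th powers ($s\in(0,1)$, so $\|\sum a_j\|^s \le \sum \|a_j\|^s$) and then expectations, the first term gives $C_{k,s}\abs{\Lambda}^{sk}$, which is dominated by the claimed bound. For each summand in the sum over $q$, apply the a priori estimate: $P_-^{]A[_q^\Lambda} = P_-^{C_q}$ with $C_q = ]A[^\Lambda_q$, and by \eq{eq:aprAB} (with the sets $C_q$ and $B$, whose union has cardinality $\le \abs\Lambda$, so the $\omega$-independence hypothesis must be arranged by averaging only over $\omega_{C_q\cup B}$ — but since we are taking the full expectation $\E$, we may first condition on the complementary variables and apply \eq{eq:aprAB}) we get $\E\norm{Q^\Lambda_{\le k}P_-^{C_q}R^\Lambda_E P_+^B Q^\Lambda_{\le k}}_{HS}^s \le \E\norm{Q^\Lambda_{\le k}P_-^{C_q}R^\Lambda_E P_-^B Q^\Lambda_{\le k}}_{HS}^s \le C\lambda^{-s}k^s\abs{\Lambda}^{2sk}\abs{C_q}\abs B \le C_{k,s}\abs{\Lambda}^{2sk+2}$, using $P_+^B \le P_-^B + P_+^\Lambda$-type bounds or simply $\norm{\cdot P_+^B}_{HS} \le \norm{\cdot}_{HS}$ together with an analogue of \eq{eq:aprAB} with $P_+^B$ in place of $P_-^B$ (which follows identically from Lemma~\ref{lem:apri}, since only the left factor needs to be Hilbert--Schmidt and the projection $P_+^B$ is a bounded right multiplier); the operator $\cX$ is a bounded projection commuting appropriately, so it only helps. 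There are at most $\abs\Lambda + \abs A + 1 \le 2\abs\Lambda + 1$ values of $q$, and the geometric weights are bounded by $1$, so summing contributes one more factor of $\abs\Lambda$, yielding the total $C_{k,s}\abs{\Lambda}^{2sk+3}$.

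Finally, part (i), the operator-norm bound \eq{Fpq}, is proved by the same telescoping applied on \emph{both} sides of the resolvent: start from \eq{eq:resmod'1}, bound the pure $\what R^\Lambda_{k,E}$ term and the single-$\what Q^\Lambda_{\le k}$ term directly via \eq{CT} and \eq{PMPtildeM} (these produce the $\abs\Lambda\,\e^{-m_0 t}$ contribution after counting at most $O(\abs\Lambda)$ layers), and for the double-$\what Q^\Lambda_{\le k}$ term insert the layer decomposition of the identity on the left around $A$ (index $p$) and on the right around $A$ inside $B$ (index $q$, which must reach level $t = \rho^\Lambda(A,B)$ to get past $P_+^B$, hence the weight $\e^{-m_0(t-q-1)_+}$ from the Combes--Thomas factor $\cX P_-^{]A[^\Lambda_q}\what R^\Lambda_{k,E} P_+^B$ once $q+1 < t$, and the full contribution is retained when $q \ge t-1$). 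The leftover middle piece, after replacing $\what Q^\Lambda_{\le k}$ by $Q^\Lambda_{\le k}$ modulo $Q^\Lambda_0$ and using $P_-^{]A[^\Lambda_p}$, $P_-^{]A[^\Lambda_q}$ to rewrite things in terms of $P_+^{[A]^\Lambda_p}P_-^{]A[^\Lambda_p}$ etc., is exactly $F^\Lambda_{p,q}(E,A)$ (note $]A[^\Lambda_p = \partial^\Lambda_{ex}[A]^\Lambda_p$ for $p\ge 0$, consistent with the stated definition), giving \eq{Fpq} with $C_k$ absorbing $k^2\sfd^2 C_0^2$ and the layer-counting factors. The main obstacle I anticipate is the bookkeeping in part (i): correctly tracking which layer indices are forced by $P_+^B$ versus free, and verifying that the $\what Q^\Lambda_{\le k}\to Q^\Lambda_{\le k}$ replacement genuinely costs only a constant (the $Q^\Lambda_0 = P_+^\Lambda$ piece either vanishes against an adjacent $P_-$ projection or is itself a rank-one-type term contributing to the $\abs\Lambda\,\e^{-m_0 t}$ bucket); the Hilbert--Schmidt parts (ii) are comparatively routine once (i)'s telescoping scheme is in place, the only subtlety being the conditioning argument needed to legitimately invoke \eq{eq:aprAB}, whose hypothesis requires $\omega_{A\cup B}$-independence of the flanking Hilbert--Schmidt operators — here the flanking operators are deterministic ($Q^\Lambda_{\le k}$, projections), so the hypothesis holds outright and no conditioning is even needed.
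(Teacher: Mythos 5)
Your proposal follows essentially the same route as the paper's proof: the iterated resolvent identity \eqref{eq:resmod'1} (resp.\ \eqref{eq:resmodl} once for the HS bound), the quasi-locality bound \eqref{CT} for $\what R^{\Lambda}_{k,E}$, the layer decomposition \eqref{eq:stidena} together with \eqref{PMPtildeM} to produce the weights $\e^{-m_0(p)_+}\e^{-m_0(t-q-1)_+}$, and then \eqref{eq:aprAB} with $\norm{Q^\Lambda_{\le k}}_{HS}\le\sqrt{k}\abs{\Lambda}^k$ for \eqref{eq:inredaHSE}. The one slip is in your justification of the a priori step for \eqref{eq:inredaHSE}: the inequality $\E\norm{\cdots P_+^B\cdots}_{HS}^s\le\E\norm{\cdots P_-^B\cdots}_{HS}^s$ is false, and a one-sided version of Lemma~\ref{lem:apri} is not available (the spectral averaging in \cite[Prop.~3.2]{AENSS} needs $\cN$ factors on \emph{both} sides of $R^\Lambda_E$, even when $i=j$); the correct fix, which your ``$P_+^\Lambda$-type'' remark points toward, is $P_+^{B}Q^\Lambda_{\le k}=P_+^{B}P_-^{B^c}Q^\Lambda_{\le k}$ (since $Q^\Lambda_{\le k}Q^\Lambda_0=0$), which restores the $P_-$ structure needed to apply \eqref{eq:aprAB}.
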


\begin{proof}

 Let $A\subset B\subset \La$, $A$ connected in $\La$.    Since $\cX$ commutes with all the relevants operators, we will just do the proof for $\cX=I$.

Using \eqref{eq:resmod'1}, \eq{QkhatQ}, and \eqref{CT} we get
\be\label{eq:mainbndmo1}
{\norm{P_-^AR_E^\Lambda P_+^B}}\le C_0\e^{-m_0t}&+k{\norm{P_-^A\what  R^{\Lambda}_{k,E}Q^\Lambda_{\le k}\what  R^{\Lambda}_{k,E} P_+^B}}+ k^2 {\norm{P_-^A\what  R^{\Lambda}_{k,E} Q^\Lambda_{\le k}  R^{\Lambda}_{E}Q^\Lambda_{\le k}\what  R^{\Lambda}_{k,E}P_+^B}}.
\ee

  Using \eq{PMPtildeM} , \eqref{eq:stidena}, and  the fact that $Q^\Lambda_{\le k}$ commutes with  $P_\pm$ operators,  we get
\be\label{secter56}
&{\norm{P_-^A\what  R^{\Lambda}_{k,E} Q^\Lambda_{\le k+1}\what  R^{\Lambda}_{k,E} P_+^B}}\le \sum_{q=-\abs{A}}^{\abs{\Lambda}}  {\norm{D_q}}{\norm{E_q}},
\ee
where
\be\label{eq:mainbndmo3'}
D_q=P_-^A\what  R^{\Lambda}_{k,E} P_+^{[A]_{q}} \mbox{ and }  E_q=P_-^{]A[_{q}}\what  R^{\Lambda}_{k,E}P_+^B.
\ee
Using  \eq{eq:hatH1'}, \eqref{CT},  and   $ ]A[_{q} \subset B$ for $q+1\le t$, we get
 \be\label{eq:mainbndmo8833}
  {\norm{D_q}}\le C_0 \e^{-m_0(q)_+} \sqtx{and} {\norm{E_q}}\le C_0 \e^{-m_0\pa{t-q-1}_+} \sqtx{for all} q\in \Z.
 \ee
It follows that
\be\label{secter88}
{\norm{P_-^A\what  R^{\Lambda}_{k,E}Q^\Lambda_{\le k+1}\what  R^{\Lambda}_{k,E} P_+^B}}\le  C_0^2 \sum_{q=-\abs{A}}^{\abs{\Lambda}}\e^{-m_0(q)_+}\e^{-m_0\pa{t-q-1}_+}\le   C_0^\pr \abs{\Lambda}\e^{- m_0 t}.
\ee

This leaves us with the estimation of the last term in \eqref{eq:mainbndmo1}. To this end, we use \eq{PMPtildeM},  \eqref{eq:stidena}, and  \eq{eq:mainbndmo8833} to obtain
\be\label{eq:mainbndmo88}
&{\norm{P_-^A\what  R^{\Lambda}_{k,E}Q^\Lambda_{\le k+1}  R^{\Lambda}_{E}Q^\Lambda_{\le k+1}\what  R^{\Lambda}_{k,E} P_+^B}}\le  \sum_{p=-\abs{A}}^{\abs{\Lambda}} \sum_{q=-\abs{A}}^{\abs{\Lambda}} {\norm{D_p}}\norm{F_{p,q}}{\norm{E_q}}\\
& \qquad  \qquad\le    C _0^2  \sum_{p=-\abs{A}}^{\abs{\Lambda}} \sum_{q=-\abs{A}}^{\abs{\Lambda}}\e^{-m_0(p)_+}\e^{-m_0\pa{t-q-1}_+}\norm{F_{p,q}},
\ee
where $F_{p,q}=F_{p,q}^\La (E,A)$ is as in \eq{Fpq} for $p,q \in \Z$.

Combining  \eq{eq:mainbndmo1}, \eq{secter88}, and  \eq{eq:mainbndmo88} we get \eq{Fpq}.

  To prove  \eqref{eq:inredaHS}, we proceed as in \eq{eq:mainbndmo1} using \eqref{eq:resmodl},  exploit $\norm{T_1T_2}_{HS}\le \norm{T_1}\norm{T_2}_{HS}$, and use \eq{trXk}, obtaining
\be
&\norm{P_-^AR_E^\Lambda P_+^B Q^\Lambda_{\le k}}_{HS} \le  C_k \e^{-m_0 t}\abs{\La}^k
+ k{\norm{P_-^A\what  R^{\Lambda}_{k,E}Q^\Lambda_{\le k} R^{\Lambda}_{E} P_+^B Q^\Lambda_{\le k}}}_{HS}.
\ee
We then use \eq{PMPtildeM},  \eqref{eq:stidena}, and \eq{eq:mainbndmo8833} to get
\be
{\norm{P_-^A\what  R^{\Lambda}_{k,E}Q^\Lambda_{\le k}  R^{\Lambda}_{E} P_+^B Q^\Lambda_{\le k}}}_{HS}
&\le  \sum_{q=-\abs{A}}^{\abs{\Lambda}} {\norm{D_q}} {\norm{Q^\Lambda_{\le k}P_-^{]A[_{q}}R^{\Lambda}_{E}P_+^BQ^\Lambda_{\le k}}}_{HS}
\\  &\le  \sum_{q=-\abs{A}}^{\abs{\Lambda}} C_0 \e^{-m_0\pa{q}_+} {\norm{Q^\Lambda_{\le k}P_-^{]A[_{q}}R^{\Lambda}_{E}P_+^BQ^\Lambda_{\le k}}}_{HS}.
\ee

Given  $s\in (0,1)$, it follows from  \eqref{eq:inredaHS} and \eq{eq:aprAB}  that

\be\label{eq:inredaHSEcalc}
&\E\pa{\norm{\cX P_-^AR_E^\Lambda P_+^B Q^\Lambda_{\le k}}_{HS}^s}  
\le C_{k,s} \abs{\Lambda}^{2sk+3}.
\ee
\end{proof}

 \subsection{Large deviation estimate}

 Using a large deviation argument we get the following refinement of \eqref{eq:aprAB}. Recall we may assume $\rho^\La(A,B)<\infty $  in view of \eq{PAPB=0}.

\begin{lemma}\label{lem:LD}
Let $k\in \N$.  Let $A\subset B\subset \La$, 
 with $\rho^\La(A,B)<\infty $.  Given  $s\in (0,\frac 12)$,
there exist constants $C_{k,s} ,c_\mu >0$ such that  for all $E \in I_{\le k}$ we have
\be\label{eq:aprLD} 
\E\pa{\norm{\chi^\La_NQ^\Lambda_{\le k}P_-^A R^{\Lambda}_{E}P_+^B Q^\Lambda_{\le k}}_{HS}^s} \le  C_{k,s}   \abs{\Lambda}^{2(sk+1)} \pa{\e^{-  c_\mu N}+   \e^{-m_0 \rho^\La  (A,B)}}.
\ee 
 In particular,
\be\label{eq:aprLD1} 
\E{\norm{\chi^\La_NQ^\Lambda_{\le k} P_-^A R^{\Lambda}_{E}P_+^B Q^\Lambda_{\le k}}^s} \le C_{k,s}\abs{\Lambda}^{2(sk+1)}  \e^{-m_{0,\mu}\rho^\La  (A,B)} \mqtx{if} 8kN \ge\rho^\La  (A,B),
\ee
where  $m_ {0,\mu}>0$.   
\end{lemma}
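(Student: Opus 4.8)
\textit{Setup and reduction.} The plan is to split the probability space into a ``good'' event on which one has deterministic quasi-locality in the fixed particle-number sector, and its complement, whose probability is estimated by a Chernoff bound. Fix $k\in\N$ and $N$; we may assume $N\ge 1$ since $\chi^\La_N Q^\Lambda_{\le k}=0$ when $N=0$, and we may assume $\rho:=\rho^\La(A,B)\ge 1$ (the case $\rho=0$ reduces to the a priori bounds \eqref{eq:aprAB}, \eqref{eq:inredaHSE}). Since $[H^\La,\cN_\La]=0$, the projection $\cX=\chi^\La_N$ commutes with $R_E^\Lambda$, with every $P_\pm^K$, and with $Q^\Lambda_{\le k}$; this is what makes a sector-by-sector argument possible.

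\textit{The good event.} Define $\mathcal S_N=\bigl\{\omega:\ \lambda\,\omega\up{S}\ge k\tfd \text{ for all } S\subset\La \text{ with } |S|=N \text{ and } W^\La_S\le k\bigr\}$. By \eqref{H0W} and $V^\Lambda_\omega\ge0$ one has $H^\La\ge \tfd\cW^\Lambda+\lambda V^\Lambda_\omega$, and the right-hand side is diagonal in the canonical basis with eigenvalue $\tfd W^\La_S+\lambda\omega\up{S}$ on $\phi_S$. On $\mathcal S_N$, splitting the cases $W^\La_S\le k$ (where $W^\La_S\ge1$ and $\lambda\omega\up{S}\ge k\tfd$) and $W^\La_S\ge k+1$ gives $H^\La\ge(k+1)\tfd$ on $\Ran\chi^\La_N$, hence $\dist\bigl(E,\sigma(H^\La|_{\Ran\chi^\La_N})\bigr)\ge\tfrac14\tfd$ for every $E\in I_{\le k}$. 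I would then apply Lemma~\ref{lem:locdet'} with $\Theta=\La$, $T=H^\La-E$, $\cX=\chi^\La_N$, $\gamma=\tfrac1\Delta$ (from \eqref{PGamma1}), and $\eta=\tfrac14\tfd$, exactly as in Remark~\ref{reamrkHquasi}: this yields $\norm{\chi^\La_N P_-^A R_E^\Lambda P_+^B}\le C_0\,\e^{-m_0\rho}$ on $\mathcal S_N$, with the $C_0,m_0$ of \eqref{CT} (when $A$ is not connected, decompose $P_-^A$ over its components at the cost of a factor $\Upsilon^\La_A\le|\La|$). Multiplying by $\norm{Q^\Lambda_{\le k}}_{HS}\le\sqrt k\,|\La|^{k}$ from \eqref{trXk}, and moving $\chi^\La_N,Q^\Lambda_{\le k}$ past the projections, one obtains $\mathds 1_{\mathcal S_N}\,\norm{\chi^\La_N Q^\Lambda_{\le k}P_-^A R_E^\Lambda P_+^B Q^\Lambda_{\le k}}_{HS}^{s}\le C_k\,|\La|^{(k+1)s}\,\e^{-sm_0\rho}$.

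\textit{The bad event.} On $\mathcal S_N^c$ I would use Cauchy--Schwarz (valid since $2s<1$): the second moment is controlled by \eqref{eq:inredaHSE} applied with $\cX=\chi^\La_N$, giving $\E\bigl(\norm{\chi^\La_N Q^\Lambda_{\le k}P_-^A R_E^\Lambda P_+^B Q^\Lambda_{\le k}}_{HS}^{2s}\bigr)\le C_{k,s}|\La|^{4sk+3}$. For the probability, note that on $\mathcal S_N^c$ there is an $N$-particle configuration $S$ with at most $k$ clusters and $\lambda\omega\up{S}<k\tfd$; by pigeonhole its largest cluster contains an interval $J\subset\La$ of length $\lceil N/k\rceil$ with $\omega\up{J}<k\tfd/\lambda\le k/\lambda_0$. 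A union bound over the at most $|\La|$ such intervals, together with the Chernoff estimate $\P(\sum_{i\in J}\omega_i<a)\le \e^{a}\,(\E\,\e^{-\omega_1})^{|J|}$ (here $\E\,\e^{-\omega_1}\in(0,1)$ because $\omega_1>0$ a.s.), gives $\P(\mathcal S_N^c)\le C_k\,|\La|\,\e^{-c_\mu N}$ with $c_\mu=\tfrac1k\log\bigl(1/\E\,\e^{-\omega_1}\bigr)>0$ depending only on $\mu$ and $k$. Hence $\E\bigl(\mathds 1_{\mathcal S_N^c}\,\norm{\,\cdot\,}_{HS}^{s}\bigr)\le C_{k,s}|\La|^{2sk+3/2}\bigl(|\La|\,\e^{-c_\mu N}\bigr)^{1/2}=C_{k,s}|\La|^{2sk+2}\e^{-\frac{c_\mu}{2}N}$.

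\textit{Conclusion and obstacle.} Adding the two contributions and using $(k+1)s\le 2(sk+1)$ yields \eqref{eq:aprLD} (after relabelling $m_0\rightsquigarrow sm_0$ and $c_\mu\rightsquigarrow c_\mu/2$, both still positive). Then \eqref{eq:aprLD1} follows since $\norm{\,\cdot\,}\le\norm{\,\cdot\,}_{HS}$ and, when $8kN\ge\rho$, $\e^{-c_\mu N}+\e^{-m_0\rho}\le 2\e^{-m_{0,\mu}\rho}$ with $m_{0,\mu}=\min\bigl(sm_0,\,c_\mu/(8k)\bigr)>0$. The only place ingenuity is required is the design of the good event: it must be imposed sector by sector in $N$ (which is exactly why the decomposition over particle number is forced on us) so that $H^\La$ is lifted above the window $I_{\le k}$ on $\Ran\chi^\La_N$, while the cost of failure is a single $N$-fold product of small-field probabilities and therefore decays exponentially in $N$ uniformly over $\lambda\ge\lambda_0$ --- the oversized threshold $k\tfd/\lambda$ entering only through the harmless bounded prefactor $\e^{k/\lambda_0}$ in the Chernoff bound.
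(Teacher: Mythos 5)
Your proposal is correct and follows the same architecture as the paper's proof: split off a good event on which the random field lifts $H^\Lambda$ above $(k+1)\bigl(1-\tfrac1\Delta\bigr)$ on $\Ran\chi^\La_N Q^\Lambda_{\le k}$, apply the deterministic quasi-locality Lemma~\ref{lem:locdet'} (with $\cX=\chi^\La_N$) there, and handle the complementary event by Cauchy--Schwarz against the a priori Hilbert--Schmidt moment bound, which is exactly why $s<\tfrac12$ is needed. The one place you genuinely diverge is the probability estimate for the bad event: the paper applies Cram\'er's theorem to each configuration $M$ with $|M|=N$ and union-bounds over the $\lesssim k|\Lambda|^{2k}$ configurations with at most $k$ clusters, whereas you use a pigeonhole reduction (any bad configuration contains a cluster, hence an interval, of length $\ge N/k$ with small field sum) followed by an elementary exponential-Markov bound and a union bound over only $|\Lambda|$ intervals. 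Your route is more elementary and gives a smaller combinatorial prefactor, at the cost of a $c_\mu$ that depends on $k$ through the $1/k$ in the exponent --- harmless here, since the final rate $m_{0,\mu}=\min\bigl(sm_0,c_\mu/(8k)\bigr)$ is $k$-dependent anyway. Your relabelling $m_0\rightsquigarrow sm_0$ on the good-event term is also the honest bookkeeping (the paper's displayed bound elides the power $s$ on that term), and your good event correctly quantifies over all $W^\La_S\le k$, which is what the positivity argument for $H^{\La,N}$ actually requires.
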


\begin{proof}

Recall $\cH_\La\up{N}= \Ran \chi^\La_N$, and let $\cH_\La\up{N,k}= \Ran \chi^\La_N Q^\Lambda_{\le k}$.
Recall also that  the restriction of  $V_\omega^\La$  to $\cH_\La\up{N}$  is  diagonalized by the canonical basis $ \Phi_\Lambda\up{N}$   as in Remark~\ref{remark1}(iii).

Let us first assume that $N$ is such that 
$N \lambda \bar{\mu}\ge  2k\tfd$, where $ \bar{\mu}$ denotes the mean of the probability distribution $\mu$ (see Definition~\ref{defXXZ}).
 The standard large deviation estimate (Cramer's Theorem)
gives
\be
\P\set{\lambda \omega\up{M} < k\tfd}\le \P\set{\omega\up{M}  < N \tfrac {\bar{\mu}} 2}\le \e^{- c_\mu N}\sqtx{for all} M\subset \La \sqtx{with} \abs{M}=N,
\ee
where $c_\mu$ is a constant depending only on the probability distribution $\mu$. This  implies that there exists $C_k>0$ such that
\be
\P\set{\lambda   \omega\up{M}< k\tfd}\le  C_k\e^{- c_\mu N}\sqtx{for all} N\in \N \sqtx{and } M\subset \La \sqtx{with} \abs{M}=N.
\ee

 It follows that for the event
\be
{\mathcal B_k^N}=\set{\exists  M\subset \La \sqtx{with} \abs{M}=N, \;  W^\La_M =k\sqtx{and} \lambda  \omega\up{M}< k\tfd},
\ee 
 we have  
 \be\label{PBkB'}
 &\P_{\Lambda}\pa{\mathcal B_k^N}   \le C_k\e^{- c_\mu N}\tr  Q_{\le k}^{\Lambda,N}  
 \le C_k \abs{\Lambda}^{2k} \e^{- c_\mu N}   \qtx{for} N=1,2\ldots, \abs{\Lambda}, 
  \ee
  where we also used Lemma \ref{lem:bndHSQ'}.
 On the complementary event  $\pa{\mathcal B_k^N}^c$  we have
  \beq\label{PVP'}
 \lambda  V_\omega   \chi^\La_N Q^\Lambda_{\le k}  \ge   k\tfd  \chi^\La_N Q^\Lambda_{\le k} .
  \eeq
  
  If \eq{PVP'} holds we conclude that 
  \be\label{PVP'HN}
H^{\La,N}&\ge \tfd \cW^\La   + \lambda   V_\omega=  \pa{Q_{\le k}^{\Lambda,N}  + Q_{\ge k+1  }^{\Lambda,N} }\pa{\tfd \cW^\La    +  \lambda  V_\omega}\\ & 
 \ge   \tfd  Q_{\ge k+1  }^{\Lambda,N} \cW^\La    +Q_{\le k}^{\Lambda,N}  \pa{\tfd \cW^\La   +  \lambda  V_\omega}\ge  (k+1) \tfd.
 \ee
 
We deduce that for $\omega \in \pa{\mathcal B_k^N}^c$   and  $E \in I_{\le k}$ we have
 \be
 H^{\La,N}-E\ge (k+1) \tfd- (k+\tfrac 34 ) \tfd = \tfrac 14  \tfd .
 \ee
 
 Proceeding as in  the derivation of  \eq{CT},
it follows from Lemma~\ref{lem:locdet'} 
and Remark~\ref{reamrkHquasi}, that for $\omega \in \pa{\mathcal B_k^N}^c$ we have, for  $A\subset B\subset {\La}$ with  $A$  connected in ${\La}$  that
 \be
{\norm{\chi^\La_NP_-^{A} R_E^\Lambda P_+^{B}}}\le C_0 \e^{-m_0 \rho^\La (A,B)}.
\ee

 Given $E \in I_{\le k}$, and letting $T =\chi_N Q_{\le k}^\Lambda P_-^{A} R_E^\Lambda P_+^{B}Q_{\le k}^\Lambda$,   we obtain
  \be
\E\pa{\norm{T}_{HS}^s} & \le \E\pa{\chi_{\mathcal B_k^N}\norm{T}_{HS}^s}+\E\pa{\chi_{\pa{\mathcal B_k^N}^c}\norm{T}_{HS}^s}\\ 
&\le \pa{\P \pa{\mathcal B_k^N}}^{\frac 1 2} \pa{\E\pa{\norm{T}_{HS}^{2s}}}^{\frac 12}+ 
C_0  \e^{-m_0 \rho^\La(A,B)}\norm{\chi_N Q_{\le k}^{\Lambda}}_{HS}^s
\\ 
&\le  C_{k,s}   \abs{\Lambda}^{2(sk+1)} \pa{\e^{- {\frac 1 2} c_\mu N}+   \e^{-m_0 \rho^\La (A,B)}},
\ee
where we used \eq{PBkB'},  Lemma \ref{lem:bndHSQ'}, and   \eq{eq:aprAB} with $2s$ instead of $s$. This estimate is  
\eq{eq:aprLD},  up to a redefinition of the constant $c_\mu$.

The estimate \eq{eq:aprLD1} follows immediately from \eq{eq:aprLD}. 
\end{proof}

\subsection{Decoupling of resolvents}\label{subsec:dec}
We now illustrate the basic idea that allows us to obtain the exponential decay of the left hand side in \eqref{eq:mainbnd}, analogous to the decoupling argument in the single particle localization literature. For this purpose, we will consider a more convenient object than the one in \eqref{eq:mainbnd}. To do so, let  $A\subset M\subset B\subset \La $,  and consider 
$P_+^{M^c} P_-^AR_E^\Lambda P_+^{B}.$
Let $K\subset \Z$ be such that   $M\subset [K]_{-1}\subset K\subset[K]_1\subset B$.
  The resolvent identity yields  (recall \eq{eq:Gamma}) 
 \be\label{eq:geompert}
 P_+^{M^c} P_-^{A}R_E^\Lambda P_+^{B}&
  = - 
 P_+^{M^c} P_-^{A}R^{K,K^c}_E \Gamma^K R_E^\Lambda  P_+^{B}
 =-   P_+^{M^c} P_-^{A}R^{K,K^c}_EP_+^{K^c} \Gamma^K R_E^\Lambda  P_+^{B}\\
 & =- P_+^{M^c} P_-^{A}R_E^{K}P_+^{K^c}\Gamma^K R^\Lambda_E P_+^{B},
 \ee
 where we used that   $P_-^{A} R^{K,K^c}_E P_+^{K}=0$ by \eq{HM0}  since      $[A]_\infty^K \subset K$, 
 $P_+^{M^c}R^{K,K^c}_E=P_+^{M^c} R^{K,K^c}_EP_+^{K^c}$ by \eq{HM0} since $K^c\subset M^c$, 
 and 
 $R^{K,K^c}_EP_+^{K^c}=R^{K}_EP_+^{K^c}$.
 Using  the specific structure of the XXZ Hamiltonian, that is,  \eqref{hPN}-\eqref{hPN98}, we have
 $P_+^{K^c}\Gamma^K= P_+^{K^c}P_-^{ \partial^\La K} \Gamma^K P_-^{ \partial^\La K}= P_+^{K^c}P_-^{\partial_{in}^\La K}\Gamma^K P_-^{\partial_{ex}^\La K} $, so it follows from \eq{eq:geompert} that
\be\label{eq:lonex2a}
P_+^{M^c} P_-^{A}R_E^\Lambda P_+^{B}=-P_+^{M^c} P_-^{A}R_E^{K}P_-^{\partial_{in}^\La K}P_+^{K^c}\Gamma^K P_-^{\partial_{ex}^\La K}R^\Lambda_E P_+^{B}.
\ee

We now use the resolvent identity
 for the operator $H^{ [K]_1^\La,([K]^\La_1)^c}$ 
 and  \eqref{hPN}, obtaining 
\be\label{eq:caa2}
  P_-^{\partial_{ex}^\La K}R_E^\Lambda P_+^{B}=-  P_-^{\partial_{ex}^\La K}R_E^\La  P_-^{\partial_{ex}^\La K} \Gamma^{[K]_1} P_-^{\partial_{ex}^\La [K]_1}  P_+^{[K]_1}R_E^{[K]_1^c} P_+^{B} .
\ee

Combining \eqref{eq:lonex2a}--\eqref{eq:caa2}, we obtain 
\be\label{eq:keydec}
&P_+^{M^c} P_-^{A}R_E^\Lambda P_+^{B}=
\\& (P_-^{A}{P_+^{M^c\cap K}R_E^{K}P_-^{\partial_{in}^\La K}})P_+^{K^c}\Gamma^K \pa{P_-^{\partial_{ex}^\La K}R_E^\La  P_-^{\partial_{ex}^\La K}}  \Gamma^{[K]_1} P_+^{[K]_1}{\pa{P_-^{\partial_{ex}^\La [K]_1} R_E^{[K]_1^c} P_+^{B\cap [K]_1^c}}}.
\ee
This is the basic decoupling formula, in a sense that the expressions in the first and last parentheses on the  last  line are statistically independent and of the same form as the left hand side of \eqref{eq:mainbnd}.  So, if we can perform the averaging over the random variables at sites $r\in \partial_{ex}^\La K$ to get rid of the middle resolvent, we will effectively decouple the system into pieces supported by the disjoint subsets  $K$ and $[K]_1^c$. (Note that these pieces do not depend on the random variables at sites $r\in \partial_{ex}^\La K$.) 
 This decoupling will be performed using the a-priori estimate \eqref{eq:aprAB}, after we dress the corresponding resolvents with  Hilbert-Schmidt operators on both sides as in Lemma \ref{lem:inred}. In broad strokes, we then will extract the (initial)  exponential decay from the expression in the first parenthesis in \eqref{eq:keydec} using reduction to lower energies and obtain the full exponential decay using a sub-harmonicity argument. We flesh out  details of this process  as we proceed with the proof.

\subsection{Clusters classification}\label{clusterclass}
In preparation to initiate the  FMM, we first inspect the structure of states in $\Ran Q^\Lambda_{\le k}$. Since $Q^\Lambda_{\le k}$ is a multiplication operator in the canonical basis $\set{\Phi_\Lambda\up{N}}_{N=0}^{\abs{\La}}$  introduced in \eqref{eq:standbasis}, we just need to consider the elements $\vphi_M$ of this basis with $M$ that belong to a set   $\mathcal S_{N,k}^\La:=\set{M\subset\Lambda:\ \abs{M}=N,\ 1\le  W^\La_M\le k}$, $N\ge 1$.   (Recall that $W^\La_M$ is the number of clusters of the configuration $M$, i.e., the number of connected components of  $M$ in the graph $\Lambda$.)   Denoting by  $ \pi_{\vphi}$  the orthogonal projection onto $\C\vphi$,   given  $M\in \mathcal S^{\La}_{N,k}$, we abuse the notation and write $\pi_M$ for $\pi_{\vphi_M}$, so
 $\pi_{M} = \pa{\prod_{j\in M}\cN_j} P_+^{M^c}$, and note that
$
 \chi_N^\La  Q^\Lambda_{\le k}=  \sum_{M\in \mathcal S_{N,k}^\La}  \pi_M.
$

Given $ A\subset \La$ ,   we set $\mathcal S^{\La,A}_{N,k}=\set{M\in \mathcal S^{\La}_{N,k}: \ M\cap A\ne \emptyset}$, and note that
$
 \chi_N^\La  Q^\Lambda_{\le k} P_-^A=  \sum_{M\in \mathcal S_{N,k}^{\La,A}}  \pi_M.
$
We set
\beq
\gamma_A(M) = \max_{x\in M}  \dist_\La(x,A) \le \diam_\La (M)= \max_{x,y\in M}  \dist_\La(x,y) \qtx{for} M\in \mathcal S^{\La,A}_{N,k}.
\eeq
Note that  $\diam_\La (M)=N-1$  for $k=1$ and   $\diam_\La (M)\ge N\ge 2$  for $k\ge 2$.

 If   $ 8kN < \rho^\La  (A,B)$ we will use the following lemma.

\begin{lemma}\label{lem:partM} Fix  $k\ge 2$. Let $A\subset B\subset\La$ be such that    $8kN < \rho^\La(A,B)<\infty$, and let  $M\in \mathcal S^{\La,A}_{N,k}$.

\begin{enumerate}
\item  Suppose $\gamma_A(M) < 4kN$.   Then setting  $Z=[A]^\La_{6kN}$, we have
\be \label{smallgamma}
A \cup M\subset [Z]_{-1}\subset Z \subset [Z]_1\subset B; \;\rho^\La(A\cup M,Z)\ge 2kN; \quad  \rho^\La(Z,B)\ge 2kN.
\ee

\item  Suppose $  \rho^\La(A,B)\le 2 \gamma_A(M)$.  Let $ d_\rho:= \fl{ \tfrac{ \rho^\La(A,B)}{6k}}$. Then there exists 
$a \in \set{1,2,\ldots, 3k-1}$,
 such that, letting  $K=  [A]^\La_{a d_\rho}$, we have
\be\label{eq:distMT}
   \rho^\La \pa{\partial^\La K, \La\setminus M} \ge  d_\rho -1 .
   \ee 
 Moreover, letting   $M_1= M\cap K$ and $M_2= M\cap K^c $, we have $K\subset B$ and   $M_{i}\neq\emptyset$ for $i=1,2$.

 \item  Suppose $ 8kN< 2\gamma_A(M)<  \rho^\La(A,B )$. Let $ d_\gamma:= \fl{ \frac{ \gamma_A(M)}{3k}}$. Then there exists $a \in \set{1,2,\ldots, 3k-1}$,
 such that, letting  
 \be\label{eq:K}
 K=  [A]^\La_{a d_\gamma}\cup\pa{[A]^\La_{\gamma_A(M)+d_\gamma}\setminus [A]^\La_{a d_\gamma+1}},
 \ee
 we have 
 \be\label{eq:distMT'}
  \rho^\La \pa{\partial^\La K, \La\setminus M}   \ge  d_\gamma -1 .
   \ee 
 Moreover, letting   $M_1= M\cap [A]^\La_{j d_\gamma}$ and $M_2= M\cap [A]^\La_{\gamma_A(M)}\setminus [A]^\La_{j d_\gamma+1}$, we have $M_1\cup M_2=M\subset K\subset B$ and  $M_{i}\neq\emptyset$ for $i=1,2$.  
 
\end{enumerate}
\end{lemma}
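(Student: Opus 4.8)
The plan is to reduce all three parts to elementary facts about the function $x\mapsto\dist_\La(x,A)$ on the connected component $\La_0:=[A]^\La_\infty$ of $A$. Since $A$ is connected in $\La$ it is an interval of $\Z$ sitting inside the interval $\La_0$, and $\dist_\La(\cdot,A)$ is a unimodal, Lipschitz-$1$ function on $\La_0$ (strictly decreasing, then $\equiv 0$ on $A$, then strictly increasing). I will use two consequences: (a) for any interval $C\subseteq\La_0$ the distance image $D_C:=\{\dist_\La(x,A):x\in C\}$ is a set of consecutive integers with $\abs{D_C}\le\abs{C}$; and (b) for any $r,q\in\N^0$ the set $[\partial^\La[A]^\La_r]^\La_q$ is exactly the set of points of $\La_0$ whose distance to $A$ lies in the band $[r-q,\,r+1+q]$, with the obvious two-sided variant when $[A]^\La_r$ is replaced by the ``ball minus a unit shell'' of \eq{eq:K}. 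Note also that $\gamma_A(M)<\infty$ in cases (ii)--(iii) because $\rho^\La(A,B)<\infty$, so every cluster of $M$ lies in $\La_0$ and (a) applies to each of the $\le k$ clusters.

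\textbf{Part (i).} This is a direct computation. From $\gamma_A(M)<4kN$ we get $M\subseteq[A]^\La_{4kN-1}$, and $8kN<\rho^\La(A,B)$ gives $[A]^\La_{8kN}\subseteq B$. Combining these with the composition rule $\bigl[[A]^\La_{r}\bigr]^\La_{r'}\subseteq[A]^\La_{r+r'}$ and the definition of trimming, all inclusions and $\rho^\La$-bounds in \eq{smallgamma} follow with room to spare, since $6kN-1\ge 4kN-1$ and $6kN+2kN=8kN<\rho^\La(A,B)$.

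\textbf{Parts (ii) and (iii).} The engine is a pigeonhole argument. Write $M=C_1\cup\ldots\cup C_r$ as its clusters ($r\le k$), let $d$ stand for $d_\rho$ in (ii) and for $d_\gamma$ in (iii), and partition distance values into blocks $J_b=\{x\in\La_0:(b-1)d<\dist_\La(x,A)\le bd\}$. From $\rho^\La(A,B)>8kN$ (resp. $2\gamma_A(M)>8kN$) one checks $d\ge N$, hence by (a) each $D_{C_i}$ has length $\le\abs{C_i}\le N\le d$ and therefore meets at most two consecutive blocks; so $M$ occupies at most $k$ ``runs'' of $\le 2$ blocks. Two further observations pin things down: the cluster meeting $A$ has $0\in D_C$, so its run is anchored at $J_1$ and it obstructs no window with positive index; and the cluster realizing the maximal distance has $D_C\subseteq[\gamma_A(M)-d+1,\gamma_A(M)]$, which by $\gamma_A(M)\ge 3kd$ lies in blocks $J_b$ with $b\ge 3k$, so it touches at most $J_{3k}$ among $J_1,\ldots,J_{3k}$. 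Translating the requirement ``$\rho^\La(\partial^\La K,\La\setminus M)\ge d-1$'' via (b): for $K=[A]^\La_{ad}$ it reads ``$M$ avoids the band $[(a-1)d+1,(a+1)d]=J_a\cup J_{a+1}$''; for the set in \eq{eq:K} it reads ``$M$ avoids $[(a-1)d+1,(a+1)d+1]$'' (the outer boundary band $[\gamma_A(M)+1,\gamma_A(M)+2d]$ being automatically $M$-free since $\max D_M=\gamma_A(M)$). Thus it suffices to find two consecutive $M$-free blocks among $J_1,\ldots,J_{3k}$ with smaller index $\le 3k-1$: among these $3k$ blocks the non-outermost clusters occupy $\le 2(k-1)$ in $\le k-1$ runs, the outermost occupies $\le 1$ (only $J_{3k}$), and the near cluster's run is anchored at $J_1$, so a short case check on how these runs interleave produces a gap of $\ge 2$ consecutive free blocks in the required index range. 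Once $a$ is fixed, the remaining assertions are immediate: $M_1\supseteq M\cap A\ne\emptyset$; $M_2\ne\emptyset$ because $\gamma_A(M)\ge 3kd\ge ad+2$ forces a point of $M$ outside the inner ball; and $K\subseteq B$ because the largest radius used is $ad\le(3k-1)d<\tfrac12\rho^\La(A,B)$ in (ii) and $\gamma_A(M)+d<\rho^\La(A,B)$ in (iii).

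\textbf{Main obstacle.} The delicate point is precisely the pigeonhole bookkeeping that lands the index in exactly $\{1,\ldots,3k-1\}$: naively a single cluster can obstruct up to three window indices, which with $k$ clusters would already exhaust the $3k-1$ available indices. So the argument must genuinely exploit (1) that the cluster meeting $A$ contributes nothing at positive distance, (2) that the outermost cluster is confined to the far blocks $J_b$, $b\ge 3k$, hence obstructs at most the top index, and (3) the size budget $\sum_i\abs{C_i}=N\le d$, which prevents many clusters from each having a length-$\approx d$ distance image. Making these three facts interact cleanly, rather than through a messy enumeration of configurations, is where the real care lies; the unit-width gap in the set $K$ of \eq{eq:K}, which widens the forbidden band by one site, is only a minor nuisance absorbed in the same count.
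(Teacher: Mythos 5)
Your proposal is correct and takes essentially the same route as the paper's proof: partition a neighborhood of $A$ into the $3k$ width-$d$ annuli $Y_a=[A]^\La_{ad}\setminus[A]^\La_{(a-1)d}$, note that $d\ge N$ forces each of the $\le k$ clusters of $M$ to meet at most two \emph{consecutive} annuli, pigeonhole to find an index $a\le 3k-1$ with $Y_a\cup Y_{a+1}$ disjoint from $M$, and observe that $[\partial^\La K]^\La_{d-1}$ lies in this free band, giving \eqref{eq:distMT}. The one point where you go beyond the paper is worth recording: the paper deduces the existence of two consecutive free annuli directly from ``$M$ meets at most $2k$ of the $3k$ annuli,'' which by itself does not suffice (with $\le k$ occupied runs one can a priori have $k$ free annuli all isolated), whereas your additional observations --- the cluster meeting $A$ occupies at most $Y_1$, and the outermost cluster, whose points lie at distance $>(3k-1)d$ from $A$, occupies at most $Y_{3k}$ --- confine the obstructions on $Y_2,\dots,Y_{3k-1}$ to at most $k-2$ runs covering at most $2(k-2)$ of those $3k-2$ blocks, which does force a gap of two consecutive free blocks there; so you have located and repaired the only delicate step, though you leave the final interleaving count as a sketch. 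One small blemish, shared with the paper and which you at least flag: in part (iii) the set $[\partial^\La K]^\La_{d_\gamma-1}$ reaches one site beyond $Y_{a+1}$ (to distance $(a+1)d_\gamma+1$, because of the unit shell removed in \eqref{eq:K}), so to get \eqref{eq:distMT'} verbatim one must also arrange that this extra site is $M$-free.
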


\begin{proof}

Part (i) is obvious. To prove  Parts (ii) and (iii),  let $d=d_\rho$  in Part (ii), and   $d=d_\gamma$  in Part (ii); note that $ d \ge  N$ in both cases. 
We set $Y_{a}=[A]^\La_{{a} {d}}\setminus  [A]^\La_{({a}-1) {d}}\subset B$ for ${a}=1,2,\ldots, 3k$;  note  $3k d\le \frac  {\rho^\La(A,B)}2$ in both cases.
 
 The set $M$ consists of  $s$ clusters  where $2\le s \le k$, so $N\ge 2$.  Each cluster has length $\le N-1$, so it can intersect at most two of the $Y_{a}$'s (as $ d \ge  N$), hence $M$ can intersect at most $2k$ of the distinct $Y_{a}$'s.  Thus there exists ${a}_* \in \set{1,2,\ldots, 3k-1}$ such that 
 \be\label{eq:M}
 M\cap \pa{Y_{{a}_*}\cup Y_{{a}_*+1}}=\emptyset,
 \ee  
 and $M_1=  M\cap [A]^\La_{{({a}_*-1)}d}\ne \emptyset$ since  $A\cap M\neq\emptyset$.

 To prove Part (ii) with $d=d_\rho$,  set $K=  [A]^\La_{{a}_* d_\rho}\subset B$. Then  $M_1= M\cap K\ne \emptyset$ since $A\cap M\neq\emptyset$,  and $M_2= M\cap(\La \setminus K)\ne \emptyset $ as $\rho^\La(A,B) \le 2\gamma_A(M)$ by hypothesis.  Moreover,   \eq{eq:distMT} holds due to \eqref{eq:M}.
 
To prove Part (iii) with $d=d_\gamma$,  let  $K$  be given in \eq{eq:K}. Then  $M_1= M\cap K\ne \emptyset$ since $A\cap M\neq\emptyset$,  and $M_2= M\cap(\La \setminus K)\ne \emptyset $ as $\rho^\La(A,B) \le 2\gamma_A(M)$ by hypothesis.  Moreover,   \eq{eq:distMT} holds due to \eqref{eq:M}.    
\end{proof}

  Motivated by Lemma \ref{lem:partM},  given $A\subset B\subset \La$ with $ 8kN< \rho^\La  (A,B)<\infty$ we decompose $\mathcal S^{\La,A}_{N,k}$ into three distinct groups:  
\begin{enumerate}
\item  \emph{Small $\gamma_A(M)$:}  $M\in \cG_1^{\La,N}(A,B)$   if   $2\gamma_A(M)\le 8kN < \rho^\La  (A,B) $.
\item \emph{Large $\gamma_A(M)$:}  $M\in \cG_2^{\La,N}(A,B)$   if $ 8kN< \rho^\La(A,B)\le 2 \gamma_A(M)$.
\item \emph{Intermediate  $\gamma_A(M)$:}  $M\in \cG_3^{\La,N}(A,B)$    if     $ 8kN< 2\gamma_A(M)<  \rho^\La(A,B)$ .
\end{enumerate}    
 Note that for $ 8kN< \rho^\La  (A,B)<\infty$ we have 
\beq\label{cGi}
\chi_N^\La  Q^\Lambda_{\le k} P_-^A= \sum_{1=1}^3 \pi_{ \cG_i^{\La,N}(A,B)}, \qtx{where} \pi_{ \cG_i^{\La,N}(A,B)},=  \sum_{M\in \cG_i^{\La,N}(A,B)}  \pi_M .
\eeq

\subsection{Decoupling revisited} \label{subsec:Ggroups}  
We will need to estimate   $\chi^\La_N Q^\Lambda_{\le k}P_-^AR_E^\Lambda P_+^{B}Q^\Lambda_{\le k}$.
If $ 8kN\ge\rho^\La  (A,B)$ we use \eq{eq:aprLD1}.  If $ 8kN< \rho^\La  (A,B)$,
  we note that
\beq
\pi_{M}\chi^\La_{N}Q^\Lambda_{\le k}P_-^A R^{\Lambda}_{E}P_+^B Q^\Lambda_{\le k}=\pi_{M}P_-^A R^{\Lambda}_{E}P_+^B Q^\Lambda_{\le k}  \qtx{for}  M \in S^{\La,A}_{N,k}.
\eeq
We will use  different strategies for $M\in \cG_i=\cG_i^{\La,N}(A,B)$, $i=1,2,3$.

If  $M\in \cG_1$,  we use the decoupling argument of Section \ref{subsec:dec}, getting \eq{eq:keydec}  with $K=[A]^\La_{8kN}$.  The estimation for the expression in first parenthesis in  \eq{eq:keydec} will be performed using directly the a-priori estimate \eqref{eq:aprLD} and \eq{smallgamma}.   (No  energy reduction.) This yields  exponential decay in $\gamma_A(M)$ for these type of contributions and the sub-harmonicity argument concludes the analysis.

To handle $M\in \cG_2$, we   consider $K, M_1,M_2$  as in   Lemma \ref{lem:partM}(ii), set $S=[\partial K]^\La_{d_\gamma-1} $, and note that
\be\label{eq:G2in}
{\pi_M}P_-^A R^{\Lambda}_{E}P_+^B Q^\Lambda_{\le k}={\pi_M}P_+^{S}P_-^{K}P_-^{K^c}P_-^A R^{\Lambda}_{E}P_+^B Q^\Lambda_{\le k}.
\ee 
Using $M_1 \subset B$, we get 
\be\label{eq:G_2}
P_+^{S} P_-^{ K}P_-^{K^c} R_E^\Lambda P_+^{B}=-\pa{P_+^{S}P_-^{K}P_-^{K^c} R^{K,K^c}_E P_-^{\partial^\La K}}\Gamma^\K P_-^{\partial^\La K} R^\Lambda_E P_+^{B}.
\ee 
  The expression in parenthesis is estimated  by reduction to lower energies $E'\in I_{\le k-1}$, allowing the use of the  induction hypothesis (in $k$) together   with the  estimate  \eq{eq:distMT} to obtain  exponential decay in $\rho^\La(A,B)$.

If $M\in \cG_3$,   we use a decoupling based on Lemma \ref{lem:partM}(iii), get   exponential decay in $\gamma_A(M)$ from the induction  hypothesis (in $k$), 
and the sub-harmonicity argument concludes the analysis.

\subsection{Reduction to lower energies}
We first observe that $P_-^A R^{\Lambda}_{E}P_+^B=P_-^A \what R^{\Lambda}_{0,E}P_+^B$ decays exponentially in $\rho^\La (A,B)$ for $E\le \tfrac34\tfd$ due to \eqref{CT} with $k=0$,
that is, Theorem \ref{thm:maintech} holds for k=0.   Suppose now that we already established \eqref{eq:mainbnd} for all energies $E\in I_{\le k-1}$ and we want to push the allowable energies to the interval $I_{\le k}$. The principal idea here is to observe that if $\emp \ne K \subsetneq \La$,  then we have the nontrivial decoupling   $H^{K,K^c}= H^{K} + H^{K^c}$, and $R_E^{K,K^c}$ can be decomposed as 
\be
R_E^{K,K^c} =\sum_{\nu\in \sigma (H^{K^c})} {R_{E-\nu}^{K}\otimes \pi_{\kappa_\nu}},
\ee
where   $\set{\kappa_\nu}_{\nu \in \sigma (H^{K^c})}$  is an orthonormal basis for  $\cH_{K^c}$  that diagonalizes $H^{K^c}$: $H^{K^c}\kappa_\nu=\nu\kappa_\nu$.  In particular, if $K_1\subset K $ and $K_2\subset K^c$, we deduce that
 \be\label{eq:enered}
P_-^{K_1}P_-^{K_2}R_E^{K,K^c}  =\sum_{\nu\in \sigma (H^{K^c})\cap[\fd,\infty)} \pa{P_-^{K_1}R_{E-\nu}^{K}}\otimes \pa{P_-^{K_2}\pi_{\kappa_\nu}},
\ee 
since $P_-^{K_2}\pi_{\kappa_0}=0$, and we have $\min_{\nu\in \sigma (H^{K^c})\setminus \set{0}}\,\nu\ge \fd$. This is exactly the type of setup we have in \eqref{eq:G2in}--\eqref{eq:G_2}. It means that the  factor $P_-^{K_1}P_-^{K_2}$ allows us effectively to lower the  energy $E\in I_{\le k}$ to  $E-\nu\in I_{\le k-1}$ and therefore use the induction hypothesis to obtain  exponential decay (we of course still need to control the summation over $\nu$ on the right hand side of \eqref{eq:enered}).

\section{Proof of the main theorem}\label{sec:prmain}

In the section we prove Theorem \ref{thm:maintech}.  We fix $\Delta_0> 5$ and $\lambda_0>0$, and assume 
$\Delta \ge \Delta_0$ and $\lambda \ge \lambda_0$.  As discussed in Remark~\ref{remarkDelta5}, the argument can be modified for  $\Delta_0>1$.

The proof proceeds by induction on $k$.  Theorem \ref{thm:maintech} holds for $k=0$, since in this case  \eqref{eq:mainbnd}  follows from \eqref{CT} with $F_0=C_0$,
$\xi_0=0$ and $\theta_0=m_0$ as  $P_-^A R_E^\La= P_-^A R_{0,E}^\La$. 
Given $k\in \N$, we  assume the theorem holds for $k-1$, and we will prove the theorem holds for $k$.

We now fix $k\in \N$ and   $\La \subset \Z$,  finite and  nonempty . We also fix $A\subset B \subset \La$, where
$A$ is a  nonempty subset  connected  in $\La$;  it follows that $[A]^\La_p$ is also connected in $\La$ and
$\abs{]A[^\La_{p}}\le 2$  for all $p\in \Z$.

To  derive the  bound \eqref{eq:mainbnd} from Lemma \ref{lem:inred}(i) we will  estimate
 $\E \pa{\norm{F^\La_{p,q}(E,A)}_{HS}^s}$ for $p,q= -\abs{A},  -\abs{A}+1,\ldots, \abs{\La}$ for $E\in I_{\le k}$, where $F^\La_{p,q}(E,A)$ is given in \eq{Fpq}.
The estimate \eq{eq:aprAB} gives the a priori  bound ($F_{p,q}=F^\La_{p,q}(E,A)$)
  \be\label{eq:aprABpq}
\E \norm{F_{p,q}}_{HS}^s\le C\lambda_0^{-s}k ^s\abs{\Lambda}^{2sk+2}.
\ee

Since $F_{p,q}=F_{q,p}^*$, we may assume $ p\le q$. If $p=q$ we use \eq{eq:aprABpq}, if $p<q$ we note that
\be\label{Fpqest}
 \norm{F_{p,q}}_{HS}&\le \norm{Q^\Lambda_{\le k}P_-^{ ]A[^{\Lambda}_p}  R^{\Lambda}_{E} P_+^{[A]^{\Lambda}_q}Q^\Lambda_{\le k}}_{HS}
\le \sum_{j\in ]A[^{\Lambda}_p}\norm{Q^\Lambda_{\le k}\cN_j  R^{\Lambda}_{E} P_+^{[j]^{\Lambda}_{q-p-1}}Q^\Lambda_{\le k}}_{HS},
\ee
where we used $[ ]A[^{\Lambda}_{p}]^{\Lambda}_{q-p-1}\subset [A]^{\Lambda}_q $ for $p<q$. 

For $r\in \N^0$ and  $E\in I_{\le k}$ we set
\be\label{eq:f(t)}
f^{\La}(k,E,r)=\max_{\Theta \subset\La }\max_{j\in \Theta}
\E\pa{\norm{Q_{\le k}^{\Theta}   \cN_j R^{\Theta}_E P_+^{[j]^{\Theta}_r}Q_{\le k}^{\Theta}}_{HS}^s},
\ee
and prove the following lemma.

\begin{lemma}\label{lem:F_{p,q}} Let $k\in \N$, $s\in (0,\frac 13)$,  and assume  Theorem \ref{thm:maintech} holds for $k-1$. 
Then there exist constants $D_k,C_k, \zeta_k , m_k >0$    (depending on $k$, $\Delta_0$, $ \lambda_0$ and  $s$), such that  such that, for all $\Delta \ge \Delta_0$ and $\lambda \ge \lambda_0$ with  $\lambda \Delta^2\ge D_k$,  $\Lambda\subset  \Z$ finite,  energy $E\in I_{\le k}$, and $r\in \N^0$,  we have
\be\label{eq:F_{p,q}}
f^{\La}(k,E,r)\le C_k\abs{\Lambda}^{\zeta_k} \e^{-m_k r}.
  \ee 
 \end{lemma}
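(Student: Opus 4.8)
The plan is to fix $k\in\N$, assume Theorem~\ref{thm:maintech} for $k-1$, and estimate $\E(\norm{Q_{\le k}^{\Theta}\cN_j R_E^\Theta P_+^{[j]^\Theta_r}Q_{\le k}^\Theta}_{HS}^s)$ uniformly over $\Theta\subset\La$ and $j\in\Theta$; since $\Theta$ can be taken to be any finite subset of $\Z$, I will rename $\Theta$ back to $\La$ and $j$ the base point, and set $A=\{j\}$ (connected in $\La$), $B=[A]^\La_r$, so $\rho^\La(A,B)=r$. Decompose $\chi^\La_N Q^\Lambda_{\le k}\cN_j=\sum_M\pi_M$ over $M\in\mathcal S^{\La,A}_{N,k}$ and split the sum over $N$ at the threshold $8kN\gtrless r$. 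For the ``low'' range $8kN\ge r$ (so $N\ge r/(8k)$), the large deviation bound \eqref{eq:aprLD1} of Lemma~\ref{lem:LD} gives $\E(\norm{\chi^\La_N Q^\Lambda_{\le k}P_-^A R_E^\La P_+^B Q^\Lambda_{\le k}}^s)\le C_{k,s}\abs\La^{2(sk+1)}\e^{-m_{0,\mu}r}$, and summing over $N\le\abs\La$ costs only a factor $\abs\La$; alternatively for $N$ genuinely large the a priori bound \eqref{eq:aprLD} already provides $\e^{-c_\mu N}$-decay, so the contribution of the whole low range is $\le C_k\abs\La^{\zeta_k}\e^{-m_k r}$. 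This is the easy half.

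For the ``high'' range $8kN<r$, use the clusters classification of Section~\ref{clusterclass}: $\chi^\La_N Q^\Lambda_{\le k}\cN_j=\pi_{\cG_1}+\pi_{\cG_2}+\pi_{\cG_3}$ with $\cG_i=\cG_i^{\La,N}(A,B)$. For $\cG_1$ (small $\gamma_A(M)$), apply the decoupling formula \eqref{eq:keydec} with $K=[A]^\La_{8kN}$, dress the first and last resolvent factors with $Q^\Lambda_{\le k}$ via Lemma~\ref{lem:inred}(ii), estimate the first parenthesis by the a priori bound \eqref{eq:aprLD} together with \eqref{smallgamma} (no energy reduction), estimate the middle factor $P_-^{\partial^\La K}R_E^\La P_-^{\partial^\La K}$ by the a priori estimate \eqref{eq:aprAB} after averaging over $\omega_{\partial^\La_{ex}K}$ (these are independent of the outer pieces), and extract from this the exponential decay in $\gamma_A(M)$, hence in $N$; summing over $M\in\cG_1$ and over $N$ yields decay $\e^{-m_k r}$ up to polynomial volume factors by the sub-harmonicity/iteration argument. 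For $\cG_2$ (large $\gamma_A(M)$, so $r\le 2\gamma_A(M)$), take $K,M_1,M_2,d_\rho$ from Lemma~\ref{lem:partM}(ii), insert $P_+^S$ with $S=[\partial K]^\La_{d_\rho-1}$ as in \eqref{eq:G2in}, expand via \eqref{eq:G_2}, and use the energy-reduction identity \eqref{eq:enered}: the factor $P_-^{M_1}P_-^{M_2}$ lowers $E\in I_{\le k}$ to $E-\nu\in I_{\le k-1}$ for each $\nu\in\sigma(H^{K^c})\cap[\fd,\infty)$, at which point the induction hypothesis (Theorem~\ref{thm:maintech} for $k-1$) applies together with \eqref{eq:distMT} to give $\e^{-\theta_{k-1}(d_\rho-1)}$-decay, i.e. decay $\e^{-cr/k}$; the $\nu$-summation is controlled by Lemma~\ref{lem:bndHSQ'} (polynomial bound on $\#\sigma(H^{K^c})\cap\what I_{\le k-1}$) plus the Combes--Thomas decay \eqref{CT} for the tail $\nu$ large. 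For $\cG_3$ (intermediate $\gamma_A(M)$), use the decomposition of Lemma~\ref{lem:partM}(iii): the nonconnected $K$ with $M=M_1\cup M_2$, $M_i\ne\emptyset$ on the two pieces, decouple and feed both pieces into the induction hypothesis to obtain $\e^{-c\gamma_A(M)/k}$-decay, again summed over $M$ and $N$ via sub-harmonicity.

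Collecting the three classes and the low-$N$ range, and using that $\abs{]A[^\La_p}\le 2$ and the number of $M\in\mathcal S^{\La,A}_{N,k}$ of a given diameter is polynomial in $\abs\La$, gives $f^\La(k,E,r)\le C_k\abs\La^{\zeta_k}\e^{-m_k r}$, provided $\lambda\Delta^2\ge D_k$ is large enough to absorb the volume factors $\abs\La^{\zeta_{k-1}}$, the constants $k\tfd$ appearing in \eqref{eq:resmod'1}, and the $\lambda^{-s}$ gains from the a priori estimate at each decoupling step into net exponential decay. I expect the main obstacle to be the $\cG_2$ analysis: making the energy reduction \eqref{eq:enered} quantitative requires simultaneously (a) controlling the sum over the spectrum of $H^{K^c}$ inside $\what I_{\le k-1}$ (which is where $\lambda\Delta^2\ge D_k$ and the polynomial eigenvalue count of Lemma~\ref{lem:bndHSQ'} enter), (b) tracking how the $k$-dependent polynomial prefactor $\abs\La^{\xi_{k-1}}$ from the induction hypothesis compounds without destroying decay, and (c) handling the dressing with Hilbert--Schmidt operators so that the a priori estimate \eqref{eq:aprAB} can be invoked after averaging over $\omega_{\partial^\La_{ex}K}$ — the bookkeeping of which random variables each resolvent depends on is delicate and is the crux of the decoupling.
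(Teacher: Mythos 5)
Your plan follows essentially the same route as the paper's proof: the split at $8kN \gtrless r$ with the large-deviation bound \eqref{eq:aprLD1} for the low range, the three-way cluster classification with decoupling for $\cG_1$, energy reduction plus the induction hypothesis for $\cG_2$ and $\cG_3$, and a sub-harmonicity argument to close the resulting recursion in $r$ (the paper packages the high-range recursion as Lemma~\ref{lem:maintech} and then runs the contraction argument). One caution on your closing remark: the condition $\lambda\Delta^2\ge D_k$ is used only to make the $\La$-independent recursion coefficient $C_k(\lambda\Delta^2)^{-s}\sum_{q}\e^{-m_k\abs{q}/2}$ at most one, as in \eqref{eq:subhara'}; the volume factors $\abs{\La}^{\zeta_{k-1}}$ are not absorbed by $D_k$ (which cannot depend on $\La$) but simply propagate into the polynomial prefactor $\abs{\La}^{\zeta_k}$ of the final bound.
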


To finish the proof of the theorem, we assume that $\Delta \ge \Delta_0$ and $\lambda \ge \lambda_0$ with  $\lambda \Delta^2\ge D_k$ as in the lemma. Then, since $\E \pa{\norm{F_{p,q}}}_{HS}^s\le 2 f^\La(k,E, \abs{q-p}-1)$ for $\abs{q-p}\ge 1$, 
 and we have \eq{eq:aprABpq} for 
$q=p$, we obtain
\be\label{EFsum}
&\E\pa{ \sum_{p=-\abs{A}}^{\abs{\Lambda}} \sum_{q=-\abs{A}}^{\abs{\Lambda}}\e^{-m_0(p)_+}\e^{-m_0\pa{t-q-1}_+}\norm{F_{p,q}}}^s   \le   C_k \abs{\La}^{\zeta_k} \e^{-s m_k t}.
\ee

The estimate \eqref{eq:mainbnd}   now follows from \eq{Fpq} and \eq{EFsum} (recall \eq{defrhoLa}), so Theorem \ref{thm:maintech} holds for $k$.

To complete the proof of  Theorem \ref{thm:maintech}  we need to prove Lemma \ref{lem:F_{p,q}}.  To do so we need the following lemma.

\begin{lemma}\label{lem:maintech}
 Let $k\in \N$, $s\in (0,\frac 13)$,  and assume  Theorem \ref{thm:maintech} holds for $k-1$. 
Then there exist constants $C_k, \zeta_k , m_k >0$    (depending on $k$, $\Delta_0$, $ \lambda_0$ and  $s$),   such that, for all $\Delta \ge \Delta_0$ and $\lambda \ge \lambda_0$,  $j \in \Lambda\subset  \Z$ finite,   energy $E\in I_{\le k}$,  $N\in \N$, and $r\in \N^0$ such that  $8kN <  r$ , we have
\be\label{GNr123}
G^\La_N(r)&=  \E\pa{\norm{\chi_N^{\La}Q_{\le k}^{\Lambda} \cN_j R^{\Lambda}_E P_+^{[j]^\La_r}Q_{\le k}^{\Lambda}}_{HS}^s}\\& \le C_k \pa{ \abs{\La}^{\zeta_k} \e^{-m_k  r}  + \e^{-   m_k  N}  \pa{\lambda \Delta^2}^{-s}  \sum_{p=0}^{r}  \e^{-m_k \pa{r-p}} f_N^\La(p)}.
\ee

\end{lemma}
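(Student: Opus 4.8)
\emph{Setup and the three-group reduction.}
Throughout put $A=\{j\}$ and $B=[j]^\La_r$, so $\rho^\La(A,B)=r$ and the hypothesis $8kN<r$ is precisely $8kN<\rho^\La(A,B)$; hence Lemma~\ref{lem:partM} and the cluster classification of Section~\ref{clusterclass} apply. Since the ranges of the $\pi_M$, $M\in\mathcal S^{\La,A}_{N,k}$, are mutually orthogonal, Hilbert--Schmidt norms add over the splitting \eqref{cGi}, and because $s<1$ this yields
\[
G^\La_N(r)\ \le\ \sum_{i=1}^3\E\pa{\norm{\pi_{\cG_i^{\La,N}(A,B)}R^\Lambda_E P_+^B Q^\Lambda_{\le k}}_{HS}^s}.
\]
I would show that the $i=2$ term is bounded by the clean term $\abs{\La}^{\zeta_k}\e^{-m_kr}$ of \eqref{GNr123} and that the $i=1$ and $i=3$ terms are bounded by the convolution term, following the three strategies sketched in Section~\ref{subsec:Ggroups}. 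Used throughout: $\chi^\La_N$ commutes with every operator in sight (so it may be moved freely and, on each subsystem resolvent, replaced by the appropriate sectorial projection, as in the proof of Lemma~\ref{lem:LD}); on any $\Theta\subset\La$ one has the $\Theta$-versions of \eqref{CT}, Lemma~\ref{lem:inred}, Lemma~\ref{lem:LD} and \eqref{eq:aprLD}--\eqref{eq:aprLD1} with $\abs\Theta$ in place of $\abs\La$; and the a priori Hilbert--Schmidt bounds are always invoked after inserting $Q_{\le k}$-dressings, so their prefactors stay polynomial in the volume by Lemma~\ref{lem:bndHSQ'} (in particular $\tr Q^\Theta_{\le k}\le k\abs\Theta^{2k}$, a polynomial bound because a $\le k$-cluster configuration is determined by its $\le 2k$ endpoints, independently of $N$).

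\emph{The group $\cG_2$ (clean term).}
For $M\in\cG_2$ I would use \eqref{eq:G2in}--\eqref{eq:G_2} with $K,M_1,M_2$ from Lemma~\ref{lem:partM}(ii). All $M\in\cG_2$ sharing a given $a\in\{1,\dots,3k-1\}$ have the common ball $K=[A]^\La_{ad_\rho}$ and all satisfy \eqref{eq:distMT}, so I would handle them in one block $\pi_{\cG_2^{(a)}}\le(\text{sectorial projection})\,P_+^{[\partial^\La K]^\La_{d_\rho-1}}$. The decoupling $R^{K,K^c}_E=R^K_E+\cdots$ together with \eqref{eq:enered} replaces $R^K_E$ by $R^K_{E-\nu}$ with $E-\nu\in I_{\le k-1}$; the $(k-1)$-case of Theorem~\ref{thm:maintech}, the geometric gap \eqref{eq:distMT}, and $d_\rho=\fl{r/6k}\ge N$ then produce a factor $\le F_{k-1}\abs K^{\xi_{k-1}}\e^{-s\theta_{k-1}(d_\rho-1)}\le C_k\abs\La^{\zeta_k}\e^{-m_kr}$ for the parenthesised piece of \eqref{eq:G_2}; the $\nu$-sum is finite since $\tr\chi_{\what I_{\le k}}(H^{K^c})$ is polynomial in $\abs\La$ (Lemma~\ref{lem:bndHSQ'}), and the sums over $a$ and over blocks cost only polynomial factors. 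The remaining factor $\Gamma^K P_-^{\partial^\La K}R^\Lambda_E P_+^B Q^\Lambda_{\le k}$ is $Q_{\le k}$-dressed and bounded by its a priori Hilbert--Schmidt estimate \eqref{eq:inredaHSE}; the product feeds $\abs\La^{\zeta_k}\e^{-m_kr}$.

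\emph{The groups $\cG_1$ and $\cG_3$ (convolution term).}
For $\cG_1$ all $M\in\cG_1$ lie in $Z_0:=[A]^\La_{4kN}$, so $\pi_{\cG_1}\le\chi^{Z_0}_N Q^{Z_0}_{\le k}\otimes P^{Z_0^c}_+$, whose Hilbert--Schmidt norm is $\le\sqrt k\,\abs{Z_0}^k$, a polynomial in $N$. I would apply the decoupling \eqref{eq:keydec} with $K=[A]^\La_{6kN}$. The ``near'' block is supported in $K$ and forces the $N$ particles into $Z_0$, so the resolvent inside it is squeezed between ``no particles in $K\setminus Z_0$'' and ``a particle on $\partial_{in}^\La K$'' and must therefore cross the empty annulus $K\setminus Z_0$ of width $2kN$; restricting to $\chi_N^K$ and $Q_{\le k}^K$ (the particles sitting in $Z_0$ makes these the relevant sectorial dressings), the Combes--Thomas bound of Lemma~\ref{lem:locdet'} on the good $\omega_K$-event and the large-deviation bound \eqref{eq:aprLD} on its complement give, for this block, $\le C_k\,\mathrm{poly}(N)\,\e^{-m_kN}$. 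The two interaction factors satisfy $P^{K^c}_+\Gamma^K=-\tfrac1{2\Delta}P^{K^c}_+(\text{boundary hopping})$, and likewise $P^{[K]_1}_+\Gamma^{[K]_1}$, each contributing $\Delta^{-1}$; the middle piece $\cN_\ell R^\Lambda_E\cN_\ell$ over the $\le 2$ sites $\ell\in\partial_{ex}^\La K$ ($\omega$-disjoint from both other blocks) is removed by Lemma~\ref{lem:apri}, contributing $\lambda^{-1}$; and the ``far'' block $\chi^\La_N P_-^{\partial_{ex}^\La[K]_1}R^{[K]_1^c}_E P^B_+ Q^\Lambda_{\le k}$, put in Hilbert--Schmidt form by Lemma~\ref{lem:inred}(ii) on $[K]_1^c$, contributes $\sum_q\e^{-m_0(q)_+}f^\La_N(p_q)$ with $p_q$ a shift of $q$. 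Since the three blocks are stochastically independent, multiplying the expectations, re-indexing the far sum, and using $8kN<r$ (i.e.\ $N<r/8k$) to absorb all $\mathrm{poly}(N)$ and $\e^{\pm m_0\cdot6kN}$ factors into $\e^{-m_kN}\e^{-m_k(r-p)}$, I obtain the second term of \eqref{GNr123}. For $\cG_3$ I would run the same scheme with the annular $K$ of \eqref{eq:K}, using each value of $\gamma_A(M)$ (and each admissible $a$) as a block: the hole of $K$ spatially separates $M_1$ from $M_2$, so the in-$K$ factor splits and \eqref{eq:enered} again lowers the energy into $I_{\le k-1}$; the $(k-1)$-case supplies decay $\e^{-s\theta_{k-1}d_\gamma}$ with $d_\gamma=\fl{\gamma_A(M)/3k}\ge\max(N,\gamma_A(M)/3k-1)$, which I would split as $\e^{-m_kN}\e^{-m_k\gamma_A(M)}$ and, summing over $\gamma_A(M)\in[4kN,r/2)$ against the far factor $f^\La_N(r-\gamma_A(M)-d_\gamma-O(1))$, turn into $\sum_p\e^{-m_k(r-p)}f^\La_N(p)$; the $(\lambda\Delta^2)^{-s}$ again comes from the interaction factors and the removal of the middle resolvent.

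\emph{Main obstacle.}
The delicate step is the $\cG_1$ block: the single average over the $\le 2$ variables $\omega_{\partial_{ex}^\La K}$ must simultaneously decouple the near, middle and far blocks into stochastically independent factors, extract the $\lambda^{-1}$ and the two $\Delta^{-1}$ without loss, and keep the near factor at $\e^{-m_kN}\,\mathrm{poly}(N)$ rather than with Hilbert--Schmidt prefactors that are exponential in $N$ or in $\abs K$ --- which forces the decoupling \eqref{eq:keydec}, the $Q_{\le k}$-dressing of Lemma~\ref{lem:inred}, and the large-deviation bound Lemma~\ref{lem:LD} to be interleaved in exactly the right order, and forces one to keep the macroscopic empty gap $K\setminus Z_0$ of width $\Theta(kN)$ so that the near resolvent genuinely decays on the good event. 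The secondary difficulty is bookkeeping: turning the $\gamma_A(M)$-decay of $\cG_3$ into honest convolution decay requires the constraints $\gamma_A(M)\ge4kN$ and $N<r/8k$ and a careful comparison of geometric series, and all polynomial $\abs\La^{\zeta_k}$ prefactors --- notably those from the $\nu$-sums in \eqref{eq:enered} and from enumerating the $\le k$-cluster configurations --- must be tracked so that on the convolution term they are entirely absorbed against a fraction of the $N$- or $(r-p)$-decay.
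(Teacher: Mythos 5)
Your decomposition into the three cluster groups, the treatment of $\cG_1$ by the decoupling \eqref{eq:keydec} around $K=[j]^\La_{6kN}$ (independent near/far blocks, middle resolvent removed by Lemma~\ref{lem:apri}, near block killed by the empty annulus via \eqref{eq:inredaHS} together with the large-deviation bound \eqref{eq:aprLD}, far block producing the convolution with $f^\La_N$), and the treatment of $\cG_3$ by the annular $K$ of \eqref{eq:K} with energy reduction via \eqref{eq:enered} and the $k-1$ induction hypothesis, all coincide with the paper's proof, and the destination of each group (clean term for $\cG_2$, convolution term for $\cG_1,\cG_3$) is correct.

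There is, however, one step that fails as written: in the $\cG_2$ block you bound the parenthesised factor of \eqref{eq:G_2} by the induction hypothesis, bound the remaining factor $P_-^{\partial^\La K}R^\La_E P_+^B Q^\La_{\le k}$ by the a priori estimate \eqref{eq:inredaHSE}, and then take ``the product''. These two factors are \emph{not} stochastically independent --- the second contains the full resolvent $R^\La_E$ and hence depends on all of $\omega_\La$, while the first depends on $\omega_K$ through $R^{K,K^c}_{E-\nu}$ --- so the expectation of the product does not factor (contrast this with your $\cG_1$ block, where the near and far resolvents live on disjoint regions and independence is genuine). The paper repairs this with Cauchy--Schwarz, $\E\pa{\norm{Y}^s\norm{Z}^s_{HS}}\le\pa{\E\norm{Y}^{2s}}^{1/2}\pa{\E\norm{Z}^{2s}_{HS}}^{1/2}$, followed by the interpolation $\E\norm{Y}^{2s}\le\pa{\E\norm{Y}^{s}}^{1/2}\pa{\E\norm{Y}^{3s}}^{1/2}$, so that the induction hypothesis is invoked at exponent $s$ while the a priori bounds are invoked at exponents $2s$ and $3s$; this is precisely where the hypothesis $s\in(0,\tfrac13)$ enters, which your proposal assumes but never uses. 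The price is that the exponential rate supplied by the induction hypothesis is divided by $4$ (see \eqref{eq:YYZ}--\eqref{Ga2r}), which is harmless but must be carried through the constants. With this correction your argument matches the paper's.
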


\begin{proof}

Let $k\in \N$, $s\in (0,\frac 13)$,  and assume  Theorem \ref{thm:maintech} holds for $k-1$.  Let $j \in \Lambda\subset  \Z$ finite and  $E\in I_{\le k}$,  $N\in \N$, and $r\in \N^0$ such that  $8kN <  r$.
 Let $G^\La_N(r)$  be as in \eq{GNr123}. It follows from \eq{cGi}, setting  $\cG^N_i =\cG_i^{\La,N}(\set{j}, [j]^\La_r)$, $i=1,2,3$   (see Section~\ref{clusterclass}),  that
\be\label{sumGi}
G^\La_N(r)&\le \sum_{i=1}^3 G_i(r), \sqtx{where}
G_i(r)=G^{\La,N}_i(r)=  \E\pa{\norm{\pi_{ \cG^N_i }\cN_j R^{\Lambda}_E P_+^{[j]^\La_r}Q_{\le k}^{\Lambda}}_{HS}^s}.
\ee

To estimate $G_1(r)$, we use
\eqref{eq:keydec} with $M= [j]_{4kN}^\La$ and  $K= [j]_{6kN}^\La$,  \eq{eq:weak1-1}, and \eq{PGamma1},  obtaining
\be\label{eq:2expect1g4}
&G_1(r) \le 
C \pa{\lambda \Delta^2}^{-s} \,\E_K\pa{\norm{Y}_{HS}^s}\,\E_{\pa{[K]_1^\La}^c} \pa{\norm{Z}_{HS}^s};\\ & \quad Y:= \chi_N^{K} Q^K_{\le k}{P_+^{K\setminus M}R_E^{K}P_-^{\partial_{in}^{\La} K}},\quad Z:=P_-^{\partial_{ex}^{\La}  [K]_1} R_E^{\pa{[K]_1^\La}^c}P_+^{[j]^\La_r\cap\pa{[K]_1^\La}^c}Q^{\pa{[K]_1^\La}^c}_{\le k}\chi_N^{\pa{[K]_1^\La}^c}.
\ee

 To estimate $\E_K\pa{\norm{Y}_{HS}^s}$, note that 
\be \label{estYHS1}
 \norm{Y}_{HS} \le \sum_{u\in \partial_{in}^{\La}  K} {\norm{Y_u}_{HS}},   \sqtx{where} Y_u= \chi_N^{K} Q^K_{\le k}P_+^{K\setminus M}R_E^{K} \cN_u, \sqtx{and}  \abs{\partial_{in}^{\La}  K}\le 2. 
\ee
Using \eqref{eq:inredaHS} and $\rho^K(\partial_{in}^{\La}  K,K\setminus M) \ge 2kN$,   for $u\in \partial^{\La} _{in} K$  we get
\be
\E_K\pa{\norm{Y_u}_{HS}^s}&\le C_k^s\pa{\abs{K}^{sk}\e^{-sm_0 2kN}+  \sum_{q=-1}^{\abs{K}}  \e^{-sm_0\pa{q}_+} \E\pa{\Big\|{\chi_N^{K}  Q^K_{\le k}P_-^{]u[^K_{q}}R^K_{E}P_+^{K\setminus M} Q^K_{\le k}}^s_{HS}\Big\|}}\\
&  \le  C_{k,s}\pa{ \abs{K}^{2sk+1} \e^{-sm_02kN}+
2 \sum_{q=-1}^{2kN-1}  \e^{-sm_0\pa{q}_+} f_N^K(2kN-q-1)}
\\
& \le  C_{k,s} \e^{-   m_{0,k}^\pr k  N} . 
\ee
where we used the  a priori bounds  \eq{eq:aprAB} and  \eq{eq:aprLD}.

Similarly,
\be\label{estZ1}
\norm{Z}_{HS}\le\sum_{u\in \partial_{ex}^{\La}  [K]_1} {\norm{Z_u}_{HS}},   \sqtx{where} Z_u= \cN_u R_E^{\pa{[K]_1^\La}^c}P_+^{[j]^\La_r\cap \pa{[K]_1^\La}^c}Q^{\pa{[K]_1^\La}^c}_{\le k}\chi_N^{\pa{[K]_1^\La}^c},  
\ee
and  $ \abs{\partial_{ex}^{\La}  [K]_1}\le 2$.   Using \eqref{eq:inredaHS}, for $u\in \partial_{ex} [K]_1$  we get
\be\label{ZcG1}
&\E_{\pa{[K]_1^\La}^c}\pa{\norm{Z_u}_{HS}^s}
 \le C_k^s\left(\abs{\La}^{sk}\e^{-sm_0 (r-6kN-2)}+ \right.  \\
 & \quad \left. \sum_{q=-1}^{\abs{\La}}  \e^{-sm_0\pa{q}_+} \E\Big({\Big\|{\chi_N^{{\pa{[K]_1^\La}^c}}  Q^{\pa{[K]_1^\La}^c}_{\le k} P_-^{]u[^{\pa{[K]_1^\La}^c}_{q}} R_E^{\pa{[K]_1^\La}^c} P_+^{[j]^\La_r\cap \pa{[K]_1^\La}^c} Q^{\pa{[K]_1^\La}^c}_{\le k}}\Big\|^s_{HS}}\Big)\right)\\
&  \le  C_k^s\pa{ \abs{\La}^{2sk+2} \e^{-sm_0 (r-6kN-2)}+
 \sum_{q=-1}^{r-6kN-3}  \e^{-sm_0\pa{q}_+} f_N^\La(r-6kN-q-3)}\\
 &  = C_k^s \pa{ \abs{\La}^{2sk+2} \e^{-sm_0 (r-6kN-2)}+
 \sum_{p=0}^{r-6kN-2}  \e^{-sm_0\pa{r-p-6kN -3}_+} f_N^\La(p)}.
\ee

Combining \eq{eq:2expect1g4}-\eq{ZcG1}, we get 
\be\label{gcG1}
&G_1(r)   \le  C \pa{\lambda \Delta^2}^{-s}  \e^{-   m_k^\pr k  N}\Big({ \abs{\La}^{2sk+2} \e^{-m_k^\pr  r}+
 \sum_{p=0}^{r}  \e^{-m_k^\pr \pa{r-p}} f_N^\La(p)}\Big),
\ee
for an appropriate $m_k^\pr >0$.

To estimate $G_2(r)$,  we note that it follows from  Lemma \ref{lem:partM}(ii), letting 
\be\label{eq:Kse} 
K(a)=[ j]^{\La}_{ad_\rho}\qtx{and} S(a)=[\partial^{\La} K(a)]^{\La}_{d_\rho-1}  \qtx{for} a\in\N,
\ee
that
\be\label{G2rrho}
G_2(r) \le \sum_{a=1}^{3k-1} G\up{a}_2(r), \quad G\up{a}_2(r)=   \E\pa{\norm{ \chi_N^{\La} Q^\La_{\le k}P_+^{S(a)}P_-^{K(a)}P_-^{(K(a))^c}\cN_j R^{\Lambda}_E P_+^{[j]^\La_r}Q_{\le k}^{\Lambda}}_{HS}^s}.
\ee

To estimate $ G\up{a}_2(r)$, we use \eq{eq:G2in} and \eq{eq:G_2}, the Cauchy-Schwarz inequality, and H\"older's inequality (recall $3s <1$),  to get (we mostly omit $a$ from the notation)
\be\label{eq:YZ}
G\up{a}_2(r) &\le C \Delta^{-s}  \pa{\E
\norm{Y}^{2s}}^{1/2}\pa{\E\norm{Z}^{2s}_{HS}}^{1/2}\\  & \le  C \Delta^{-s} \pa{\E\norm{Y}^{s}}^{1/4} \pa{\E\norm{Y}^{3s}}^{1/4}\pa{\E\norm{Z}^{2s}_{HS}}^{1/2} ,
\ee
where
\be\label{YZG2}
Y=\chi_N^{\La} Q^\La_{\le k}P_+^{S(a)}P_-^{K(a)}P_-^{{(K(a))^c}}\cN_jR_E^{K(a), (K(a))^c}P_-^{\partial ^{\La} K(a)} \sqtx{and} Z=P_-^{\partial^{\La} K(a)} R^\Lambda_E  P_+^{[j]^\La_r}Q_{\le k}^{\Lambda}\chi_N^\La.
\ee 

It follows immediately from  \eqref{eq:inredaHSE}   that
\be\label{normZ2s}
\E\norm{Z}^{2s}_{HS}\le  C \abs{\Lambda}^{4sk+3} \qtx{and} \E\norm{Y}^{3s}\le C   \abs{\Lambda}^{6sk+3},
\ee
where we used  $\abs{\partial^{\La} K(a)}\le 4$ since $K(a)$ is  connected, and hence  we have
\be\label{eq:YYZ}
G\up{a}_2(r) \le    C \Delta^{-s}  \abs{\La}^{\frac 72 sk+\frac 94}    \pa{\E\norm{Y}^{s}}^{1/4} .
\ee

To estimate $E\norm{Y}^{s}$, we use ( the dependence on $a$ is being ommitted)
\be\label{eq:splitx1'}
\norm{Y}\le \sum_{x\in\partial^{\La} K}\norm{Y_x}, \qtx{with}  Y_x=\chi_N^{\La} Q^\La_{\le k}P_+^{S}P_-^{K}P_-^{K^c}\cN_jR_E^{K,K^c}\cN_x. 
\ee

We  consider first  the case $x\in\partial^{\La}_{in}K$ . Using \eqref{eq:enered}, we can further decompose $Y_x$ as  
  \be
Y_x 
 =\sum_{ \nu\in \sigma (H^{K^c})\cap [\fd,\infty)} Y_{x,\nu}, \quad  Y_{x,\nu}= \chi_N^{\La} Q^\La_{\le k}P_+^{S}P_-^{K}P_-^{K^c}\cN_j\pa{R_{E-\nu}^{K}\otimes \pi_{\kappa_\nu}}\cN_x .
\ee
Note that 
\be\label{eq:splY1}
\norm{Y_x}=\max_\nu\norm{Y_{x,\nu}}\le \sum_{ \nu\in \sigma (H^{K^c})\cap [\fd,k(\fd))} \norm{Y_{x,\nu}}+\max_{ \nu\in \sigma (H^{K^c})\cap [k(\fd),\infty)} \norm{Y_{x,\nu}}.
\ee
Clearly, we can bound 
\be
\norm{Y_{x,\nu}}\le \norm{P_+^{S} \pa{R_{E-\nu}^{K}\otimes \pi_{\kappa_\nu}}\cN_x}\le \norm{P_+^{S\cap K}R_{E-\nu}^{K} \cN_x}.
\ee

For $\nu\ge\fd$, we have $E-\nu \in  I_{\le k-1} $ for $E\in I_{\le k}$ (recall \eqref{Ikle}).  For  $\nu\in \sigma (H^{K^c})\cap [\fd,k(\fd)) $, \
we  use the induction hypothesis for Theorem \ref{thm:maintech}
and the statistical independence of $H^{K^c}$ and $\set{\omega_i}_{i\in K}$ to conclude that
\be\label{eq:frscr1}
\E{\norm{Y_{x,\nu}}^{s}}\le  \E_{K}{{\norm{P_+^{S\cap K}R_{E-\nu}^{K} \cN_x}}^{s}} \le C_{k-1}\abs{\Lambda}^{\xi_{k-1}}\e^{-\theta_{k-1} \frac r {6k}},
\ee  
 where  we used \eqref{eq:distMT}.

For $\nu\in \sigma (H^{K^c})\cap [k(\fd),\infty)$, $E-\nu \le  \frac 34 \tfd$, and in this case 
\be
P_+^{S\cap K}R_{E-\nu}^{K}  \cN_x= P_+^{S\cap K}\what R_{E-\nu}^{K} \cN_x,
\ee so it follows from \eqref{CT} with $k=0$, using 
 \eqref{eq:distMT},  that 
\be\label{eq:frscr2}
  \norm{Y_{x,\nu}} \le  C_0 \e^{-m_{0}  \frac r {6k}}. 
\ee
Using \eqref{eq:splY1}, \eqref{eq:frscr1},  \eqref{eq:frscr2}, and \eq{trkH},  we get
 \be\label{eq:splitxf'}
 \E\norm{Y_{x}}^{s}& \le C\abs{\Lambda}^{\xi_{k-1}}\e^{-\theta_{k-1} \frac r {6k}}\tr  \chi_{ [\fd,k(\fd))}(H^{K^c})           \le C_k\abs{\Lambda}^{\xi_{k-1}+2k}\,\e^{-\frac {\theta_{k-1}}{6k} r }.
  \ee
Similar considerations show that the estimate \eq{eq:splitxf'} holds also for $x\in\partial_{ex}^{\La}K $.

Combining \eq{eq:splitx1'} and \eq{eq:splitxf'} and  recalling $\abs{\partial^{\La} K}\le 4$,  we get
\be\label{eq:Y2s1}
\E\norm{Y}^{s}\le C_k  \abs{\Lambda}^{\xi_{k-1}+2k}\,\e^{-\frac {\theta_{k-1}}{6k} r }.
\ee

Combining \eqref{eq:YYZ} and \eq{eq:Y2s1},  we see that
\be\label{Ga2r}
G\up{a}_2(r) \le  C_k \Delta^{-s}\abs{\Lambda}^{\zeta_k}\,\e^{-\frac {\theta_{k-1}}{24 k} r }.
\ee

It now follows from  \eq{G2rrho} and   \eq{Ga2r} that
\be\label{gcG2}
G_2(r) \le  C_k  \Delta^{-s}\abs{\Lambda}^{\zeta_k}\,\e^{-\frac {\theta_{k-1}}{24 k} r }\le C_k \Delta^{-s}\abs{\Lambda}^{\zeta_k}
\e^{- \theta_{k-1}^{\prr} r }.
\ee

To estimate $ G_3(r)$, given  $4kN< \gamma< \frac r 2$, we let $ d_\gamma:= \fl{ \frac{ \gamma}{3k}}$.  Given $a\in  \set{1,2,\ldots,3k-1}
$, 
we let $K(a,\gamma)$ be as in \eq{eq:K} with $A=\set{j}$, and let  $K_1(a,\gamma)=[j ]^\La_{a d_\gamma}$,    the connected component of $K(a,\gamma)$ that contains $j$. We also set $K_2(a,\gamma)=K(a,\gamma)\setminus K_1(a,\gamma)$, $S(a,\gamma)=[\partial^\La K(a,\gamma)]^\La _{d_\gamma-1}$, and $T(a,\gamma)=[j]^\La_{\gamma_{\set{j}}(M)}$. It follows from  Lemma \ref{lem:partM}(iii) that
\be
\pi_M=\pi_M P_+^{T(a,\gamma_{\set{j}}(M))}P_+^{S(a,\gamma_{\set{j}}(M))}P_-^{K_1(a,\gamma_{\set{j}}(M))}P_-^{K_2(a,\gamma_{\set{j}}(M))},
\ee
{for some} $a\in \set{1,2,\ldots,3k-1}$, 
and hence
\be\label{G3rrho}
G_3(r) &\le \sum_{\gamma=4kN+1}^{\fl{\frac r 2}} \sum_{a=1}^{3k-1} G\up{a,\gamma}_3(r), \qtx{where}    \\
G\up{a,\gamma}_3(r)& =   \E\pa{\norm{ \chi_N^{\La} Q^\La_{\le k}P_+^{T(a,\gamma)}P_+^{S(a,\gamma)}P_-^{K_1(a,\gamma)}P_-^{K_2(a,\gamma)}\cN_j R^{\Lambda}_E P_+^{[j]^\La_r}Q_{\le k}^{\Lambda}}_{HS}^s}.
\ee

To estimate $G\up{a,\gamma}_3(r)$, we start with the
 following analogue of \eqref{eq:2expect1g4}  (we mostly omit $(a,\gamma)$ from the notation):
\be\label{eq:2expect1g4?}
&G\up{a,\gamma}_3(r)\le 
C \pa{\lambda \Delta^2}^{-s} \,\E_K\pa{\norm{Y}^s}\,\E_{[K]_1^c} \pa{\norm{Z}_{HS}^s};\\ & \quad Y:= \chi_N^{K} Q^K_{\le k}P_+^{T\cap K}P_+^{S}P_-^{K_1}P_-^{K_2}R_E^{K}P_-^{\partial^\La_{in} K},\\ & \quad Z:=P_-^{\partial_{ex}^\La [K]_1} R_E^{\pa{[K]_1^\La}^c}P_+^{[j]^\La_r\cap{\pa{[K]_1^\La}^c}}Q^{{\pa{[K]_1^\La}^c}}_{\le k}\chi_N^{{\pa{[K]_1^\La}^c}}.
\ee

 Proceeding exactly as in \eq{estZ1}-\eq{ZcG1}, we get
\be\label{ZcG133}
\E_{\pa{[K]_1^\La}^c}\pa{\norm{Z}_{HS}^s} \le  C_k^s\Big({ \abs{\La}^{\xi_k}  \e^{-sm_0 (r-\gamma -d_\gamma)}+
 \sum_{p=0}^{r-(\gamma +d_\gamma)-2}  \e^{-sm_0\pa{r-p-(\gamma +d_\gamma) -3}_+} f_N^\La(p)}\Big).
\ee
 We estimate  $E\norm{Y}^{s}$ similarly to \eq{eq:splitx1'}-\eq{eq:Y2s1}. We have
\be\label{eq:splitx1'33}
\norm{Y}\le \sum_{x\in \partial_{in} K}\norm{Y_x}, \qtx{where}  Y_x=\chi_N^{K} Q^K_{\le k}P_+^{T\cap K}P_+^{S}P_-^{K_1}P_-^{K_2}R_E^{K}\cN_x .
\ee 
We  consider first  the case $x=x_i\in\partial_{in}{\pa{[K]_1^\La}^c}K_i$,  $i\in \set{1,2}$ , and  $i'=\set{1,2}\setminus \set{i}$.  Using \eqref{eq:enered}, we can further decompose $Y_x$ as  
  \be
Y_{x_i} 
 =\sum_{ \nu\in \sigma (H^{K_{i'}})\cap [\fd,\infty)} Y_{x_i,\nu}, \quad  Y_{x_i,\nu}= P_+^{S} P_-^{K_{i'}}  \pa{R_{E-\nu}^{K_i}\otimes \pi_{\kappa_\nu}}\cN_x .
\ee
Note that 
\be\label{eq:splY133}
\norm{Y_{x_i} }=\max_\nu\norm{Y_{{x_i} ,\nu}}\le \sum_{ \nu\in \sigma (H^{K_{i'}})\cap [\fd,k(\fd))} \norm{Y_{{x_i} ,\nu}}+\max_{ \nu\in \sigma (H^{K_{i'}})\cap [k(\fd),\infty)} \norm{Y_{{x_i} ,\nu}}.
\ee
Clearly, we can bound 
\be
\norm{Y_{{x_i} ,\nu}}\le \norm{P_+^{S} \pa{R_{E-\nu}^{K_i}\otimes \pi_{\kappa_\nu}}\cN_{x_i} }\le \norm{P_+^{S\cap K_i}R_{E-\nu}^{K_i} \cN_{x_i} }.
\ee

For $\nu\ge\fd$, we have $E-\nu \in  I_{\le k-1} $ for $E\in I_{\le k}$ (recall \eqref{Ikle}).  For  $\nu\in \sigma (H^{K_{i'}})\cap [\fd,k(\fd)) $, \
we  use the induction hypothesis for Theorem \ref{thm:maintech}
and the statistical independence of $H^{K_{i'}}$ and $\set{\omega_i}_{i\in K_i}$ to conclude that
\be\label{eq:frscr133}
\E{\norm{Y_{x,\nu}}^{s}}\le  \E_{K_i}\norm{P_+^{S\cap K_i}R_{E-\nu}^{K_i} \cN_{x_i}}^s \le C_{k-1}\abs{K_i}^{\xi_{k-1}}\e^{-\theta_{k-1} d_\gamma}\le C_{k-1}\abs{\gamma}^{\xi_{k-1}}\e^{-\theta_{k-1}d_\gamma}.
\ee

For $\nu\in \sigma (H^{K_{i'}})\cap [k(\fd),\infty)$, $E-\nu \le  \frac 34 \tfd$, and in this case \[P_+^{S\cap K_i}R_{E-\nu}^{K_i}  \cN_x= P_+^{S\cap K_i}\what R_{E-\nu}^{K_i}  \cN_{x_i},\] so it follows from \eqref{CT} with $k=0$  that 
\be\label{eq:frscr233}
  \norm{Y_{x_i,\nu}} \le  C_0 \e^{-m_{0}  d_\gamma}. 
\ee
Using \eqref{eq:splY133}, \eqref{eq:frscr133},  \eqref{eq:frscr233}, and \eq{trkH},  we get
 \be\label{eq:splitxf'33}
 \E\norm{Y_{x_i}}^{s}& \le C\gamma^{\xi_{k-1}}\e^{-\frac {\theta_{k-1}}{3k} \gamma}\tr  \chi_{ [\fd,k(\fd))}(H^{K_{i'}})           \le C_k\gamma^{\xi_{k-1}+2k}\,\e^{-\frac {\theta_{k-1}}{3k} \gamma}.
 \ee

Combining \eq{eq:splitx1'33} and \eq{eq:splitxf'33} and  recalling $\abs{\partial_{in}^\La K_i}\le 4$,  we get
\be\label{eq:Y2s133}
\E\norm{Y}^{s}\le C_k  \abs{\gamma}^{\xi_{k-1}+2k}\,\e^{-\frac {\theta_{k-1}}{3k} \gamma}  \le  C_k\e^{- \theta^{\pr}_{k-1} \gamma}.
\ee

Combining \eq{eq:2expect1g4?},  \eq{ZcG133}, and  \eq{eq:Y2s133}, we get
\be\label{G3rrhoga}
G\up{a,\gamma}_3(r)\le C_k \pa{\lambda \Delta^2}^{-s}\e^{- \theta^\pr_{k-1} \gamma}
 \Big({ \abs{\La}^{2sk+2}  \e^{-s \theta^\pr_{k-1} r}+
 \sum_{p=0}^{r }  \e^{-s \theta^\pr_{k-1} \pa{r-p}}f_N^\La(p)  }\Big).
\ee
It follows from \eq{G3rrho} and \eq{G3rrhoga} that
\be\label{gcG3}
G_3(r)
 \le C_k \pa{\lambda \Delta^2}^{-s}  \e^{-\what  m N}\Big({ \abs{\La}^{2sk+2}  \e^{-s\theta^\pr_{k-1}r}+
 \sum_{p=0}^{r}  \e^{-s \theta^\pr_{k-1} \pa{r-p}}f_N^\La(p)  }\Big).
\ee

Putting together \eq{sumGi}, \eq{gcG1}, \eq{gcG2} , and \eq{gcG3}, we obtain \eq{GNr123}.
\end{proof}

We can now prove Lemma \ref{lem:F_{p,q}}.

\begin{proof}[Proof of Lemma \ref{lem:F_{p,q}}]
For $\La \subset \Z$ finite,  $E\in I_{\le k}$,  $N\in\N$,  and $r\in \N^0$ , we set
  \be \label{eq:subhfanew}
f^\La_{N}(r)= f_N^{\La}(k,E,r)=\max_{\Theta \subset\La }\max_{j\in \Theta}
\E\pa{\norm{\chi_N^\Theta  Q_{\le k}^{\Theta}   \cN_j R^{\Theta}_E P_+^{[j]^{\Theta}_r}Q_{\le k}^{\Theta}}_{HS}^s}.
\ee 
Note  that  $f^\La_N(r)$ is monotone increasing in $\La$, and it follows   from \eq{eq:aprAB} that 
\be\label{fbddNknew}
\max_{r\in \N^0} f_N^\La(r) \le  C\lambda^{-s}k ^s\abs{\Lambda}^{2sk+1}.
\ee
Moreover, if $8kN \ge r$, it follows from \eq{eq:aprLD1} that
\be\label{fNlarded}
f^\La_{N}(r)\le C_{k,s}\abs{\Lambda}^{2(sk+1)}  e^{-m_{0,\mu}r} .
\ee

If $8kN <  r$ we  use   Lemma \ref{lem:maintech}.
Since this lemma holds for arbitrary  finite subsets of $\Z$, it follows from  \eq{GNr123} that for $8kN < r$ we have
\beq
f_N^\La(r) \le  C_k \pa{ \abs{\La}^{\zeta_k} \e^{-m_k  r}  + \e^{-   m_k  N}  \pa{\lambda \Delta^2}^{-s}  \sum_{p=0}^{r}  \e^{-m_k \pa{r-p}} f_N^\La(p)},
\eeq
for all  $\La \subset \Z$ finite. Combining with \eq{fNlarded}, we get  (with possibly slightly different constants $C$, $m_k >0$,   $\zeta_k >0$)
\be\label{eq:fineq}
f^\La(r) \le  \sum_{N=1} ^{\abs{\Lambda}} f_N^\La(r) \le C\pa{ \abs{\La}^{\zeta_k} \e^{-m_k  r}  +  \pa{\lambda \Delta^2}^{-s}  \sum_{p=0}^{r}  \e^{-m_k \pa{r-p}} }.
\ee

The proof can now be completed by a standard subharmonicity argument. Let $h^\La(r)=  f^\La(r) - 2C\abs{\La}^{{\zeta_k}} \e^{-m_k \frac r2}$, and take $\Delta \ge \Delta_0$ and $\lambda \ge \lambda_0$ such that
\be\label{eq:subhara'}
2C   \pa{\lambda \Delta^2}^{-s}  \sum_{q=-\infty}^\infty \e^{-m_k \frac {\abs{q}} 2}\le1.
\ee

 Then \eqref{eq:fineq} implies that 
\be\label{eq:gtest}
h^\La(r) &\le C\abs{\La}^{{\zeta_k}} \e^{-m_k  r}-  2C\abs{\La}^{{\zeta_k}} \e^{-m_k \frac r2}\\
& \qquad 
+  C \pa{\lambda \Delta^2}^{-s}  \sum_{p=0}^{r}  \e^{-m_k \pa{r-p}} \pa{h^\La(p)+ 2C\abs{\La}^{{\zeta_k}} \e^{-m_k \frac p2}}\\
& \le C\abs{\La}^{{\zeta_k}} \pa{\e^{-m_k  r}-\e^{-m_k \frac r2}}+C \pa{\lambda \Delta^2}^{-s}  \sum_{p=0}^{r}  \e^{-m_k \pa{r-p}} h^\La(p),
\ee
for all $r\in\N^0$. In addition, it follows from \eqref{fbddNknew} that 
\be R=\sup_{r\in\N^0}h^\La (r)\le \sup_{r\in\N^0}f^\La (r)\le  C\abs{\Lambda}^{2sk+3}<\infty.
\ee

We claim that 
 $R\le 0$, which implies that \eqref{eq:F_{p,q}} holds (with different constants), finishing the proof of Lemma \ref{lem:F_{p,q}}. Indeed, suppose  that $R>0$. Then it follows from  \eq{eq:gtest} and \eq{eq:subhara'} that  
\be
R\le    C  \pa{\lambda \Delta^2}^{-s} \sup_{r\in\N^0}\pa{\sum_{p=0}^{\abs{\Lambda}}\e^{-m_k\abs{r-p}} }  R \le  C  \pa{\lambda \Delta^2}^{-s}  \pa{   \sum_{q=-\infty}^\infty \e^{-m_k\frac {\abs{q}} 2}}R \le
\tfrac 12 R,
\ee 
 a contradiction.
\end{proof}

 The proof of Theorem \ref{thm:maintech} is complete.

 \section{Quasi-locality in expectation}\label{sec:cor}
 
In this section we prove Corollary~\ref{thm:eigencor}. To do so we first extract from Theorem~\ref{thm:maintech} a probabilistic statement (cf.  \cite[Proposition 5.1]{ETV}  and \cite[Lemma 7.2]{EKS1}). 
 
 We fix $k\in \N$ and let $s, \theta_k, \xi_k$ be as in \eq{eq:mainbnd}, slightly modified so \eq{eq:mainbnd} holds with  $\rho^\La(A,B)$ substituted for $\dist_\La(A,B^c)$ (recall \eq{defrhoLa}).  
 
 We fix a finite subset  $\La$ of $\Z$ .
 Given  $\emp\ne K\subset \Lambda$,  we let $H^{K^\pr}$  be the restriction of $H^K$ to $\Ran P_-^K=\Ran \chi_{\N}(\cN_K)$, $K^c=\La \setminus K$ (we allow $K^c=\emp$), and consider  $H^{{K^\pr,K^c}} =H^{K^\pr}+ H^{K^c} $,   $ \Gamma^{{K^\pr,K^c}} =H^\La-H^{{K^\pr,K^c}} $, $R_E^{{K^\pr,K^c}} =(H^{{K^\pr,K^c}} -E)^{-1}$, operators on 
 $ \Ran P_-^K \oplus \cH_{K^c}$.   
 Given an interval $I$ and an operator $H$, we set $\sigma_I(H)=\sigma(H)\cap I$.

 We start by proving Wegner-like estimates for the XXZ model. 
 
 \begin{lemma} Let $\emp\ne K\subset \Lambda$. 
  \begin{enumerate}
  
\item   Consider the open  interval $I\subset I_k$.  Then

\be\label{Wegnerineq}
\P_K \set{\sigma_I(H^{{K^\pr,K^c}} ) \ne \emp} \le C_k \lambda^{-1} \abs{I} \abs{\Lambda}^{2k+1}.
\ee

\item   Let $0<  \delta < \frac  1 4{\nfd}$.  Then (recall  \eq{Ikle})
\be
\P\set{\dist \set{\sigma_{\what I_k}(H^{{K^\pr,K^c} }), \sigma_{\what I_k} (H^{K^c})}<\delta }\le C_k \lambda^{-1} \delta \abs{\Lambda}^{4k+1}.
\ee
\end{enumerate}

 \end{lemma}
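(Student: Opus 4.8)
The key mechanism behind both parts is that the operators $H^{K',K^c}$ and $H^{K^c}$ depend on the random variables $\omega_K = \{\omega_i\}_{i\in K}$ through the single multiplication term $\lambda V^K_\omega = \lambda \sum_{i\in K}\omega_i \mathcal N_i$, restricted to $\Ran P_-^K$ where $\mathcal N_K \ge 1$. On this subspace, as one varies the common shift $\omega_i \mapsto \omega_i + t$ for all $i\in K$ (equivalently, varies one scalar parameter $t$), the operator $H^{K',K^c}$ moves like $H^{K',K^c} + t\lambda \mathcal N_K$ with $\mathcal N_K \ge 1$, so each eigenvalue in the $K'$-block moves at speed at least $\lambda$. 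This is the standard spectral-averaging input (a one-parameter family with positive velocity), and it is exactly the setup in which the Wegner estimate for $\ell^2$-random Schrödinger operators is proven. Since $\mu$ is absolutely continuous with bounded density, averaging over $t$ (using that $\{0,1\}\subset\supp\mu\subset[0,1]$ gives room for the shift within the support up to a constant) controls the probability that any eigenvalue lands in a short interval.

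For part (i), the plan is: first, restrict attention to the block $H^{K'}$ on $\Ran P_-^K$, since the eigenvalues of $H^{K',K^c}$ in $I \subset I_k$ that involve the $K^c$-factor contribute a nonnegative amount $\ge (1-\tfrac1\Delta)W^{K^c}_\cdot$ from $H^{K^c}$, and one uses the decomposition $R_E^{K',K^c}=\sum_\nu R_{E-\nu}^{K'}\otimes\pi_{\kappa_\nu}$ exactly as in \eq{eq:enered}; the relevant energies $E-\nu$ still lie in a bounded window. Then bound $\P_K\{\sigma_I(H^{K'})\ne\emp\} \le \E_K\,\tr \chi_I(H^{K'})$, split the trace over the canonical basis of $\Ran P_-^K$ restricted to configurations with $\le$ some bounded number of clusters is not needed here — actually we simply need $\tr\chi_{\what I_k}(H^{K'})\le \tr\chi_{\what I_{\le k}}(H^K)\le k|\Lambda|^{2k}+1$ from \eq{trkH} as the a priori count of relevant eigenvalues — and for each such eigenvalue apply the one-parameter spectral-averaging bound $\E_K\big(\text{number of eigenvalues of }H^{K'}\text{ in }I\big)\le C\lambda^{-1}|I|\cdot(\text{rank bound})$. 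Combining the eigenvalue count $O(|\Lambda|^{2k})$ with the per-eigenvalue probability $O(\lambda^{-1}|I|)$ and the extra factor $|\Lambda|$ from summing over $\nu\in\sigma(H^{K^c})$ in the bounded window (which has at most $O(|\Lambda|^{2k})$ points, but here one only needs the coarser count from \eq{trkH}) yields \eq{Wegnerineq}; the exponent $2k+1$ comes from $2k$ (eigenvalue count) plus $1$, and a careful bookkeeping of which trace bound is used gives precisely $|\Lambda|^{2k+1}$.

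For part (ii), the plan is to write the bad event as $\{\exists\, \mu\in\sigma_{\what I_k}(H^{K'}),\ \nu\in\sigma_{\what I_k}(H^{K^c}):\ |(\mu+\nu')-\nu|<\delta \text{ for some eigenvalue structure}\}$ — more precisely, the eigenvalues of $H^{K',K^c}$ in $\what I_k$ are sums $\mu+\nu$ with $\mu\in\sigma(H^{K'})\cup\{0\}$ (the $0$ being excluded on $\Ran P_-^K$, so $\mu\ge 1-\tfrac1\Delta$) and $\nu\in\sigma(H^{K^c})$, and we want these to avoid a $\delta$-neighborhood of $\sigma_{\what I_k}(H^{K^c})$, i.e., we want $\mu\notin(-\delta,\delta)$ effectively, but $\mu\ge 1-\tfrac1\Delta > 4\delta$ already — so the real content is that the $K'$-eigenvalues $\mu$ together with $K^c$-eigenvalue differences do not cluster. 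The clean way: condition on $\omega_{K^c}$, so $\sigma_{\what I_k}(H^{K^c})$ is a fixed finite set of at most $k|\Lambda|^{2k}+1$ points (by \eq{trkH} applied to $H^{K^c}$, or rather $\tr\chi_{\what I_{\le k}}(H^{K^c})$); for each such point $\nu_0$ and each pair we need the eigenvalues of $H^{K'}$ to avoid the finite set $\{\nu-\nu_0 : \nu\in\sigma_{\what I_k}(H^{K^c})\}$ up to $\delta$; union-bound over the $O(|\Lambda|^{2k})\times O(|\Lambda|^{2k})$ pairs, and for each target point apply part (i)-style spectral averaging over $\omega_K$ to get probability $\le C_k\lambda^{-1}\delta|\Lambda|^{\cdots}$. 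The two factors of $|\Lambda|^{2k}$ (one from $\sigma(H^{K^c})$, one from the eigenvalue count of $H^{K'}$), plus the per-point $\lambda^{-1}\delta$ and one extra $|\Lambda|$ for the rank, give the exponent $4k+1$.

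\textbf{Main obstacle.} The delicate point is the spectral-averaging step itself on the \emph{restricted} operator $H^{K'}$: one must check that the one-parameter family obtained by the common shift $\omega_i\mapsto\omega_i+t$, $i\in K$, genuinely stays within (a fixed fraction of) the support of the product measure $\bigotimes_{i\in K}\mu$ — this is where $\{0,1\}\subset\supp\mu$ is used — and that the velocity lower bound $\lambda \mathcal N_K\ge\lambda$ holds on all of $\Ran P_-^K$, including the sectors with many clusters, not just $\Ran Q^K_{\le k}$. A secondary bookkeeping obstacle is keeping the polynomial powers of $|\Lambda|$ sharp when combining the eigenvalue-counting bounds from \eq{trkH} with the summation over $\nu\in\sigma(H^{K^c})$ in \eq{eq:enered}; getting exactly $2k+1$ and $4k+1$ rather than something larger requires using the trace bound $\tr\chi_{\what I_{\le k}}(H^{\bullet})\le k|\Lambda|^{2k}+1$ in place of the cruder $\norm{Q^\Lambda_{\le k}}_{HS}$-type estimates, and noting that $\abs{I}\le \abs{\what I_k}\le C\tfd$ is bounded so it does not inflate the power count. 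I expect the velocity/support-room verification to be the genuinely substantive step; everything else is assembling pieces already available in the excerpt (\eq{eq:enered}, \eq{trkH}, absolute continuity of $\mu$).
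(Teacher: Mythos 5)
Your core mechanism is the same as the paper's: on $\Ran P_-^K$ one has $\cN_K\ge 1$, so under the diagonal shift $\omega_K\mapsto\omega_K+te$ every eigenvalue of $H^{K^\pr,K^c}$ (for fixed $\omega_{K^c}$) increases at speed at least $\lambda$ by the min--max principle; combined with the eigenvalue count \eq{trkH} and the bounded density of $\mu$ this gives (i), and (ii) follows by conditioning on $\omega_{K^c}$ and union-bounding over the at most $Ck\abs{\La}^{2k}+1$ points of $\sigma_{\what I_k}(H^{K^c})$, exactly as in the paper.

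Two corrections to your bookkeeping for part (i). First, the detour through the tensor decomposition \eq{eq:enered} is unnecessary and, taken literally, would spoil the exponent: union-bounding over pairs $(\mu,\nu)$ with $\mu\in\sigma(H^{K^\pr})$ and $\nu\in\sigma(H^{K^c})$ costs a factor $\abs{\La}^{2k}\cdot\abs{\La}^{2k}$, not $\abs{\La}^{2k}$. The paper instead applies the trace bound \eq{trkH} directly to $H^{K^\pr,K^c}$ and treats its at most $k\abs{\La}^{2k}+1$ eigenvalues $E_n(\omega_K)$ in $\what I_{\le k}$ as monotone functions of $\omega_K$; no summation over $\nu$ enters. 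Second, the extra factor $\abs{\La}$ in \eq{Wegnerineq} does not come from a $\nu$-sum: it is the factor $\abs{K}\le\abs{\La}$ in Stollmann's Lemma, which is the multi-parameter spectral-averaging tool the paper invokes, giving $\P_{K}\set{E_n(\omega_K)\in I}\le C\abs{I}\lambda^{-1}\abs{K}$ per eigenvalue. Relatedly, the ``room for the shift inside $\supp\mu$'' issue you flag as the main obstacle is a non-issue for Stollmann's Lemma, which only requires the single-site density to be bounded; the hypothesis $\set{0,1}\subset\supp\mu$ plays no role in this lemma.
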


\begin{proof} To prove Part (i), 
recall \eq{trkH}  (it applies to $H^{\pa{K^\pr,K^c}}  $), let $E_1 \le E_2 \le \ldots $ be the at most $Ck\abs{\Lambda}^{2k}$
 eigenvalues of $H^{{K^\pr,K^c} }$ in $\what  I_{\le k}$, counted with multiplicity, which we consider as functions of  $ \omega_{K}$ for fixed $\omega_{K^c}$.  Since $\cN_K \ge 1$, each  $E_n( \omega_{K})$ is a monotone function on $\R^{\abs{K}}$. Let $e=(1,1,\ldots,1)\in \R^{\abs{K}}$.    We have $E_n(\omega_{K}+te) - E_n(\omega_{K}) \ge \lambda t$ for all $t>0$ and all $n$ by the min-max principle, so we  can apply  Stollmann's Lemma \cite{St}  to get
\beq\label{WegoneeigN}
\P_{K} \{E_n({\omega_K}) \in I \} \le C \abs{I} \lambda^{-1} \abs{K}.
\eeq 

In view of \eq{trkH}, \eq{Wegnerineq} follows using \eq{WegoneeigN} for each one of the eigenvalues $E_n$. 

Part (ii) follows from Part (i) and \eq{trkH} for $H^{K^c}$, since the random variables $\omega_{K}$ and $\omega_{K^c}$ are independent.
\end{proof}

 Let $E\in \R$,  $m>0$,  $ r\in \N$, $\emp \ne K\subset \La$, and let   $H^\sharp$ denote either $H^K$ or $H^{\pa{K^\pr,K^c}}$.  Then  the operator
$H^{K^\sharp}$ is said to be $(m,E,r)$-regular if
\be \label{defregular}
&F^{K^\sharp}_E \le  \e^{-mr}  \qtx{and} \dist(E,\sigma (H^{K^\sharp})) >\e^{-mr} ,\\
& \text{where}\quad  F^{K^\sharp}_E= \max_{i\in K}F^{K^\sharp}_E(i) \qtx{with}   F^{K^\sharp}_E (i)=\norm{\cN_i R^{K^\sharp}_E P_+^{[i]_r^K}}.
\ee
In addition, consider the probabilistic event
\be\label{MSAdef}
\cF_{k}^\La (K,m,r)= \set{E\in I_k \implies \sqtx{either} \  H^{\pa{K^\pr,K^c}} \sqtx{or } H^{{K^c}} \sqtx{is}  (m, E,r)\text{-regular}}.
\ee

 \begin{lemma} \label{lemMSAprob} Let $\emp \ne K\subsetneq \La$ , and let  $r\in \N$, $r\ge\frac {18} {\theta_k}$ .
Then
 \be
\P\set{\pa{\cF_{k}^\La (K,\tfrac {\theta_k} 9,r)}^c} \le  C\abs{\La}^{\xi^\pr_k} \e^{- \frac {\theta_k} 9 r}.
 \ee
  \end{lemma}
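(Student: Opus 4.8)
We work under the hypotheses of Corollary~\ref{thm:eigencor} (so Theorem~\ref{thm:maintech} and the two Wegner bounds above are available), write $m=\tfrac{\theta_k}{9}$, and read ``$H^{K^c}$ is $(m,E,r)$--regular'' as $\dist(E,\sigma(H^{K^c}))>\e^{-mr}$ (for $i\in K$ one has $\cN_i P_+^{[i]^K_r}=0$, so the $F$--condition is vacuous for $H^{K^c}$). We may also assume $r$ is large (depending on $\Delta_0,\lambda_0,k,s$): for the remaining finitely many $r\ge 18/\theta_k$ the claim is trivial, as $\P(\cdot)\le1$ while $\e^{-mr}$ stays bounded below. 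The plan has three steps.

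\emph{Step 1 (spectral separation).} The plan is to set $\delta=3\e^{-mr}$, which for large $r$ obeys $\delta<\tfrac14\nfd$, so Part~(ii) above applies to
\be
\cW:=\set{\dist\bigl(\sigma_{\what I_k}(H^{{K^\pr,K^c}}),\,\sigma_{\what I_k}(H^{K^c})\bigr)\ge\delta}
\ee
and gives $\P(\cW^c)\le C_k\lambda^{-1}\delta\abs{\La}^{4k+1}\le C_k\abs{\La}^{4k+1}\e^{-mr}$. Since $\sigma(H^{{K^\pr,K^c}})\subset[\fd,\infty)$ and $\sigma(H^{K^c})\subset\set{0}\cup[\fd,\infty)$, while $E\in I_k$ forces $\sfd\le E<(k+\tfrac34)\tfd$, any spectral point of either operator lying within $\e^{-mr}$ of $E$ must lie in $\what I_k$ (the complementary parts are at distance $\ge\tfrac14\nfd>\e^{-mr}$ from $E$). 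Hence on $\cW$, for each $E\in I_k$ at most one of $\dist(E,\sigma(H^{{K^\pr,K^c}}))$, $\dist(E,\sigma(H^{K^c}))$ is $\le\e^{-mr}$, and if $\dist(E,\sigma(H^{K^c}))\le\e^{-mr}$ then $\dist(E,\sigma(H^{{K^\pr,K^c}}))\ge2\e^{-mr}>\e^{-mr}$. Therefore, on $\cW$, an $E\in I_k$ for which neither operator is $(m,E,r)$--regular must satisfy $\dist(E,\sigma(H^{{K^\pr,K^c}}))\ge\e^{-mr}$ and $F^{{K^\pr,K^c}}_E>\e^{-mr}$, so
\be
\pa{\cF_k^\La(K,m,r)}^c\subset\cW^c\cup\cE,\qquad\cE:=\set{\exists\,E\in I_k:\ \dist(E,\sigma(H^{{K^\pr,K^c}}))\ge\e^{-mr}\ \text{and}\ F^{{K^\pr,K^c}}_E>\e^{-mr}}.
\ee

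\emph{Step 2 (a fixed--energy fractional moment bound for $H^{{K^\pr,K^c}}$).} Since $[H^K,\cN_K]=0$, the space $\Ran P_-^K$ is $H^K$--invariant with $R^{K^\pr}_z=P_-^K R^K_z$ on it, so $R^{{K^\pr,K^c}}_E=\sum_{\nu\in\sigma(H^{K^c})}R^{K^\pr}_{E-\nu}\otimes\pi_{\kappa_\nu}$; using $\cN_i=P_-^{\set{i}}\le P_-^K$ and $P_-^K P_+^{[i]^K_r}=P_+^{[i]^K_r}P_-^{K\setminus[i]^K_r}$, I would deduce, for $E\in I_k$ and $i\in K$,
\be
F^{{K^\pr,K^c}}_E(i)\le\sup_{\nu\in\sigma(H^{K^c})}\norm{\cN_i R^K_{E-\nu}P_+^{[i]^K_r}},
\ee
and then split $\sigma(H^{K^c})$ at $k(\fd)$. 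For $\nu<k(\fd)$ one has $E-\nu\in I_{\le k}$, there are at most $\tr\chi_{\what I_{\le k}}(H^{K^c})\le C_k\abs{\La}^{2k}$ such $\nu$ by~\eqref{trkH}, and by independence of $\omega_K,\omega_{K^c}$ together with Theorem~\ref{thm:maintech} applied to $H^K$ (with $A=\set{i}$, $B=[i]^K_r$, $\rho^K(\set{i},[i]^K_r)\ge r$) their combined $s$--th moment contribution is $\le C_k\abs{\La}^{2k+\xi_k}\e^{-\theta_k r}$. For $\nu\ge k(\fd)$ we have $E-\nu<\tfrac34\tfd$, i.e.\ $E-\nu\in I_{\le0}$, and since $R^K_{E-\nu}=\what R^{K}_{0,E-\nu}$ on $\Ran P_-^K$, \eqref{CT} with $k=0$ gives the deterministic bound $\norm{\cN_i R^K_{E-\nu}P_+^{[i]^K_r}}\le C_0\e^{-m_0 r}$. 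Combining these, and (harmlessly) shrinking $\theta_k$ so that $\theta_k\le sm_0$, produces a constant $\xi^\pr_k$ with
\be\label{plan:fmb}
\E\norm{\cN_i R^{{K^\pr,K^c}}_E P_+^{[i]^K_r}}^s\le C_k\abs{\La}^{\xi^\pr_k}\e^{-\theta_k r}\qquad\text{for }E\in I_k,\ i\in K.
\ee

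\emph{Step 3 (discretization and union bound).} Next I would take a grid $G\subset I_k$ of spacing $\eta=\tfrac14\e^{-3mr}$, so $\abs{G}\le C_k\e^{3mr}$ (using $\abs{I_k}\le k$). If $E\in I_k$ has $\dist(E,\sigma(H^{{K^\pr,K^c}}))\ge\e^{-mr}$ and $F^{{K^\pr,K^c}}_E(i)>\e^{-mr}$ for some $i$, choose $E^\pr\in G$ with $\abs{E-E^\pr}\le\eta$; then $\dist(E^\pr,\sigma(H^{{K^\pr,K^c}}))\ge\tfrac12\e^{-mr}$, and the resolvent identity $R^{{K^\pr,K^c}}_E=R^{{K^\pr,K^c}}_{E^\pr}+(E-E^\pr)R^{{K^\pr,K^c}}_E R^{{K^\pr,K^c}}_{E^\pr}$ gives $F^{{K^\pr,K^c}}_{E^\pr}(i)\ge F^{{K^\pr,K^c}}_E(i)-2\eta\e^{2mr}=F^{{K^\pr,K^c}}_E(i)-\tfrac12\e^{-mr}>\tfrac12\e^{-mr}$. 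Thus $\cE\subset\bigcup_{E^\pr\in G}\bigcup_{i\in K}\set{F^{{K^\pr,K^c}}_{E^\pr}(i)>\tfrac12\e^{-mr}}$, and Markov's inequality together with~\eqref{plan:fmb} bounds each event by $2^s\e^{smr}C_k\abs{\La}^{\xi^\pr_k}\e^{-\theta_k r}$. Summing over $\abs{G}\le C_k\e^{\theta_k r/3}$ grid points and $\abs{K}\le\abs{\La}$ sites, and using $s<1$, $m=\theta_k/9$,
\be
\P(\cE)\le C_k\abs{\La}^{\xi^\pr_k+1}\e^{(\frac13-1+\frac s9)\theta_k r}\le C_k\abs{\La}^{\xi^\pr_k+1}\e^{-\frac59\theta_k r}.
\ee
Together with Step~1 this yields $\P\pa{(\cF_k^\La(K,m,r))^c}\le\P(\cW^c)+\P(\cE)\le C\abs{\La}^{\xi^\pr_k}\e^{-\frac{\theta_k}{9} r}$, as claimed.

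The main obstacle is Step~2: $H^{{K^\pr,K^c}}$ is not itself an XXZ Hamiltonian, so Theorem~\ref{thm:maintech} does not apply to it directly. One must pass through the eigendecomposition of $H^{K^c}$, keep careful track of the restriction to $\Ran P_-^K$ and of how $\cN_i$ and $P_+^{[i]^K_r}$ compose there, and split the shifted energies $E-\nu$ into those below $k(\fd)$ (only polynomially many, by~\eqref{trkH}), handled by Theorem~\ref{thm:maintech}, and those above $k(\fd)$ (hence below $\tfrac34\tfd$), handled deterministically by~\eqref{CT} --- exactly the ``reduction to lower energies'' mechanism of \S\ref{subsec:dec}. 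Everything else (matching the rates $\theta_k$, $m_0$, $sm$ and the polynomial losses from the grid) is routine bookkeeping.
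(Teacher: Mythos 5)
Your argument is correct and reaches the stated bound, but its second half follows a genuinely different route from the paper. After the common first step (separating $\sigma_{\what I_k}(H^{K^\pr,K^c})$ from $\sigma_{\what I_k}(H^{K^c})$ via part (ii) of the Wegner-type lemma), the paper converts the fractional-moment input into a probabilistic statement by bounding the Lebesgue measure of the bad energy set $J_S=\set{E\in I_k:\ F^S_E>\e^{-2mr}}$ and showing that on $\set{\abs{J_S}\le\e^{-5mr}}$ every $E$ with $F^S_E>\e^{-mr}$ must lie within $\e^{-mr}$ of $\sigma(H^S)$; you instead prove a fixed-energy fractional-moment bound and then discretize $I_k$ on an $\e^{-3mr}$-net with a union bound. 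The two mechanisms are interchangeable here and cost the same polynomial factors. A genuine plus of your write-up is Step 2: the paper's proof invokes \eqref{eq:mainbnd} for $F^{(K^\pr,K^c)}_E$ without comment even though $H^{K^\pr,K^c}$ is not itself an XXZ Hamiltonian, whereas you supply the missing reduction through the eigendecomposition of $H^{K^c}$, splitting the shifted energies at $k\sfd$ and handling the large-$\nu$ part deterministically via \eqref{CT} with $k=0$.

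One caveat. You read regularity of $H^{K^c}$ as the bare distance condition because, taken literally, \eqref{defregular} gives $F^{K^c}_E=\max_{i\in K}\norm{\cN_i R^{K^c}_E P_+^{[i]^K_r}}=0$. That is consistent with the letter of the definition, but the paper's own proof treats $F^{K^c}_E$ as the nontrivial quantity with $i$ ranging over $K^c$ (it sums $\E\set{(F^S_E(i))^s}$ over $i\in S^\pr=K^c$), and the application in the proof of Corollary~\ref{thm:eigencor} --- the bound \eqref{dynloc2} --- genuinely uses a quantitative estimate on $\norm{P_-^{\partial^\La_{ex}K}R^{K^c}_EP_+^{[u]^\La_p\cap K^c}}$ when $H^{K^c}$ is declared regular. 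So under the intended (non-vacuous) reading your proof has a gap: you never control $F^{K^c}_E$. The fix is immediate within your framework: run Steps 2 and 3 for $H^{K^c}$ as well (easier, since $H^{K^c}$ is an honest XXZ Hamiltonian on $K^c$ and Theorem~\ref{thm:maintech} applies to it directly), and adjust the dichotomy of Step 1 so that whichever operator is spectrally far from $E$ is also required to satisfy its $F$-bound.
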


\begin{proof}
 Let $\emp \ne K\subsetneq \La$, $r\ge   \tfrac {18} {{\theta_k}}$,  and set $m= \frac {{\theta_k}} 9$, so $\e^{mr}\ge 4$.  Let $S$ denote either the pair  ${{K^\pr,K^c}}$ or ${K^c}$, and let $S^\pr=K$ if $S={{K^\pr,K^c}} $,  or $S^\pr=K^c$ if $S= K^c$.  Consider the (random) energy sets
 \be
 D_S=\set{E\in I_k: \  F^{S}_E> \e^{-mr}}\qtx{and}
 J_S=\set{E\in I_k:  \ F^{S}_E> \e^{-2mr}},
 \ee
and the event
\be
\cJ_S = \set{\abs{J_S} > e^{-5mr}}.
\ee

Using \eq{eq:mainbnd} we get
\be
\P\set{\cJ_S}  & \le    \e^{5mr} \, \E \set{ \abs{J_S}} \le \e^{5mr} \, \E \set{ \int_{I_k} \e^{2s mr}   \pa{F^{S}_E}^s \ dE    }\\
& \le \e^{7 mr} \int_{I_k}   \sum_{i \in S^\pr} \E\set{\pa{F^{S}_E(i) }^s }\ dE    \le C_k \abs{\La}^{\xi_k+1} e^{-2mr}.
\ee  
We now consider the (random) energy set
\be
Y_S= \set{E\in I_k: \  \dist(E,\sigma (H^{S})) \le \e^{-mr} },
\ee
and claim that $D_S \subset Y_S$  on the complementary event $\cJ_S^c=  \set{\abs{J_S} \le  e^{-5mr}} $.

To see this, suppose  $\abs{J_S} \le  e^{-5mr}$ and $E\in D_S\setminus Y_S$.  Since $E\in D_S$, there exists $i\in S^\pr$ such that $F_E^S(i) >  e^{-mr}$.  Let $E^\pr \in I_k$ such that $\abs{E^\pr-E}\le 2 e^{-5mr}$.
Using $E \in Y_S$ we get $\dist(E^\pr, \sigma(H^S) >  e^{-mr}- 2 e^{-5mr}\ge \frac 1 2 e^{-mr}$. Thus, using the resolvent identity and $r\ge\frac {18} {\theta_k}$, we have
\be
F_{E^\pr}^S(i)\ge F_{E}^S(i)-\abs{E^\pr-E}\norm{R_E^S}\norm{R_{E^\pr}^S}> 
 e^{-mr}- (2e^{-5mr})  e^{mr}(2 e^{mr})\ge  e^{-2mr}.
\ee
It follows that    $[E- 2 e^{-5mr}, E+ 2 e^{-5mr}]\cap I_k \subset J_S$.
Since $\abs{I_k}\ge   2 e^{-5mr}$ as $r\ge\frac {18} {\theta_k}$ ,      we conclude that $\abs{J_S} \ge 2 e^{-5mr} > e^{-5mr}$,
a contradiction.  

We proved that  $\abs{J_S} \le  e^{-5mr} $ implies $ D_S \subset Y_S$, so $\what Y_S= I_k\setminus Y_S\subset  I_k\setminus D_S$.  In particular, outside the event $\cJ_S$, $E\in \what Y_S$ implies that
$H^S$ is $(m,E,r)$-regular.

We now consider the event  
\be
\cE_K&= \set{ I_k\setminus(\what Y_{{K^\pr,K^c}} \cup \what Y_{K^c})\ne \emp}= \set{ I_k \cap Y_{{K^\pr,K^c}}\cap Y_{K^c}\ne \emp}  \\
& \subset \set{ \dist \set{\sigma_{\what I_k}(H^{{{K^\pr,K^c}} }), \sigma_{\what I_k} (H^{K^c})}\le 2  e^{-mr}},
\ee
and note that it follows from Lemma \ref{Wegnerineq}(ii) that
\be
\P\set{\cE_K} \le C_k \abs{\Lambda}^{4k+1} e^{-mr}.
\ee

Since
\be
\P\set{ \cE_K \cup \cJ_{{K^\pr,K^c}} \cup \cJ_{K^c}} \le C_k \abs{\Lambda}^{4k+1} e^{-mr}+ 2C_k \abs{\La}^{\xi_k+1} e^{-2mr} \le  C\abs{\La}^{\xi^\pr_k} \e^{-m r},
\ee
and on the complementary event  we have  $I_k= \what Y_{{K^\pr,K^c}} \cup \what Y_{K^c}$, so for $E\in I_k$ either $H^{{K^\pr,K^c}}$ or $H^{K^c}$ is $(m,E,r)$-regular, the lemma is proved. 
\end{proof}

\begin{proof}[Proof of Corollary~\ref{thm:eigencor}]
Let  $A\subset B\subset \La$, $A$ connected in $\La$,  let $r=\rho^\La(A,B)$, and recall
$\norm{P_-^{A}f(H^\Lambda) P_+^{B}}\le \norm{P_-^{A}f(H^\Lambda) P_+^{[A]^\La_r}}$.

We set  
\be
\Theta^\La(A,r)& =\sup_{\substack{f\in B(I_{\le k}):\\\|f\|_\infty\le1}}\norm{P_-^{A}f(H^\Lambda) P_+^{[A]^\La_r}}\le 1.
\ee
To estimate $\E\set{ \Theta^\La (A,r)}$,  note that
\be\label{Thetasum}
 \Theta^{\La}(A,r) \le   \sum_{E\in \sigma_{I_k}(H^{\La})} \norm{{P_-^{A}} P_{\set{E}}P_+^{[A]^\La_r}}, \qtx{where}P_{\set{E}}= \chi_{\set{E}}(H^\La).
 \ee
The spectrum of $H^\La$ is simple almost surely, as commented in  \cite[Section 3]{EKS1},  so we assume this on what follows for simplicity.  (Otherwise we just  need to label the eigenvalues taking into account multiplicity.) For $E\in \sigma(H^\La)$ we let $\phi_E$ denote the corresponding eigenfunction, and let $N_E\in \N^0$ be given by $\cN_\La \phi_E= N_E \phi_E$.

For $E\in I_k$  we have 
\be
P_{\set{E}}&=\what  R^{\Lambda}_{k,E}\pa{\what  H^\Lambda_k-E}P_{\set{E}}=\what  R^{\Lambda}_{k,E}\pa{\what  H^\Lambda_k-H^\Lambda+(H^\Lambda -E)}P_{\set{E}}\\&=k\tfd\what  R^{\Lambda}_{k,E}\,Q_{\le k}^{\Lambda}P_{\set{E}}.
\ee

{  Let   $r \ge R_k =  6k(\cl{\frac {18} {\theta_k}}+2)$. }  Using \eq{eq:stidena} and \eq{CT}, we obtain
\be\label{secter5}
{\norm{{P_-^{A}}P_{\set{E}} P_+^{[A]^\La_r}}}&=k \tfd {\norm{{P_-^{A}}\what  R^{\Lambda}_{k,E}\,Q_{\le k}^{\Lambda}P_{\set{E}} P_+^{[A]^\La_r}}}\\
&{  =k \tfd {\norm{{P_-^{A}}\what  R^{\Lambda}_{k,E}P_-^{[A]_\infty^\La}\,Q_{\le k}^{\Lambda}P_{\set{E}} P_+^{[A]^\La_r}}}}
\\ &\le k { \sum_{q=-\abs{A}}^{\abs{\Lambda}} } {\norm{{P_-^{A}}\what  R^{\Lambda}_{k,E}P_+^{[A]^\La_q}}}{\norm{P_-^{]A[^\La_q}  Q_{\le k}^{\Lambda}P_{\set{E}}P_+^{[A]^\La_r}}} \\ 
&\le C_0{ \sum_{q=-\abs{A}}^{\abs{\Lambda}} }  \e^{-m_0(q)_+}{\norm{P_-^{]A[^\La_q}  Q_{\le k}^{\Lambda}P_{\set{E}}P_+^{[A]^\La_r}}}
 \\ 
&\le 2 C_0{ \sum_{q=-\abs{A}}^{r-1 - R_k}} \e^{-m_0(q)_+}  \sum_{u\in ]A[^\La_q}  {\norm{  Q_{\le k}^{\Lambda}\cN_uP_{\set{E}}P_+^{[u]^\La_{r-q-1}}}} + C_k\abs{\La} \e^{-m_0 r}.
\ee

{ Let     $u\in \La$ and $p\ge  R_k$.}     If $8kN_E \ge p$, it follows from   \eq{PBkB'}-\eq{PVP'HN} that  
 \be\label{p<8kN}
 &\norm{ \chi_{N_E}^\La  Q_{\le k}^{\Lambda}\cN_uP_{\set{E}}P_+^{[u]^\La_{p}}}\le \chi_{{{\mathcal B_k^{N_E}}}},\\
&\P_\La \pa{{\mathcal B_k^{N_E}}} 
\le C_k \abs{\Lambda}^{2k}e^{- c_\mu {N_E}}\le  C_k \abs{\Lambda}^{2k}\e^{- \frac {c_\mu}{8k}p}.
\ee

If $p> 8k{N_E}$, we set (cf.  \eq{eq:Kse})
\be\label{eq:Kse99} 
K(0)&= [u]^\La _{\frac {3p}4}   \qtx{and} K(a)=[ u]^\La_{a  \fl{\frac p {6k}}} \mqtx{for}a=1,2,\ldots, 3k-1,\\
S(a)& =[\partial^\La K(a)]^\La_{\fl{\frac p {6k}}-1}  \qtx{for} a=0,1,\ldots, 3k-1.
\ee
Using  Lemma \ref{lem:partM}, we get
 \be\label{dynloc1}
 {\Big\|{ \chi^\La_{N_E} Q_{\le k}^{\Lambda}\cN_uP_{\set{E}}P_+^{[u]^\La_{p}}}\Big\|}  \le 
   \sum_{a=0}^{3k-1}  {\Big\| \chi_{N_E}^\La Q_{\le k}^{\Lambda}\cN_u Y(a) P_{\set{E}}P_+^{[u]^\La_{p}}\Big\|},
\ee 
 where  $Y(0)=P_+^{  \La  \setminus  [u]^\La _{\frac p2} } $ and $Y(a)=P_+^{S(a)}P_-^{K(a)}P_-^{K^c(a)}$ for $a>0$.

 We now consider the event (see \eq{MSAdef})
 \be\label{Jup}
&  \cJ_k (u,p) =\bigcap_{a=0}^{3k-1} \cF_{k}^\La (K(a), {\what  {\theta_k}} ,\what  p),     \sqtx{where} \what  {\theta_k}=\tfrac {{\theta_k}} 9\sqtx{and}\what  p= \fl{\tfrac p {6k}}-1\ge  \tfrac {18} {{\theta_k}},
 \ee
 and note that it follows from Lemma \ref{lemMSAprob} that
 \be\label{PJupc}
 \P\set{ \pa{\cJ_k (u,p)}^c}\le 3k C\abs{\La}^{\xi^\pr_k} \e^{-\what  {\theta_k}\what  p}.
 \ee
For  $\omega \in  \cJ_k (u,p) $ and  $a\in \set{0,1,\ldots, 3k-1}$, either $H^{\pa{K(a), K^c(a)} }$  or $H^{{K^c}(a)}$ is $({\what  {\theta_k}} ,E,\what  p)$-regular (${K^c}(a)=\pa{K(a)}^c$). 
 If $H^{{K^c}(a)}$ is $({\what  {\theta_k}} ,E,\what  p)$-regular,
 we note that  
\be
P_{\set{E}}P_+^{[u]^\La_{p}} &=P_{\set{E}}\pa{H^{K(a)}+H^{{K^c}(a)}-E}  R^{{K^c}(a)}_EP_+^{[u]^\La_{p}}\\
&  =    -P_{\set{E}}\Gamma^{\pa{K(a),K^c(a)}}P_-^{\partial^\La_{ex} K(a)}P_+^{K(a)} R^{{K^c}(a)}_EP_+^{[u]^\La_{p}}  ,
\ee
where we have used $R^{{K^c}(a)}_EP_+^{[u]^\La_{p}}=P_+^{K(a)}R^{{K^c}(a)}_EP_+^{[u]^\La_{p}}$ due to $K(a)\subset [u]^\La_{p}$.
We deduce that
\be\label{dynloc2}
 {\norm{ \chi_{N_E}^\La Q_{\le k}^{\Lambda}\cN_u Y(a) P_{\set{E}}P_+^{[u]^\La_{p}}}}\le 
\norm{P_{\set{E}}P_+^{[u]^\La_{p}}}\le \tfrac 1 \Delta \norm{P_-^{\partial_{ex}^\La K(a)} R^{{K^c}(a)}_EP_+^{[u]^\La_{p}\cap ({K^c}(a))}}\le  \tfrac {2\e^{-\what  {\theta_k}\what  p}} \Delta ,
 \ee
using \eqref{hPN},  \eqref{defregular}, and the definition of $K(a)$.
If  $H^{\pa{K(a),K^c(a)}}$  is $({\what  {\theta_k}} ,E,\what  p)$-regular, we use
\be\label{eq:expG5}
\cN_{u}P_{\set{E}}P_+^{[u]^\La_{p}}  &=\cN_{u}R^{\pa{K(a),K^c(a)}}_E\pa{H^{\pa{K(a),K^c(a)}}-E}P_{\set{E}}P_+^{[u]^\La_{p}} \\ &=-\cN_{u}R^{\pa{K(a),K^c(a)}}_E P_-^{\partial^\La K(a)}\Gamma^{\pa{K(a),K^c(a)}}P_{\set{E}}P_+^{[u]^\La_{p}}.
\ee
Thus
\be\label{dynloc3}
{\norm{ \chi_{N_E}^\La Q_{\le k}^{\Lambda}\cN_u Y(a) P_{\set{E}}P_+^{[u]^\La_{p}}}}&\le \norm{\cN_{u} Y(a)P_{\set{E}}P_+^{[u]^\La_{p}}}\le \tfrac 1 \Delta \norm{\cN_{u} Y(a)R^{\pa{K(a),K^c(a)}}_E P_-^{\partial^\La K(a)}}\\
 &\le  \tfrac 1 \Delta \norm{ P_+^{S(a)}R^{\pa{K(a),K^c(a)}}_E P_-^{\partial^\La K(a)}}\le 
 \tfrac 2 \Delta  \e^{-\what  {\theta_k}\what  p},
\ee
using \eqref{hPN},  \eqref{defregular}, and   the definition of $S(a)$.

Combining \eq{dynloc1},  \eq{dynloc2} and  \eq{dynloc3},    we conclude that for  $p>8kN_E$ and $\omega \in  \cJ_k (u,p) $ we have
\be
 {\norm{ \chi^\La_{N_E} Q_{\le k}^{\Lambda}\cN_uP_{\set{E}}P_+^{[u]^\La_{p}}}} \le  \tfrac {12k} \Delta  \e^{-\what  {\theta_k}\what  p}.
\ee
Since  $ {\norm{ \chi^\La_{N_E} Q_{\le k}^{\Lambda}\cN_uP_{\set{E}}P_+^{[u]^\La_{p}}}} \le 1$, it follows that for 
 $p>8kN_E $ we have
 \be\label{p>8kN}
  {\norm{ \chi^\La_{N_E} Q_{\le k}^{\Lambda}\cN_uP_{\set{E}}P_+^{[u]^\La_{p}}}} \le  \tfrac {12k} \Delta  \e^{-\what  {\theta_k}\what  p} + \chi_{ {\cJ_k (u,p)}^c}.
 \ee

It follows that for  $u\in \La$  { and  $p\ge  R_k$}, using  \eq{p<8kN} , \eq{p>8kN}, and \eq{trkH},  we conclude that 
\be
\E \Big({ \sum_{E\in \sigma_{I_k}(H^{\La})}\Big\|{ Q_{\le k}^{\Lambda}\cN_uP_{\set{E}}P_+^{[u]^\La_{p}}}  }\Big\|\Big)
\le C_k \abs{\La}^{\xi^{\pr}_k}\e^{-  \theta^\pr_k  p}.
\ee
Combining with
\eq{Thetasum},  \eq{secter5}, \eq{trkH},  we obtain
\be\label{EThetark}
\E\set{ \Theta^\La (A,r)}\le C_k   \abs{\La}^{  \xi_k^\pr} \e^{-   {\theta_k^\pr} r}.
\ee
The estimate \eq{EThetark} holds for $r\ge R_k$.  Since $\E\set{ \Theta^\La (A,r)}\le 1$ for all $r\ge 0$,  it holds for all $r\ge 0$ if the constant $C_k$ is replaced by the constant  $\wtilde C_k= C_k \e^{  {\theta_k^\pr} \what R_k}$.
\end{proof}
\appendix

\section{Useful identities}\label{appA}

 In  this appendix we list some useful identities. Their derivations are straightforward,  so we leave out the proofs.

   We fix $\La \subset \Z$ finite,
   
  \begin{itemize}
\item 
 For  all
 $i,j\in \La$ we have (recall \eq{P+-S})
\be\label{P-ij}
P_-^{\set{i}}&=\cN_i, \\
P_-^{\set{i,j}}&=\cN_i +\cN_j -\cN_i \cN_j=  \cN_i\pa{1-\cN_j} + \cN_j= P_+\up{\set{j}}\cN_i+ \cN_j.
 \ee

\item 
Consider the self-adjoint operator ${h}_{i,i+1}$ (recall \eqref{tildeh}) on the four-dimensional Hilbert space  $\cH_{\set{i,i+1}}=\C_i^2\otimes \C_{i+1}^2$.
An explicit calculation shows  that ${h}_{i,i+1}$ has eigenvalues $-1,0, \pm \frac 1 \Delta$.
It follows that if
$\set{i,i+1}\subset \La $ we have
 \beq\label{nth}
\norm{{h}_{i,i+1}}=1   \qtx{on} \cH_\La.
\eeq

\item The following  identities hold on $\cH_\La$ for $\set{i,i+1}\subset \La$:
\be\label{hPN}
&{h}_{i,i+1}P_+^{\set{i,i+1}}=P_+^{\set{i,i+1}}{h}_{i,i+1}=0,\\ 
&\norm{P_+^{\set{i}}{h}_{i,i+1}}=\norm{P_+^{\set{i+1}}{h}_{i,i+1}}={ \tfrac 1{2\Delta}},\\ 
&P_+\up{\set{i}}{h}_{i,i+1}P_+\up{\set{i}}=P_+\up{\set{i+1}}{h}_{i,i+1}P_+\up{\set{i+1}}=0,\\ 
&{h}_{i,i+1} \cN_i\cN_{i+1} = \cN_i\cN_{i+1} {h}_{i,i+1}= \cN_i\cN_{i+1}{h}_{i,i+1} \cN_i\cN_{i+1} .
\ee
  In particular,   the first identity above  implies
\beq\label{hPN9}
{h}_{i,i+1}= {h}_{i,i+1}P_-^{\set{i,i+1}}=P_-^{\set{i,i+1}}{h}_{i,i+1}=P_-^{\set{i,i+1}}{h}_{i,i+1}P_-^{\set{i,i+1}}.
\eeq

\item  Let $K\subset \La$, and recall \eq{eq:Gamma}. It follows from \eqref{hPN9} that
\beq\label{hPN98}
 \Gamma^K= P_-^{ \partial^\La K} \Gamma^K P_-^{ \partial^\La K}.
\eeq
If $K$ is connected in $\La$, it follows from \eqref{hPN98} that 
\beq\label{PGamma1}
 \norm{P_+^{K} \Gamma^K}\le \tfrac  1 \Delta  \qtx{and}   \norm{P_+^{K^c} \Gamma^K}\le \tfrac  1 \Delta.
 \eeq

\item  The  following identities hold for any non-empty $M\subset\La$ (recall \eq{P+-empty}):
 \be\label{eq:stidena}
 P_-^{[M]_\infty} P_+^M  &  =
   \sum_{q=0}^{\abs{\Lambda}} P_+^{[M]^\La_{q}}P_-^{]M[^\La_q}= \sum_{q=0}^{\abs{\Lambda}} P_+^{[M]^\La_{q}} P_-^{{\partial_{ex}^\La{[M]_{q}}}} ,\\
   P_-^{ M}&= \sum_{q=-\abs{M}}^{-1}P_+^{[M]^\La_{q}} P_-^{]M[^\La_q}=\sum_{q=-\abs{M}}^{-1}P_+^{[M]^\La_{q}} P_-^{{\partial_{in }^\La{[M]_{q+1}}}}  ,
 \\P_-^{[M]^\La_\infty} &= \sum_{q=-\abs{M}}^{\abs{\La}}P_+^{[M]^\La_{q}} P_-^{]M[^\La_{q}}. 
\ee

\end{itemize}

\section{Many-body quasi-locality} \label{app:quasil}
In this appendix we prove \eqref{eq:locmany1}.  Recall we only consider finite subsets of $\Z$.
We fix $\La \subset \Z$ and consider the Hilbert space $\cH_\La$.

\begin{lemma} \label{lem:F}
Suppose that  $H\in\mathcal A_{\La}$  satisfies
\begin{enumerate}
\item  For all   $K \subset {\La} $   we have   $[P_-^{K},H]P_+^{[K]_1^{\La}}=0$.

 \item For all  connected  $K \subset {\La} $  we have  $\norm{[P_-^{K},H]}\le  \gamma  $. 

 \end{enumerate} 
Then for all $A\subset B\subset {\La}$,  $A$  connected in ${\La}$,   we have
 \be \label{eq:loca1}
\norm{P_-^{A}\,e^{itH}\,P_+^{B}}\le  \gamma^{r}\frac{  \abs{t}^r}{r!}, \qtx{where} r= \dist_{\La} \pa{A,B^c}\ge 1.
\ee 
 \end{lemma}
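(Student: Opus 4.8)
\emph{Approach.} The plan is to prove \eqref{eq:loca1} by induction on $r=\dist_\La(A,B^c)\ge 1$, using a Duhamel (Heisenberg‑picture) identity in place of the resolvent identity that drives the proof of Lemma~\ref{lem:locdet'}. The two structural inputs are the exact analogues of those used there. First, hypothesis~(i) lets us peel off $P_-^A$ one layer at a time: since $[P_-^A,H]P_+^{[A]^\La_1}=0$, we have $[P_-^A,H]=[P_-^A,H]\,P_-^{[A]^\La_1}$. Second, the elementary identity $P_-^AP_+^B=0$ whenever $\emptyset\ne A\subset B$ (indeed $\cN_A\le\cN_B$, so $P_+^B=\chi_{\set{0}}(\cN_B)\le\chi_{\set{0}}(\cN_A)=P_+^A$, hence $P_-^AP_+^B=(\mathds{1}-P_+^A)P_+^B=0$) makes the ``free'' term of the Duhamel expansion vanish, leaving only the commutator term, which is what produces a factor $\gamma$ together with one time integration.

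\emph{The Duhamel identity.} For every $K\subset\La$ and $t\in\R$ I would first record
\[
P_-^{K}\,\e^{itH}=\e^{itH}P_-^{K}+i\int_0^t \e^{i(t-s)H}\,[P_-^{K},H]\,\e^{isH}\,\d s,
\]
which holds because both sides, viewed as $\cA_\La$‑valued functions of $t$, solve the linear ODE $X'(t)=iHX(t)+i[P_-^{K},H]\e^{itH}$ with $X(0)=P_-^{K}$ (differentiate $P_-^K\e^{itH}$ and use $P_-^K(iH)=iHP_-^K+i[P_-^K,H]$), and therefore agree by uniqueness. Throughout we use $\norm{\e^{itH}}=1$, which holds since $H$ is self‑adjoint.

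\emph{The induction.} For $r=1$: since $A\subset B$ the free term $\e^{itH}P_-^AP_+^B$ vanishes, hence $P_-^A\e^{itH}P_+^B=i\int_0^t\e^{i(t-s)H}[P_-^A,H]\e^{isH}P_+^B\,\d s$, and $\norm{[P_-^A,H]}\le\gamma$ from~(ii) (valid since $A$ is connected) gives $\norm{P_-^A\e^{itH}P_+^B}\le\gamma\abs{t}$. Assume the bound holds for $r$ and let $\dist_\La(A,B^c)\ge r+1$. The free term again vanishes, and inserting $[P_-^A,H]=[P_-^A,H]P_-^{[A]^\La_1}$ yields
\[
P_-^A\e^{itH}P_+^B=i\int_0^t\e^{i(t-s)H}\,[P_-^A,H]\,\bigl(P_-^{[A]^\La_1}\e^{isH}P_+^B\bigr)\,\d s.
\]
Now $[A]^\La_1$ is connected (since $A$ is), $[A]^\La_1\subset B$ (because $\dist_\La(A,B^c)\ge 2$), and $\dist_\La([A]^\La_1,B^c)\ge\dist_\La(A,B^c)-1\ge r$ (from $\dist_\La(y,A)\le\dist_\La(y,x)+1$ for $x\in[A]^\La_1$), so the induction hypothesis applies to the inner factor: $\norm{P_-^{[A]^\La_1}\e^{isH}P_+^B}\le\gamma^{r}\abs{s}^{r}/r!$. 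Combining with $\norm{[P_-^A,H]}\le\gamma$ and $\norm{\e^{i(t-s)H}}=1$,
\[
\norm{P_-^A\e^{itH}P_+^B}\le\int_0^{\abs{t}}\gamma\cdot\gamma^{r}\frac{s^{r}}{r!}\,\d s=\gamma^{r+1}\frac{\abs{t}^{r+1}}{(r+1)!},
\]
which closes the induction and proves \eqref{eq:loca1}.

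\emph{Expected difficulty.} There is no genuine obstacle: the proof is a direct iteration of a single Duhamel step. The only points needing attention are that this step close the recursion with an inner factor of exactly the same shape $P_-^{[A]^\La_1}\e^{isH}P_+^B$ — which is precisely what inputs~(i) and $P_-^AP_+^B=0$ provide — together with the bookkeeping $\dist_\La([A]^\La_1,B^c)\ge\dist_\La(A,B^c)-1$ and the connectedness of $[A]^\La_1$, both already used in the proof of Lemma~\ref{lem:locdet'}. The one thing worth flagging is that it is the \emph{nested} time integration (equivalently, an $r$‑fold iterated Duhamel integral over the simplex $0\le s_r\le\dots\le s_1\le t$, of volume $\abs{t}^r/r!$) that supplies the factorial in \eqref{eq:loca1}, and it is this factorial that makes the estimate summable when one later integrates against the Fourier transform of $f$ to pass from $\e^{itH}$ to general smooth $f(H)$.
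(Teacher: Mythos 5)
Your proof is correct and follows essentially the same route as the paper: a Duhamel expansion whose free term vanishes because $P_-^AP_+^B=0$, with hypothesis (i) used to insert $P_-^{[A]^\La_1}$ and hypothesis (ii) to extract a factor $\gamma$ per layer, the factorial coming from the nested time integrations. The only difference is organizational — you run a one-step Duhamel identity inside an induction on $r$, whereas the paper iterates the identity $r$ times explicitly to produce the simplex integral $\int_0^t\int_0^{s_1}\cdots\int_0^{s_{r-1}}$ — and all the bookkeeping you flag (connectedness of $[A]^\La_1$, $[A]^\La_1\subset B$, and the distance drop by one) is handled correctly.
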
 
\begin{proof}

We  note that $[ A]^{\La}_{s} \subset B$ for $s=0,1,\ldots,r-1$.   We have
\be 
P_-^{ A}\,e^{itH}\,P_+^{ B} =ie^{itH}\int_0^tK(s)\,P_+^{B}ds, 
\ee
where $K(s)=e^{-isH}\,[P_-^{  A},H]\,e^{isH}$.  If $r\ge 2$,   condition (i) of  the Lemma yields $K(s)=e^{-isH}\,[P_-^{  A},H] P_-^{ [A]^{\La}_1}\,e^{isH}$. Proceeding recursively, we get
\be
P_-^{ A}\,e^{itH}\,P_+^{ B}&=i^{r}\int_0^t\int_0^{s_1}\ldots\int_0^{s_{r-1}}\prod_{j=1}^{r} K_{j-1}(s_j)ds_j \,P_+^{B},\\ K_j(s)&=e^{-isH}\,[P_-^{  [A]_j},H] \,e^{isH}.
\ee
Using assumption  (ii), we get 
\be \label{PATt'}
\norm{P_-^{ A}\,e^{itH}\,P_+^{ B}}\le   \gamma^{r}\frac{  \abs{t}^r}{r!} .
\ee
\end{proof}

\begin{lemma}
Let $f\in C_0^n$, i.e., $f$ is compactly supported and $n$ times differentiable function on $\R$ (with $n\ge2$). Then for $A,B,H$  as in  Lemma~\ref{lem:F}  and $r=\dist_\La \pa{A,B^c}$, we have 
\be\label{eq:Fbn}
\norm{P_-^A\ f(H)\, P_+^B}\le \wtilde C(f,n)r^{-(n-1)\min(1,\frac r n)} \le \wtilde C(f,n)r^{-n}.
\ee
\end{lemma}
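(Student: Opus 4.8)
The plan is to represent $f(H)$ through the Fourier transform and integrate the unitary‑group bound of Lemma~\ref{lem:F} against $\what f$, cutting the $t$-integral at the scale where the two available estimates on $\norm{P_-^A\e^{itH}P_+^B}$ cross over. Throughout, set $r=\dist_\La(A,B^c)$; we may assume $r\ge1$, the case $r=0$ being trivial since $\norm{P_-^A f(H)P_+^B}\le\norm{f}_\infty$.

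First I would record the decay of $\what f(\xi):=\int_\R f(t)\,\e^{-i\xi t}\,dt$. Since $f\in C_0^n$ with $n\ge2$, integrating by parts $n$ times (no boundary terms, by compact support) gives $\what f(\xi)=(i\xi)^{-n}\what{f^{(n)}}(\xi)$, whence $\abs{\what f(\xi)}\le\norm{f^{(n)}}_{\L^1}\,\abs{\xi}^{-n}$; together with $\abs{\what f(\xi)}\le\norm{f}_{\L^1}$ this yields $\abs{\what f(\xi)}\le C(f,n)(1+\abs{\xi})^{-n}$, and in particular $\what f\in\L^1(\R)$ because $n\ge2$. By Fourier inversion and the spectral theorem (so that $\e^{itH}$ is unitary), $f(H)=\tfrac1{2\pi}\int_\R\what f(t)\,\e^{itH}\,dt$ as a norm‑convergent Bochner integral of the unitaries $\e^{itH}$, hence
\beq\label{prop:FH}
\norm{P_-^A f(H)P_+^B}\le\tfrac1{2\pi}\int_\R\abs{\what f(t)}\,\norm{P_-^A\e^{itH}P_+^B}\,dt .
\eeq
Then I would insert the two bounds $\norm{P_-^A\e^{itH}P_+^B}\le1$ (unitarity) and $\norm{P_-^A\e^{itH}P_+^B}\le\gamma^r\abs{t}^r/r!$ (Lemma~\ref{lem:F}) and split \eqref{prop:FH} at $T:=\tfrac{r}{2\e\gamma}$. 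On $\set{\abs t\ge T}$ I use the trivial bound and the decay of $\what f$: $\int_{\abs t\ge T}\abs{\what f(t)}\,dt\le C(f,n)\int_{\abs t\ge T}(1+\abs t)^{-n}\,dt\le\tfrac{2C(f,n)}{n-1}\,T^{-(n-1)}=C(f,n,\gamma)\,r^{-(n-1)}$. On $\set{\abs t<T}$ I use Lemma~\ref{lem:F} together with $r!\ge(r/\e)^r$, which gives $\gamma^r\abs t^r/r!\le(\e\gamma\abs t/r)^r\le(\e\gamma T/r)^r=2^{-r}$, so this contribution is $\le2^{-r}\norm{\what f}_{\L^1}\le C(f,n)\,2^{-r}$. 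Since $2^{-r}\le C'(n)\,r^{-(n-1)}$ for every $r\ge1$, combining the two pieces in \eqref{prop:FH} yields $\norm{P_-^A f(H)P_+^B}\le\wtilde C(f,n)\,r^{-(n-1)}$ (with $\wtilde C$ depending also on $\gamma$); and since $\min(1,\tfrac rn)\le1$ and $r\ge1$, this is $\le\wtilde C(f,n)\,r^{-(n-1)\min(1,\frac rn)}$, which gives \eqref{eq:Fbn}.

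The only delicate point — the main obstacle — is the choice of the cutoff $T$: one cannot integrate Lemma~\ref{lem:F}'s estimate over all of $\R$ because $\gamma^r\abs t^r/r!$ grows in $\abs t$, so the cut must sit at the crossover scale $\abs t\asymp r/\gamma$, and the Stirling bound $r!\ge(r/\e)^r$ is exactly what shows that just below this scale Lemma~\ref{lem:F} already produces the exponentially small factor $2^{-r}$. Consequently the genuine decay in $r$ comes entirely, and with no loss, from the Fourier tail $\int_{\abs t>T}(1+\abs t)^{-n}\,dt\asymp r^{-(n-1)}$, i.e. from the $C^n$‑regularity of $f$. The weaker exponent $(n-1)\min(1,r/n)$ in \eqref{eq:Fbn} is what survives for small $r$, where honest polynomial decay is unavailable and the bound degenerates to control by a constant; to get that refined form directly one instead keeps the factor $(1+\abs t)^{-n}$ inside the $\set{\abs t<T}$-integral and optimizes $T$ jointly in $r$ and $n$, which reaches the same conclusion.
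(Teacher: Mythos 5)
Your argument is correct and follows essentially the same route as the paper's proof: Fourier inversion of $f$, splitting the $t$-integral at a scale of order $r/\gamma$, applying Lemma~\ref{lem:F} together with $r!\ge(r/\e)^r$ on the inner piece and the $C^n$ decay of $\what f$ on the tail. Your simpler cutoff (linear in $r$, with the trivial unitary bound on the tail) in fact yields the clean estimate $\wtilde C(f,n,\gamma)\,r^{-(n-1)}$ for all $r\ge1$, which implies the first inequality in \eqref{eq:Fbn} since $r^{-a}$ is nonincreasing in $a$ for $r\ge 1$; the paper instead optimizes the cutoff $R$ jointly in $r$ and $n$ and keeps the $\langle t\rangle^{-n}$ decay on both pieces, which is what produces the $\min(1,r/n)$ exponent, but the two bounds agree up to an $(n,\gamma)$-dependent constant.
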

\begin{proof}
Let $\hat f$ denote the Fourier transform of $f$, then we have $\abs{\hat f(t)}\le C(f,n)\langle t\rangle^{-n}$ for $t\in\R$ (we recall that $\langle t\rangle:=\sqrt{1+t^2}$). We can bound
\be\label{eq:2int}
\norm{P_-^A\ f(H)\, P_+^B}\le \int_{\mathcal R}\norm{P_-^A\ e^{itH}\, P_+^B}\abs{\hat f(t)}dt+\int_{\mathcal R^c}\abs{\hat f(t)}dt,
\ee
where $\mathcal R:=[-R,R]$, where $R>0$ will be chosen later. 

We can bound the first integral on the right hand side of \eqref{eq:2int} using \eqref{eq:loca1} as
\be
\int_{\mathcal R}\norm{P_-^A\ e^{itH}\, P_+^B}\abs{\hat f(t)}dt  &  \le C(f,n)\frac{\gamma^{r}}{r!}\int_{\mathcal R} \abs{t}^r\langle t\rangle^{-n}dt\le  C_n C(f,n)\frac{\gamma^{r}{R}^{1+(r-n)_+}}{r!}\\
 & \le C'_n C(f,n)\pa{\frac {\e \gamma }r}^r{R}^{1+(r-n)_+},
\ee
where we used $r!\ge e^{1-r}r^r$.

 On the other hand, we can bound the second integral in \eqref{eq:2int} as 
\be
\int_{\mathcal R^c}\abs{\hat f(t)}dt\le C(f,n)\int_{\mathcal R^c} \langle t\rangle^{-n}dt\le  C_n C(f,n)\pa{1+R}^{1-n}\le C_n C(f,n){R}^{1-n}.
\ee
Choosing $R=\pa{\frac r{e\gamma}}^{\frac{r}{n+ (r-n)_+}}$ , we get  \eqref{eq:Fbn}.
\end{proof}

\printbibliography

\end{document}